\documentclass{article}
\usepackage{amsmath,amssymb,amsthm}
\usepackage{hyperref}

\newtheorem{theorem}{Theorem}
\newtheorem{definition}{Definition}
\newtheorem{proposition}{Proposition}
\newtheorem{lemma}{Lemma}

\newcommand{\F}{\mathbb{F}_2}

\newcommand{\I}{\mathbb{I}}

\newcommand{\twobytwo}[4]{\begin{pmatrix} #1 & #2 \\ #3 & #4\end{pmatrix}}
\newcommand{\twobyone}[2]{\begin{pmatrix} #1 \\ #2\end{pmatrix}}

\newcommand{\A}{\mathbf{A}}
\newcommand{\B}{\mathbf{B}}
\newcommand{\E}{\mathbf{E}}
\newcommand{\Bp}{\mathbf{B^p}}
\newcommand{\Ap}{\mathbf{A^p}}
\newcommand{\Bu}{\mathbf{B^u}}
\newcommand{\Au}{\mathbf{A^u}}
\newcommand{\Bb}{\mathbf{B^\beta}}
\newcommand{\Ab}{\mathbf{A^\beta}}

\newcommand{\Gc}{\mathcal{G}^\beta}
\newcommand{\Ac}{\mathcal{A}^\beta}
\newcommand{\Bc}{\mathcal{B}^\beta}
\newcommand{\Gcb}{\bar{\mathcal{G}}^\beta}
\newcommand{\Acb}{\bar{\mathcal{A}}^\beta}
\newcommand{\Bcb}{\bar{\mathcal{B}}^\beta}

\newcommand{\Acu}{\underline{\mathcal{A}}}
\newcommand{\Bcu}{\underline{\mathcal{B}}}
\newcommand{\Acp}{\mathcal{A}^{o.p.}}
\newcommand{\Bcp}{\mathcal{B}^{o.p.}}
\newcommand{\X}[1]{\hat{X}^{(#1)}}
\newcommand{\W}[1]{\hat{W}^{(#1)}}

\newcommand{\dax}{d_A}
\newcommand{\dbx}{d_B}

\newcommand{\subm}[3]{{#1}_{#2,#3}}
\newcommand{\esubm}[3]{\tilde{#1}_{#2,#3}}
\newcommand{\submi}[3]{{#1}_{#2,#3}^{-1}}
\newcommand{\esubmi}[3]{\tilde{#1}_{#2,#3}^{-1}}
\newcommand{\submt}[3]{{#1}_{#2,#3}^\top}
\newcommand{\esubmt}[3]{\tilde{#1}_{#2,#3}^\top}
\newcommand{\submti}[3]{{#1}_{#2,#3}^{-\top}}
\newcommand{\esubmti}[3]{\tilde{#1}_{#2,#3}^{-\top}}

\newcommand{\rs}{\mathbf{RS}}
\newcommand{\cs}{\mathbf{CS}}

\begin{document}
\title{The Find Rows and Columns and Decode algorithm for quantum expander codes}
\author{Dimiter Ostrev\thanks{Interdisciplinary Centre for Security, Reliability and Trust, University of Luxembourg, L-4364 Esch-sur-Alzette, Luxembourg}}
\date{}
\maketitle

\begin{abstract}
A new adaptation of the Find Erasures and Decode algorithm from classical to quantum expander codes is presented. It runs in linear time and is parallelizable to logarithmic depth. Compared to Small Set Flip and Small Set Find, the new algorithm avoids the overhead of considering the subsets of stabilizer generators, requires less expansion and corrects more errors. 
\end{abstract}

\section{Introduction}

A lossless expander is a sparse bipartite graph in which all small sets of vertices have almost as many neighbors as edges. Such graphs can be used to construct classical linear codes that can be decoded in linear time or by a circuit with linear width and logarithmic depth \cite{sipser1996expander}. The decoder, often called the Flip Algorithm, searches for variables that can be flipped to decrease the number of unsatisfied constraints. An alternative approach, called Find Erasures and Decode, first searches for a small superset of the error positions, then uses this to correct the error \cite{viderman2013lineartime}. Find Erasures and Decode has similar time complexity but requires less expansion and corrects more errors than the Flip Algorithm. A third approach combines the first two with some additional techniques \cite{chen2023improveddecoding}. If the graph is a good expander, then the combined approach corrects more errors than Find Erasures and Decode alone; this, however, comes at the cost of applying the first two algorithms repeatedly until the correct value of some parameters is guessed. 

Expander graphs have also been used to construct quantum LDPC codes: Calderbank-Shor-Steane codes \cite{calderbank1996good,steane1996multiple} with sparse parity check matrices. Quantum expander codes \cite{leverrier2015quantum} are obtained from the hypergraph product \cite{tillich2014quantum} of a two-sided lossless expander with itself. Decoding these codes presents a new challenge: there are small errors on which the classical Flip Algorithm cannot make progress. A way around this is to consider flipping more than one qubit at a time. This leads to the Small Set Flip algorithm, which can correct adversarial errors up to a constant fraction of the code distance \cite{leverrier2015quantum} or random errors occurring at a constant rate \cite{fawzi2018efficientdecoding}. Small Set Flip is also parallelizable \cite{fawzi2018constant}. 

The distance of hypergraph product codes scales with the square root of the blocklength \cite{tillich2014quantum}. Subsequently, lifted product codes \cite{panteleev2021asymptotically} and quantum Tanner codes \cite{leverrier2022quantum} achieved distance growing linearly with the blocklength, and the Small Set Flip algorithm was adapted to these quantum LDPC families \cite{dinur2023goodquantum,gu2023efficient,leverrier2023efficient,leverrier2023decoding}. 

Lifted product codes and quantum Tanner codes use spectral expanders in their construction. Asymptotically good quantum LDPC codes can also be constructed from the balanced product of lossless expanders and the Small Set Flip algorithm can correct a constant fraction of adversarial errors \cite{lin2022good}. Lossless expanders with all the properties required for this construction were not known to exist at first, but were later discovered \cite{hsieh2025explicitlosslessvertexexpanders}. 

The classical Find Erasures and Decode algorithm has also been adapted to quantum expander codes; the resulting decoder is called Small Set Find plus Erasure Decoding \cite{krishna2024vidermansalgorithmforquantum}. Unlike the classical case, the Erasure Decoding stage of this algorithm takes time $O(n^{3/2})$ for blocklength $n$. A different approach, called ReShape \cite{quintavalle2022reshape}, is to reduce the decoding of quantum expander codes to the corresponding problem for the underlying classical expander codes and then to apply the Find Erasures and Decode algorithm. However, the time complexity of this reduction scales quadratically with the blocklength. 

This article introduces a new adaptation of the classical Find Erasures and Decode algorithm to quantum expander codes. The new approach, called Find Rows and Columns and Decode, is faster, requires less expansion and corrects more errors than previous algorithms. In addition, the analysis extends naturally to irregular graphs. 

The Find Rows and Columns and Decode algorithm runs in linear time and is parallelizable to logarithmic depth. It is faster than the ReShape reduction by a factor linear in the blocklength and faster than the second stage of Small Set Find plus Erasure Decoding by a factor that scales as the square root of the blocklength. Compared to Small Set Flip and Small Set Find, the Find Rows and Columns and Decode algorithm is faster by an $\Omega(2^d)$ factor, where $d$ is the maximum degree of the graph; this is because the subsets of stabilizer generators are not used. 

Next, consider the expansion requirements and decoding radius of the different algorithms. The ReShape reduction combined with the classical Find Erasures and Decode algorithm achieves the same results as Find Rows and Columns and Decode, but the time complexity of ReShape is quadratic rather than linear. The comparison of Find Rows and Columns and Decode with Small Set Flip and Small Set Find is given in detail below. 

The level of expansion of a graph can be quantified by a parameter $\delta$: small sets are guaranteed to have a number of neighbors more than a $(1-\delta)$ fraction of the number of their edges. In this parametrization, smaller $\delta$ corresponds to more expansion. The analysis of Find Rows and Columns and Decode requires lossless expansion parameter $\delta<1/3$. One version of Small Set Flip needs $\delta <1/6$ \cite{leverrier2015quantum} and another version $\delta <1/8$ \cite{fawzi2018efficientdecoding,fawzi2018constant}. Small Set Find requires $\delta<1/10$. The smaller the lossless expansion parameter, the smaller the relative weight of sets for which this level of expansion can be guaranteed \cite[Section 8.4]{richardson2008modern}. Moreover, lossless expanders with very small $\delta$ must have correspondingly very high degrees of at least $1/\delta$ \cite[Fact 12]{chen2023improveddecoding}, which should be avoided if possible in the context of quantum LDPC codes. 

The Find Rows and Columns and Decode algorithm corrects more errors than Small Set Flip and Small Set Find plus Erasure Decoding. Consider the example in \cite[Section 1.1.1]{krishna2024vidermansalgorithmforquantum} of a regular two-sided expander with left to right degree ratio $1/2$, with lossless expansion parameter $\delta=1/20$ and with $s$ the size of sets up to which this level of expansion can be guaranteed. Then, Small Set Find plus Erasure Decoding corrects $0.062 s$ errors, Small Set Flip corrects $0.058 s$ errors, and Find Rows and Columns and Decode corrects $0.944 s$ errors, more than an order of magnitude higher. 

The results here answer affirmatively some questions in the quantum LDPC literature. Reference \cite[Section 1.3]{krishna2024vidermansalgorithmforquantum} asked if it is possible to have a quantum version of the Find Erasures and Decode algorithm with improved parameters. This is achieved here. Reference \cite[Section 4]{gu2023efficient} asked if it is possible to avoid the overhead of considering the subsets of stabilizer generators and if it is possible to reduce the maximum degrees of expander graphs used in quantum LDPC constructions. This article shows that in the case of quantum expander codes, the overhead can be avoided and the degrees can be lowered.

After some preliminaries in Section \ref{sec:preliminaries}, the Find Rows and Columns and Decode algorithm and its analysis are presented in Section \ref{sec:find_rows_and_columns_and_decode}. Section \ref{sec:conclusion} concludes and gives some directions for future work. 

\section{Preliminaries}\label{sec:preliminaries}

Section \ref{sec:notation} introduces some notation. Section \ref{sec:lossless_expanders} states the definition of two-sided lossless expanders. Section \ref{sec:graph_neighborhoods} introduces a convenient generalization of graph neighborhoods, and Section \ref{sec:sequences_of_subgraphs} uses these to reformulate the first stage of the classical Find Erasures and Decode algorithm as an increasing sequence of subgraphs. Section \ref{sec:peeling_of_a_set} introduces the definition of a peeling of a set; this will be used to reformulate the second stage of Find Erasures and Decode in a way that is suitable for quantum expander codes. Sections \ref{sec:css_codes} and \ref{sec:quantum_expander_codes} state the definitions of Calderbank-Shor-Steane codes and quantum expander codes respectively. 

\subsection{Notation}\label{sec:notation}

$\F$ denotes the field with two elements. $\F^{B\times A}$ denotes the space of matrices with entries in $\F$ whose rows are indexed by finite set $B$, and whose columns are indexed by finite set $A$. 

The row support of $H\in\F^{B\times A}$ is denoted by $\rs(H)\subset B$ and the column support of $H$ is denoted by $\cs(H)\subset A$. 

For $H \in \F^{B\times A}$, $T \subset B$, $S \subset A$, let $\subm{H}{T}{S} \in \F^{T \times S}$ denote the submatrix of $H$ with rows indexed by $T$ and columns indexed by $S$. Let $\esubm{H}{T}{S}$ denote $\subm{H}{T}{S}$ embedded into $\F^{B \times A}$ by padding it with zero rows and columns. If $|T|=|S|$ and $\subm{H}{T}{S}$ is invertible, the inverse is denoted by $\submi{H}{S}{T}\in\F^{S \times T}$. Let $\esubmi{H}{S}{T}$ denote $\submi{H}{S}{T}$ embedded into $\F^{A\times B}$ by padding it with zero rows and columns. Thus, 
\begin{align}
H &= \twobytwo{\subm{H}{T}{S}}{\subm{H}{T}{A\backslash S}}{\subm{H}{B \backslash T}{S}}{\subm{H}{B\backslash T}{A \backslash S}} \\
\esubm{H}{T}{S} &= \twobytwo{\subm{H}{T}{S}}{\subm{0}{T}{A\backslash S}}{\subm{0}{B\backslash T}{S}}{\subm{0}{B\backslash T}{A \backslash S}} \\
\esubmi{H}{S}{T} &= \twobytwo{\submi{H}{S}{T}}{\subm{0}{S}{B\backslash T}}{\subm{0}{A\backslash S}{T}}{\subm{0}{A\backslash S}{B \backslash T}}
\end{align}
From here follow the useful expressions:
\begin{align}
H\esubmi{H}{S}{T} &= \twobytwo{\subm{I}{T}{T}}{\subm{0}{T}{B\backslash T}}{\subm{H}{B\backslash T}{S}\submi{H}{S}{T}}{\subm{0}{B\backslash T}{B\backslash T}} \label{eq:hhi} \\
\esubmi{H}{S}{T} H &= \twobytwo{\subm{I}{S}{S}}{\submi{H}{S}{T}\subm{H}{T}{A\backslash S}}{\subm{0}{A\backslash S}{S}}{\subm{0}{A\backslash S}{A\backslash S}} \label{eq:hih}
\end{align}

The transpose of $H$ is denoted $H^\top$. $\submt{H}{S}{T}$ denotes a submatrix of $H^\top$, $\submti{H}{T}{S}$ denotes its inverse (if it exists), and $\esubmt{H}{S}{T}$ and $\esubmti{H}{T}{S}$ denote the respective embeddings into $\F^{A \times B}$ and $\F^{B \times A}$. 

\subsection{Lossless expanders}\label{sec:lossless_expanders}

In a lossless expander, small sets have nearly as many neighbors as they have edges. 

\begin{definition}\label{def:lossless_expander}
Let $G=(A \cup B, E)$ be a bipartite graph with $A,B$ the two sets of vertices and $E \subset A \times B$ the set of edges. Let $\gamma_A,\gamma_B,\delta_A,\delta_B \in [0,1]$. $G$ is a $(\gamma_A,\gamma_B,\delta_A,\delta_B)$ two-sided lossless expander if 
\begin{align}
S \subset A, 0<|\E(S)| \leq \gamma_A |E| &\implies |\B(S)| > (1-\delta_A) |\E(S)| \label{eq:lossless_expander_a}\\
T \subset B, 0<|\E(T)| \leq \gamma_B |E| &\implies |\A(T)| > (1-\delta_B) |\E(T)| \label{eq:lossless_expander_b}
\end{align}
where $\B(S)$ is the set of vertices in $B$ that are connected to $S$, $\A(T)$ is the set of vertices in $A$ that are connected to $T$, and $\E(S),\E(T)$ are the sets of edges leaving $S,T$ respectively. 
\end{definition}

As in \cite{burshtein2001expander}, the antecedent in \eqref{eq:lossless_expander_a} and \eqref{eq:lossless_expander_b} is a condition on the number of edges. This formulation is suitable for both regular and irregular graphs. 

A random graph from a suitable distribution has $|E|=O(|A|+|B|)$ and is an expander with high probability \cite[Lemma 1]{burshtein2001expander}. 

\subsection{The graph neighborhoods $\Ab$ and $\Bb$}\label{sec:graph_neighborhoods}

It will be convenient to use the following:

\begin{definition}\label{def:beta_neighborhood}
Let $G=(A \cup B, E)$ be a bipartite graph. For $\beta \in [0,1)$, $S \subset A$, let $\Bb(S)$ denote the set of those vertices in $B$  for which more than a $\beta$ fraction of edges go to $S$, i.e. 
\begin{equation}\label{eq:b_beta}
\Bb(S)=\{b\in B:|\A(b)\cap S|> \beta |\E(b)|\}
\end{equation}
and, similarly, for $T \subset B$, let $\Ab(T)$ denote the set of those vertices in $A$ for which more than a $\beta$ fraction of edges go to $T$, i.e.
\begin{equation}\label{eq:a_beta}
\Ab(T)=\{a \in A : |\B(a)\cap T| > \beta |\E(a)|\}
\end{equation}
$\A^0=\A$ and $\B^0=\B$ are the usual graph neighborhoods. 
\end{definition}

$\Ab$ and $\Bb$ are monotone:

\begin{lemma}\label{lemma:monotonicity_of_neighborhood}
For all $\beta \in [0,1)$, if $S \subset S' \subset A$ then $\Bb(S) \subset \Bb(S')$ and if $T \subset T' \subset B$ then $\Ab(T) \subset \Ab(T')$. 
\end{lemma}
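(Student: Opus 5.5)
The plan is to unfold Definition \ref{def:beta_neighborhood} and use that set intersection is monotone in its arguments. I will prove the first claim in full; the second follows by the same argument with the roles of $A$ and $B$ exchanged.

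Fix $\beta\in[0,1)$ and $S\subset S'\subset A$, and let $b\in\Bb(S)$. By \eqref{eq:b_beta}, this means $|\A(b)\cap S|>\beta|\E(b)|$. Since $S\subset S'$, we have $\A(b)\cap S\subset\A(b)\cap S'$, and so $|\A(b)\cap S'|\ge|\A(b)\cap S|$. The threshold $\beta|\E(b)|$ depends only on $b$, not on the set, so
\begin{equation*}
|\A(b)\cap S'|\ge|\A(b)\cap S|>\beta|\E(b)|,
\end{equation*}
which is exactly $b\in\Bb(S')$. Since $b$ was arbitrary, $\Bb(S)\subset\Bb(S')$.

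For the second claim, take $T\subset T'\subset B$ and $a\in\Ab(T)$, and apply \eqref{eq:a_beta} in the same way: $\B(a)\cap T\subset\B(a)\cap T'$ gives $|\B(a)\cap T'|\ge|\B(a)\cap T|>\beta|\E(a)|$, so $a\in\Ab(T')$.

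There is no real obstacle here. The only point worth checking is that the comparison threshold $\beta|\E(\cdot)|$ depends on the vertex alone and not on $S$ or $T$. This is what makes the strict inequality carry over from the smaller set to the larger one.
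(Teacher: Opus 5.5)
Your proof is correct and matches the paper's argument: for $b\in\Bb(S)$, the chain $|\A(b)\cap S'|\ge|\A(b)\cap S|>\beta|\E(b)|$ gives $b\in\Bb(S')$. The $\Ab$ case is handled the same way, which the paper simply calls "similar".
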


\begin{proof}
Take any $b \in \Bb(S)$. Then, $|\A(b)\cap S'| \geq |\A(b)\cap S| > \beta |\E(b)|$, so $b \in \Bb(S')$. The proof for $\Ab$ is similar. 
\end{proof}

\subsection{Sequences of subgraphs}\label{sec:sequences_of_subgraphs}

The Find Erasures and Decode algorithm for classical expander codes has two stages: the first stage finds a small superset of the error positions, and the second stage uses this additional information to correct the errors. It will be convenient to reformulate the first stage as follows:

\begin{definition}\label{def:decoder_sequence}
Let $G=(A \cup B, E)$ be a bipartite graph. For $\beta \in [0,1)$, $S \subset A$, and for $i=0,1,\dots$, let 
\begin{align}
\Gc_i(S)=(\Ac_i(S)\cup\Bc_i(S),E\cap(\Ac_i(S)\times\Bc_i(S)))
\end{align} 
denote the sequence of subgraphs given by
\begin{align}
\Bc_0(S)&=\emptyset  \\
\Ac_0(S) &= S \\
\Bc_{i+1}(S) &= \Bb(\Ac_i(S))  \\
\Ac_{i+1}(S) &=\Ac_i(S) \cup \A(\Bc_{i+1}(S)) 
\end{align}
and let 
\begin{align}
\Ac (S) &= \cup_i \Ac_i(S) \label{eq:mathcal_a_beta_of_s} \\
\Bc (S) &= \cup_i \Bc_i(S) \label{eq:mathcal_b_beta_of_s} \\
\Gc (S) &= (\Ac(S)\cup\Bc(S), E \cap(\Ac(S)\times\Bc(S)))
\end{align}
Similarly, for $T \subset B$, for $i=0,1,\dots$, let 
\begin{align}
\Gc_i(T)=(\Ac_i(T)\cup\Bc_i(T),E\cap(\Ac_i(T)\times\Bc_i(T)))
\end{align} 
denote the sequence of subgraphs given by
\begin{align}
\Ac_0(T) &=\emptyset  \\
\Bc_0(T)&=T  \\
\Ac_{i+1}(T) &=\Ab(\Bc_i(T))  \\
\Bc_{i+1}(T) &=\Bc_i(T)\cup \B(\Ac_{i+1}(T)) 
\end{align}
and let 
\begin{align}
\Ac(T) &= \cup_i \Ac_i(T) \label{eq:mathcal_a_beta_of_t} \\
\Bc(T)&=\cup_i \Bc_i(T) \label{eq:mathcal_b_beta_of_t} \\
\Gc(T)&=(\Ac(T)\cup\Bc(T),E\cap(\Ac(T)\times\Bc(T)))
\end{align}
\end{definition}

The sequences $\{\Gc_i(S)\}_{i=0}^\infty$ and $\{\Gc_i(T)\}_{i=0}^\infty$ are increasing and bounded above, so they become constant after a finite number of terms. Indeed, $\Ac_i(S) \subset \Ac_{i+1}(S)$ by construction, and consequently also $\Bc_{i+1}(S) \subset \Bc_{i+2}(S)$. Similarly, $\Bc_i(T) \subset \Bc_{i+1}(T)$ by construction, and consequently also $\Ac_{i+1}(T) \subset \Ac_{i+2}(T)$. This argument relies on the monotonicity of $\Ab$ and $\Bb$ (Lemma \ref{lemma:monotonicity_of_neighborhood}).

\subsection{Peeling of a set}\label{sec:peeling_of_a_set}

The following definition is inspired by the peeling decoder \cite[Algorithm 1]{luby2001efficient}, \cite[Section 3.19]{richardson2008modern}, \cite[Section 4.2]{viderman2013lineartime}. 

\begin{definition}\label{def:peeling_of_a_set}
Let $G=(A \cup B, E)$ be a bipartite graph. 

A peeling of a set $S \subset A$ is a matching of $S$ consisting of edges $(a_i,b_i)$, $i=1,\dots,|S|$, with the additional property: if $i>j$, then $a_i$ and $b_j$ are not connected. The set $\{b_1,\dots,b_{|S|}\}$ will be denoted $\Bp(S)$. 

Similarly, a peeling of a set $T \subset B$ is a matching of $T$ consisting of edges $(a_i,b_i)$, $i=1,\dots,|T|$, with the additional property: if $i<j$, then $a_i$ and $b_j$ are not connected. The set $\{a_1,\dots,a_{|T|}\}$ will be denoted $\Ap(T)$.
\end{definition}

There may be several subsets of $\B(S)$ which can appear in a peeling of $S$. However, once an algorithm for computing a peeling is fixed, the subset $\Bp(S) \subset \B(S)$ is uniquely determined. Similarly, $\Ap(T) \subset \A(T)$ is uniquely determined by the algorithm used to compute it. 

\subsection{Calderbank-Shor-Steane codes}\label{sec:css_codes}

A quantum CSS code \cite{calderbank1996good}, \cite{steane1996multiple} with set of qubits $Q$ and two sets of checks $C_x$, $C_z$ is specified by matrices $H_x \in \F^{C_x \times Q}$ and $H_z \in \F^{C_z \times Q}$ such that $H_x H_z^\top = 0$. The code has $|Q|$ physical qubits and $|Q|-rank(H_x)-rank(H_z)$ logical qubits. 

The decoding problem for a quantum CSS code is the following: given the syndromes $W=H_z X$, $U=H_x Z$ of errors $X,Z \in \F^Q$, find corrections $\hat{X},\hat{Z}$ with the property: $\hat{X}+X\in Im(H_x^\top)$ and $\hat{Z}+Z\in Im(H_z^\top)$. 

\subsection{Quantum expander codes}\label{sec:quantum_expander_codes}

Quantum expander codes \cite{leverrier2015quantum} are CSS codes obtained from the hypergraph product \cite{tillich2014quantum} of a two-sided lossless expander with itself. 

\begin{definition}\label{def:hypergraph_product}
The hypergraph product of $G=(A \cup B, E)$ with itself has four sets of vertices $A \times A$, $A \times B$, $B \times B$, $B \times A$, and has the following edges: for all $a' \in A$, $b' \in B$, $(a,b) \in E$ 
\begin{enumerate}
\item $(a',a) \in A \times A$ and $(a',b) \in A \times B$ are connected.
\item $(a,b') \in A \times B$ and $(b,b') \in B \times B$ are connected.
\item $(b',b) \in B \times B$ and $ (b',a) \in B \times A$ are connected.
\item $(b,a') \in B \times A$ and $(a,a') \in A \times A$ are connected. 
\end{enumerate}

The quantum CSS code associated to $G$ has qubits indexed by $A\times A \cup B \times B$, X stabilizer generators indexed by $A \times B$ and Z stabilizer generators indexed by $B \times A$. It has blocklength $|A|^2 + |B|^2$ and encodes $(|A|-rank(H))^2+(|B|-rank(H))^2$ logical qubits. Moreover, if $H \in \F^{B \times A}$ is the adjacency matrix of $G$, then
\begin{enumerate}
\item Any Pauli-X error $(X_A,X_B)$ with $X_A\in \F^{A\times A}$ and $X_B \in \F^{B \times B}$ produces syndrome $HX_A + X_BH \in \F^{B\times A}$. 
\item Any Pauli-X stabilizer is of the form $(VH,HV)$ for some $V \in \F^{A \times B}$. 
\item Any Pauli-Z error $(Z_A,Z_B)$ with $Z_A \in\F^{A \times A}$ and $Z_B \in \F^{B\times B}$ produces syndrome $Z_AH^\top+H^\top Z_B \in \F^{A\times B}$.
\item Any Pauli-Z stabilizer is of the form $(H^\top V,VH^\top)$ for some $V \in \F^{B\times A}$. 
\end{enumerate}
\end{definition}

\section{Find Rows and Columns and Decode}\label{sec:find_rows_and_columns_and_decode}

Section \ref{sec:algorithm} introduces the algorithm and states the Theorem about its decoding radius and complexity. Section \ref{sec:overview_of_the_proof} gives an overview of the proof and Sections \ref{sec:proof_that_execution_is_contained}, \ref{sec:proof_that_output_is_correct} and \ref{sec:proof_of_linear_time} contain the details. 

\subsection{The algorithm, its decoding radius and its complexity}\label{sec:algorithm}

\begin{definition}\label{def:find-row-find-column}
The Find Rows and Columns and Decode algorithm for quantum expander codes is parametrized by a bipartite graph $G=(A \cup B, E)$, by $\beta \in [0,1)$ and by $\ell \in \mathbb{N}$. On input a syndrome $W\in\F^{B\times A}$ of a Pauli-X error, it takes the following steps:
\begin{enumerate}
\item It computes the row support of $W$, denoted by $\rs(W)\subset B$, and the column support of $W$, denoted by $\cs(W)\subset A$. 
\item It computes the sets $\Ac_\ell(\rs(W)) \subset A$ and $\Bc_\ell(\cs(W)) \subset B$ (recall Definition \ref{def:decoder_sequence}). 
\item It computes a peeling of $\Ac_\ell(\rs(W))$ and a peeling of $\Bc_\ell(\cs(W))$ (recall Definition \ref{def:peeling_of_a_set}). 
\item It outputs the corrections
\begin{align}
\hat{X}_A &= \esubmi{H}{\Ac_\ell(\rs(W))}{\Bp(\Ac_\ell(\rs(W)))}W \label{eq:correction_x_a} \\
\hat{X}_B &= (H\hat{X}_A+W)\esubmi{H}{\Ap(\Bc_\ell(\cs(W)))}{\Bc_\ell(\cs(W))} \label{eq:correction_x_b}
\end{align}
where the notation of Section \ref{sec:notation} is used. 
\end{enumerate}
A Pauli-Z error is corrected similarly: on input a syndrome $U \in \F^{A \times B}$ of a Z error, the Find Rows and Columns and Decode algorithm outputs the corrections
\begin{align}
\hat{Z}_A &= U \esubmti{H}{\Bp(\Ac_\ell(\cs(U)))}{\Ac_\ell(\cs(U))} \label{eq:correction_z_a} \\
\hat{Z}_B &= \esubmti{H}{\Bc_\ell(\rs(U))}{\Ap(\Bc_\ell(\rs(U)))}(U+\hat{Z}_AH^\top)\label{eq:correction_z_b}
\end{align}
\end{definition}

The Find Rows and Columns and Decode algorithm can correct all errors up to a constant fraction of the code distance: 

\begin{theorem}\label{thm:decoding_radius_and_complexity}
Let $G=(A \cup B, E)$ be a $(\gamma_A$, $\gamma_B$, $\delta_A$, $\delta_B)$ two-sided lossless expander. Let the maximum left and right degree of $G$ be $\dax$, $\dbx$ respectively. Let $\delta = \max(\delta_A,\delta_B) < 1/3$ and let $\delta < \beta \leq 1-2\delta$. Let
\begin{align}
\ell_A &= \min\left(|A|,\left\lceil\frac{\log|E|}{\log\frac{1-\beta}{2\delta_A}}\right\rceil\right) \\
\ell_B &=\min\left(|B|,\left\lceil\frac{\log|E|}{\log\frac{1-\beta}{2\delta_B}}\right\rceil\right) \\
\ell &=\max(\ell_A,\ell_B)
\end{align}
Let $W$, $U$ be syndromes that are generated respectively by a Pauli-X error $(X_A,X_B)$ and a Pauli-Z error $(Z_A,Z_B)$ of Hamming weight less than or equal to 
\begin{align}
\frac{\beta-\delta}{\beta}\left(\min\left(\gamma_A \frac{|E|}{\dax},\gamma_B \frac{|E|}{\dbx}\right)-1+\min\left(\frac{1}{d_A},\frac{1}{d_B}\right)\right)
\end{align}
Then, the Find Rows and Columns and Decode algorithm outputs a correction that differs from the error by a stabilizer. The algorithm runs in $O(|E|(|A|+|B|))$ time in the uniform cost model. In addition, it is parallelizable to a circuit with depth $O((\ell+\frac{\log |E|}{\log\frac{1}{2\delta}})(\log\dax+\log\dbx))$ and with $O(|E|(|A|+|B|)\frac{\log|E|}{\log\frac{1}{2\delta}})$ gates, where $\log$ denotes the binary logarithm. 
\end{theorem}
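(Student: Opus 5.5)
The plan is to treat the X case in detail; the Z case follows by transposing every matrix identity. Write $S_0=\cs(X_A)\cup\cs(X_B^\top)$ for the set of columns touched by the error, and $T_0=\rs(X_A)\cup\rs(X_B)$ for the set of rows. Write $\ell$-step sets as $\mathcal{S}=\Ac_\ell(\rs(W))$ and $\mathcal{T}=\Bc_\ell(\cs(W))$.

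The first reduction modifies the error by a stabilizer $(VH,HV)$ to obtain a \emph{reduced} representative $(X_A',X_B')$ with $\rs(X_A')\subset\mathcal{S}$, $\cs(X_B')\subset\mathcal{T}$, and total weight still small. Rather than construct this upfront, I would proceed as follows.

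\textbf{Step 1 (containment of the execution).} I would show that the set sequences stay inside small expanding sets. Fix a column $a'\in A$ and look at the column $W_{\cdot,a'}=H(X_A)_{\cdot,a'}+(X_B H)_{\cdot,a'}$. The rows of $W$ lie in $\B(\rs(X_A))\cup\rs(X_B)$.

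The core lemma is classical in flavour. Let $R\subset A$ with $|\E(R)|\le\gamma_A|E|$, and let $T\supset \B^{\beta}$-unexplained checks. Then the sequence $\Ac_i(T)$ grows but is confined: $|\Ac(\rs(W))\cup\rs(X_A)|$ stays below $\frac{\beta}{\beta-\delta}$ times the error weight. This is proved in the standard Viderman way. Suppose the union first exceeds the bound, truncate to exactly the threshold size, and count edges. Each added vertex has more than a $\beta$ fraction of its edges into already-found checks, while lossless expansion gives more than $(1-\delta)|\E|$ unique neighbours. This yields a contradiction exactly when the weight is below the stated radius; the $-1+\min(1/d_A,1/d_B)$ term accounts for the truncation rounding by one vertex. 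Symmetrically, $\Bc(\cs(W))$ is controlled using \eqref{eq:lossless_expander_b}.

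\textbf{Step 2 (capture after $\ell$ rounds).} I would show that after $\ell$ rounds the found set $\mathcal{S}$ contains every row of a suitable reduced error. The set of still-missing vertices shrinks by a factor $\frac{2\delta_A}{1-\beta}$ per round, via the same unique-neighbour counting with $\beta\le 1-2\delta$. It therefore empties after $\lceil\log|E|/\log\frac{1-\beta}{2\delta_A}\rceil$ rounds, and trivially after $|A|$ rounds. The same holds for $B$ with $\ell_B$.

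\textbf{Step 3 (the peelings exist).} Since $\mathcal{S}$ is small, lossless expansion with $\delta<1/2$ guarantees that it has a unique neighbour. Removing it and iterating yields a peeling, so $H_{\Bp(\mathcal{S}),\mathcal{S}}$ is triangular with unit diagonal and hence invertible. The same argument gives a peeling of $\mathcal{T}$.

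\textbf{Step 4 (algebraic correctness).} Using \eqref{eq:hih}, $\hat X_A$ equals $X_A$ modulo a term of the form $VH$ plus the correct contribution of $X_B$ restricted to rows where its columns lie in $\mathcal{S}$. The residual $H\hat X_A+W$ then equals $(X_B+V'H)H$-type terms supported in rows $\mathcal{T}$. Applying \eqref{eq:hhi} on the right then recovers $X_B$ up to $HV$. I expect this bookkeeping to be the main obstacle: one must choose $V$ so that $(VH,HV)$ is a stabilizer, and one must check that the off-block terms $H_{B\setminus T,S}H^{-1}_{S,T}$ vanish on the syndrome. To do this I would rely on the fact that the relevant syndrome rows are entirely inside $\Bp$ or are determined consistently, because the error is genuinely supported in the found set. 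The approach I would try first is the ReShape-style decomposition: split the error into its restriction on $\mathcal{S}$ and show the remainder's syndrome vanishes outside $\Bp(\mathcal{S})$.

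\textbf{Step 5 (complexity).} Computing $\rs$ and $\cs$ takes $O(|A||B|)$ time. Each $\Bb$ or $\A$ step is linear in $|E|$. Peeling is linear via a queue of unique neighbours. Triangular back-substitution applied to all $|A|$ or $|B|$ columns of $W$ costs $O(|E|(|A|+|B|))$. For depth, each neighbourhood step is an OR/threshold over at most $d_A,d_B$ inputs, giving depth $O(\log d_A+\log d_B)$ per round over $\ell$ rounds. The peeling and back-substitution proceed in $O(\log|E|/\log\frac1{2\delta})$ parallel layers, because a constant fraction of unique neighbours is removed per layer, and each layer has depth $O(\log d_A+\log d_B)$.
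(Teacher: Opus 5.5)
\textbf{Verdict: genuine gap.} Your Steps 3 and 5 follow the paper's route, and Step 1 follows it too once tidied. First, the counting yields an edge bound, $(\beta-\delta)|\E(C')|<\beta|\E(C)|$. It does not yield a vertex-count bound, which does not follow for irregular graphs; in any case $|\E(\cdot)|\le\gamma_A|E|$ is what you need to invoke expansion. Second, the lossless inequality concerns neighbours, not unique neighbours. Third, the count must start from a container $C$ with $\rs(W)\subset\B(C)$, e.g.\ $\rs(X_A)$ plus one neighbour of each nonzero row of $X_B$, since checks in $\rs(X_B)$ need not be adjacent to $\rs(X_A)$.

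The gap is in correctness. The ``suitable reduced error'' of Step 2 is never constructed, and the capture argument only applies to a set $S'\subset A$ with $|\E(S')|\le\gamma_A|E|$ and $\Bu(S')\subset\rs(W)$. The true $\rs(X_A)$ need not satisfy this. For $b\in\Bu(\rs(X_A))$ with unique edge $(a,b)$, the row $W_{b,\cdot}=(X_A)_{a,\cdot}+(X_BH)_{b,\cdot}$ can vanish; if the error is a small stabilizer, $W=0$ and $\mathcal{S}=\emptyset$. The missing idea is a rank-one stabilizer move. For such $b\notin\rs(W)$, add $(VH,HV)$ with $V=e_a(X_B)_{b,\cdot}$. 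Since $(X_A)_{a,\cdot}=(X_BH)_{b,\cdot}$, this zeroes row $a$ of $X_A$, leaves $\cs(X_B)$ unchanged, and adds rows to $X_B$ only inside $\B(a)\subset\B(C)$. Iterating gives $X_A'$ with $\Bu(\rs(X_A'))\subset\rs(W)$ while $(C,D)$-containment is preserved. The right invariant is this support containment, not ``total weight still small'': each move can add up to $\dax$ rows to $X_B$.

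Step 4 is left open, and the mechanism you suggest (off-diagonal blocks vanishing on the syndrome, a ReShape-style split) is not what closes it. Here is what works.
\begin{enumerate}
\item $\rs(X_A'+\hat X_A)\subset\mathcal{S}$. By the linear-system characterization of $\esubmi{H}{\mathcal{S}}{\Bp(\mathcal{S})}$, the residual syndrome $W+H\hat X_A$ is zero on the rows $\Bp(\mathcal{S})$ and has column support in $\cs(W)$.
\item One further stabilizer, with $V=\esubmi{H}{\mathcal{S}}{\Bp(\mathcal{S})}X_B'$, absorbs the entire $A$-part via \eqref{eq:hih}. The residual error becomes $(0,X_B'')$ with $\cs(X_B'')\subset\cs(X_B')\subset D$.
\item For this pure $B$-side error, capture is immediate. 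Every unique neighbour of $\cs(X_B'')$ lies in $\cs(X_B''H)\subset\cs(W)$, and no $A$-part remains to mask it.
\item Hence $\cs(X_B'')\subset\mathcal{T}$ by the capture lemma with $|\E(D)|\le\gamma_B|E|$. Then $\hat X_B=X_B''H\esubmi{H}{\Ap(\mathcal{T})}{\mathcal{T}}=X_B''$ by \eqref{eq:hhi}.
\end{enumerate}

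Your plan to get $\cs(X_B')\subset\mathcal{T}$ upfront can also be made to work. You would alternate the row move with the symmetric column move $V=(X_A')_{\cdot,a'}e_b^\top$ at unique neighbours $a'\notin\cs(W)$ of $\cs(X_B')$; the quantity $|\rs(X_A')|+|\cs(X_B')|$ strictly decreases, so this terminates. The absorption step above is needed either way.
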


Some remarks on important special cases of Theorem \ref{thm:decoding_radius_and_complexity} follow. When $|E|=O(|A|+|B|)$, Find Rows and Columns and Decode runs in time linear in the blocklength $|A|^2+|B|^2$. When the sequential algorithm is applied, it makes sense to choose $\beta=1-2\delta$ to correct the largest number of errors. 

When the maximum degrees $\dax$ and $\dbx$ are constant and $\beta=1-2\delta$, the parallel algorithm has depth that scales with the square root of the blocklength. Decreasing $\beta$ slightly so that it is bounded away from $1-2\delta$ results in a circuit of logarithmic depth, but also in a slightly smaller decoding radius. One is reminded of \cite[Remark 13]{sipser1996expander} that in experiments, the sequential Flip Algorithm usually corrects more errors than the parallel one. 

In the case of Find Rows and Columns and Decode, the difference between the sequential and parallel version arises in the interplay of Lemma \ref{lemma:lower_bound_on_decoder_sequence}, which gives bounds on how many terms of the sequences $\{\Gc_i(\cs(W))\}_{i=0}^\infty$ and $\{\Gc_i(\rs(W))\}_{i=0}^\infty$ need to be computed, and Lemma \ref{lemma:computation_of_sequences}, which discusses the complexity of computing them. For the sequential algorithm, it does not matter if $O(\log|E|)$ or $\Theta(|E|)$ terms of the sequences are needed; in both cases, it takes time $O(|E|)$ in the uniform cost model. However, the depth of the parallel algorithm scales with the number of terms, and an example after Lemma \ref{lemma:lower_bound_on_decoder_sequence} illustrates the need to decrease $\beta$ slightly to make sure that a logarithmic number of terms suffice.  

Finally, in the special case when all vertices on the same side have the same degree, 
\begin{align}
\min\left(\gamma_A\frac{|E|}{\dax},\gamma_B\frac{|E|}{\dbx}\right)=\min(\gamma_A|A|,\gamma_B|B|)
\end{align} 
which is the usual expression for the expansion radius of regular bipartite graphs. 

\subsection{Overview of the proof of Theorem \ref{thm:decoding_radius_and_complexity}}\label{sec:overview_of_the_proof}

The argument is presented for the correction of an X error; the case of a Z error is similar. 

First, the execution of the algorithm is contained in a small region and this guarantees that a peeling for $\Ac_\ell(\rs(W))$ and for $\Bc_\ell(\cs(W))$ can be found:

\begin{proposition}\label{prop:execution_is_contained}
There are sets $C\subset A$, $\Acb(C) \subset A$, $D \subset B$, $\Bcb(D)\subset B$ such that
\begin{align}
\Ac(\rs(W)) &\subset \Acb(C) \label{eq:container_for_mathcal_a_beta_of_t} \\
\Bc(\cs(W)) &\subset \Bcb(D) \label{eq:container_for_mathcal_b_beta_of_s} \\
|\E(\Acb(C))| & \leq \gamma_A |E| \label{eq:size_of_bar_mathcal_a_of_c} \\
|\E(\Bcb(D))| & \leq \gamma_B |E| \label{eq:size_of_bar_mathcal_b_of_d}
\end{align}
\end{proposition}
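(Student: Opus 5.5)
The plan is to prove the first pair of statements, \eqref{eq:container_for_mathcal_a_beta_of_t} and \eqref{eq:size_of_bar_mathcal_a_of_c}. The pair for $\Bc(\cs(W))$ then follows by the symmetric argument, exchanging the roles of $A$ and $B$ and transposing. The starting point is the structure of the syndrome $W=HX_A+X_BH$. A row $b$ of $HX_A$ can be nonzero only if $b$ is adjacent to some vertex of $\rs(X_A)$, and the row support of $X_BH$ is contained in $\rs(X_B)$. Hence
\begin{align}
\rs(W)\subset \B(\rs(X_A))\cup\rs(X_B), \qquad \cs(W)\subset \cs(X_A)\cup \A(\cs(X_B)).
\end{align}
I will take $C=\rs(X_A)$ and $T_0=\rs(X_B)$. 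Note that $|C|+|T_0|$ is at most the weight $w$ of the error.

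Next I will define $\Acb(C)$ as the limit of a ``contaminated'' version of the sequence of Definition \ref{def:decoder_sequence}, with $T_0$ held fixed:
\begin{align}
\bar{\mathcal{A}}_0=C,\qquad \bar{\mathcal{A}}_{i+1}=\bar{\mathcal{A}}_i\cup\Ab\bigl(\B(\bar{\mathcal{A}}_i)\cup T_0\bigr).
\end{align}
I then show by induction that $\Ac_i(\rs(W))\subset\bar{\mathcal{A}}_i$ and $\Bc_i(\rs(W))\subset \B(\bar{\mathcal{A}}_i)\cup T_0$, using the displayed inclusion for $\rs(W)$ and the monotonicity Lemma \ref{lemma:monotonicity_of_neighborhood}. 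For the inductive step, note that $\A(\Bc_i)\cap$ the newly added vertices is controlled, because every new vertex of $\Ac_{i+1}$ lies in $\Ab(\Bc_i)\subset\Ab(\B(\bar{\mathcal{A}}_i)\cup T_0)$. Its neighbours then lie in $\B(\bar{\mathcal{A}}_{i+1})$. Taking unions over $i$ gives \eqref{eq:container_for_mathcal_a_beta_of_t}.

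The main step, which is a Viderman-type counting argument, is the edge bound \eqref{eq:size_of_bar_mathcal_a_of_c}. I will order the vertices of $\Acb(C)\setminus C$ in the order in which they are added. Within a single stage the order is arbitrary, since each vertex added at stage $i+1$ has more than $\beta d(a)$ neighbours in $\B(\bar{\mathcal{A}}_i)\cup T_0$, which is a subset of the neighbourhood of everything earlier together with $T_0$. Suppose, for contradiction, that $|\E(\Acb(C))|>\gamma_A|E|$. First I check that $|\E(C)|\le \dax w\le\gamma_A|E|$ from the weight hypothesis. Then I take the shortest prefix $S\supset C$ of the ordering with $|\E(S)|>\gamma_A|E|-\dax$; this forces $|\E(S)|\le\gamma_A|E|$. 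Lossless expansion gives $|\B(S)|>(1-\delta_A)N$, where $N=|\E(S)|$. On the other hand, each added vertex contributes fewer than $(1-\beta)d(a)$ neighbours outside the earlier neighbourhood and $T_0$, so
\begin{align}
|\B(S)|\le |\E(C)|+|T_0|+(1-\beta)\bigl(N-|\E(C)|\bigr).
\end{align}
Combining the two bounds yields $(\beta-\delta)N<\beta|\E(C)|+|T_0|\le\beta\dax w$, using $\beta\dax\ge 1$ or a slightly more careful split. Inserting the weight bound $w\le\frac{\beta-\delta}{\beta}\bigl(\gamma_A|E|/\dax-1+1/\dax\bigr)$ gives $N<\gamma_A|E|-\dax+1$. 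Since $N$ is an integer, this contradicts $N>\gamma_A|E|-\dax$.

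I expect the main obstacle to be this final accounting. The delicate points are charging the $T_0$ contributions only once, obtaining the factor $\beta$ (rather than $1$) in front of $|\E(C)|$, and getting the precise $-1+1/d$ correction. If the factor $\beta$ in front of $|T_0|$ does not appear naturally, I would instead bound $|T_0|\le w$ and use $\dax\ge 1$ to absorb it. The $\min$ over the two sides in the hypothesis is exactly what lets the same argument run for $D=\cs(X_B)$ and $T_0'=\cs(X_A)$ on the $B$ side.
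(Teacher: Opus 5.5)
Your containment step is fine, but the edge bound does not close in general, and it fails exactly where you flagged it. With $C=\rs(X_A)$ and $T_0=\rs(X_B)$, your counting gives $(\beta-\delta)N<\beta|\E(C)|+|T_0|$. To reach $N<\gamma_A|E|-\dax+1$ from the weight bound, you need $\beta|\E(C)|+|T_0|\le\beta\dax w$. Each row of $X_B$ enters your right-hand side with coefficient $1$, while the weight bound only allows coefficient $\beta\dax$ per unit of weight. So the argument needs $\beta\dax\ge 1$, and the hypotheses do not provide this: they only give $\beta>\delta$, and $\dax<1/\beta$ is allowed. For instance, a perfect matching is a lossless expander for every $\delta>0$ and has $\dax=1$.

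When $\beta\dax<1$, take $X_A=0$ and $|T_0|=w$ at the maximal allowed weight. Your inequality then only gives $N<(\gamma_A|E|-\dax+1)/(\beta\dax)$, which is no contradiction. Concretely, take the matching with $|A|=|B|=10$, $\gamma_A=\gamma_B=\tfrac12$, $\delta=0.1$, $\beta=0.5$. The radius is $4$, and $0.4N<4$ is compatible with the only candidate prefix value $N=5$. Your fallback $|T_0|\le\dax w$ does not help either: it gives only $N<(\gamma_A|E|-\dax+1)/\beta$. The same problem occurs on the $B$ side unless $\beta\dbx\ge1$.

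The missing idea is to make the rows of $X_B$ pay through the expansion, rather than adding them to the right side for free. The paper puts into $C$, besides $\rs(X_A)$, one $A$-neighbour of each nonzero row of $X_B$ (Lemma \ref{lemma:size_of_container_for_an_error}). Then $|C|\le w$ and $\rs(W)\subset\B(C)$, so no $T_0$ is needed, and the sequence is just $\Acb_{i+1}(C)=\Ab(\B(\Acb_i(C)))$. The representatives lie inside every $C'\supseteq C$, so they are covered both by $|\B(C)|\le|\E(C)|$ and by the lossless lower bound on $|\B(C')|$. The same counting then gives $(\beta-\delta)|\E(C')|<\beta|\E(C)|\le\beta\dax w$ (Lemma \ref{lemma:upper_bound_on_bar_a_of_c_and_bar_b_of_d}).

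Your set $\bar{\mathcal{A}}$ is contained in this $\Acb(C)$, by an easy induction using $T_0\subset\B(C)\subset\B(\Acb_i(C))$. So you can keep your construction and simply invoke the bound for the enlarged $C$. Separately, your closing integrality step tacitly needs $\gamma_A|E|\in\mathbb{Z}$; the paper's choice of $C'$ makes the same tacit assumption.
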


\begin{proof}
The sets $C,D$ are chosen to contain the error $(X_A,X_B)$ in the sense of Definition \ref{def:cd_contained}. Lemma \ref{lemma:size_of_container_for_an_error} implies that $C,D$ can be chosen so that each of $|C|$, $|D|$ is at most the Hamming weight $|X_A|+|X_B|$. Lemma \ref{lemma:container_for_an_error_is_container_for_the_syndrome} implies that $\rs(W) \subset \B(C)$ and $\cs(W) \subset \A(D)$. 

The sets $\Acb(C)$ and $\Bcb(D)$ are constructed in Definition \ref{def:upper_bound_sequence}. Lemma \ref{lemma:upper_bound_sequence_and_find_erasures_sequence} implies that $\Ac(\rs(W)) \subset \Acb(C)$ and $\Bc(\cs(W)) \subset \Bcb(D)$. Lemma \ref{lemma:upper_bound_on_bar_a_of_c_and_bar_b_of_d} implies that $|\E(\Acb(C))|\leq\gamma_A|E|$ and $|\E(\Bcb(D))|\leq\gamma_B|E|$.

The details of the proof of Proposition \ref{prop:execution_is_contained} are in Section \ref{sec:proof_that_execution_is_contained}.
\end{proof}

Next, the output of the algorithm is correct:

\begin{proposition}\label{prop:output_is_correct}
The correction $(\hat{X}_A,\hat{X}_B)$ differs from the error $(X_A,X_B)$ by a Pauli-X stabilizer. 
\end{proposition}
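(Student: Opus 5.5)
The plan is to reduce Proposition \ref{prop:output_is_correct} to one structural fact about the error: up to a stabilizer, its rows lie in the set found by the first peeling and its columns in the set found by the second. Granted that fact, the two correction formulas \eqref{eq:correction_x_a}, \eqref{eq:correction_x_b} are checked by direct matrix algebra using \eqref{eq:hhi} and \eqref{eq:hih}.

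Write $S=\Ac_\ell(\rs(W))$, $T=\Bp(S)$, $T'=\Bc_\ell(\cs(W))$ and $S'=\Ap(T')$.

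First, the peelings exist and the relevant square submatrices are invertible. By Proposition \ref{prop:execution_is_contained}, $S\subset\Acb(C)$ with $|\E(\Acb(C))|\le\gamma_A|E|$, so every subset of $S$ expands losslessly. With $\delta_A<1/2$, every nonempty subset of $S$ then has a vertex of $B$ adjacent to exactly one of its elements. Repeatedly removing such a unique neighbour and its partner builds a peeling, listed in reverse order of removal. In that order the matrix $\subm{H}{T}{S}$ is triangular with ones on the diagonal, by the defining property of a peeling. Hence $\submi{H}{S}{T}$ exists, and the same argument on the $B$ side gives $\submi{H}{S'}{T'}$.

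The central step, which I expect to be the main obstacle, is to find a stabilizer-equivalent error $(X_A',X_B')=(X_A,X_B)+(V_0H,HV_0)$ with
\begin{align}
\rs(X_A')\subset S, \qquad \cs(X_B')\subset T'.
\end{align}
I would take $V_0$ so that the rows of $X_A'$ lie in the container $C$ and the columns of $X_B'$ in $D$, in the sense of Definition \ref{def:cd_contained}. The task is then to show that the finding sequence actually swallows these supports within $\ell$ steps. This is a Viderman-type argument: any row of $X_A'$ outside the current $\Ac_i(\rs(W))$ must, by lossless expansion inside $\Acb(C)$, have more than a $\beta$ fraction of unique-type neighbours in $\rs(W)$ or in the current set. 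It would therefore be added at the next step. The bound on the number of steps needed should come from Lemma \ref{lemma:lower_bound_on_decoder_sequence}, using the choice of $\ell$. If containment fails for $C,D$ directly, I would instead modify $V_0$ greedily, moving rows of $X_A$ into $X_B$ via stabilizers until the containment holds.

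Given this, the algebra finishes the proof. Since $W=HX_A'+X_B'H$ and $\rs(X_A')\subset S$, identity \eqref{eq:hih} gives $\esubmi{H}{S}{T}HX_A'=X_A'$. Therefore
\begin{align}
\hat X_A=X_A'+V_1H,\qquad V_1=\esubmi{H}{S}{T}X_B'\in\F^{A\times B}.
\end{align}
The residual is $H\hat X_A+W=(X_B'+HV_1)H$. Because $\cs(HV_1)\subset\cs(X_B')\subset T'$, the transposed form of \eqref{eq:hih} yields $\hat X_B=X_B'+HV_1$. Hence $(\hat X_A,\hat X_B)$ differs from $(X_A,X_B)$ by $((V_0+V_1)H,\,H(V_0+V_1))$, which is a Pauli-X stabilizer.
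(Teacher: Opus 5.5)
Your invertibility argument is fine, with one slip: under Definition \ref{def:peeling_of_a_set} the pairs are listed in the order of removal, not the reverse. Your closing algebra is also correct; it amounts to the paper's Lemmas \ref{lemma:stabilizer_equivalent_error} and \ref{lemma:b_side_correction_equals_remaining_error} done by hand.

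The gap is the central step, which carries the whole content of the proposition. Your sketch of it fails because the two halves of the syndrome can cancel. Let $e_{a_0},e_{b_0}$ be standard basis vectors, with $h_{a_0}=He_{a_0}$ and $h_{b_0}^\top=e_{b_0}^\top H$. The stabilizer $(X_A,X_B)=(e_{a_0}h_{b_0}^\top,\,h_{a_0}e_{b_0}^\top)$ is $(\{a_0\},\{b_0\})$-contained and has $W=0$. So every unique neighbour of $\rs(X_A)=\{a_0\}$ lies outside $\rs(W)$, and $a_0\notin\Ac_\ell(\rs(W))=\emptyset$. Thus for an arbitrary $(C,D)$-contained representative the hypothesis $\Bu(\rs(X_A'))\subset\rs(W)$ of Lemma \ref{lemma:lower_bound_on_decoder_sequence} can fail, and no expansion argument rescues it. (Even when that hypothesis holds, it is false that every remaining row has more than a $\beta$ fraction of its neighbours in the current set. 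Lemma \ref{lemma:less_than_beta_fraction_unique_neighbors} only makes the edge count of the exceptional rows shrink geometrically.)

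The missing idea is to change the representative so that this cancellation disappears. Suppose $b\in\Bu(\rs(X_A))\setminus\rs(W)$ with unique edge $(a,b)$. Then row $b$ of the syndrome vanishes, so Lemma \ref{lemma:stabilizer_equivalent_error} with the $1\times1$ block $\subm{H}{b}{a}=1$ deletes row $a$ of $X_A$. This move preserves $(C,D)$-containment, since $\B(a)\subset\B(C)$, and does not enlarge $\cs(X_B)$. Iterating gives $\rs(X_A')\subset\Acu_C(\rs(W))\subset\Ac_\ell(\rs(W))$ by Lemmas \ref{lemma:row_support_included_in_lower_bound_set} and \ref{lemma:improved_lower_bound_on_decoder_sequence}. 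Your fallback sentence gestures at such moves but never states this triggering condition, and that condition is the crux.

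Second, these row moves give only $\cs(X_B')\subset D$, not the $\cs(X_B')\subset\Bc_\ell(\cs(W))$ your algebra assumes. Nothing in your proposal controls the column side, and this is a real issue. For $a''\notin\A(b_0)$, take the error $(e_{a_0}(h_{b_0}^\top+e_{a''}^\top),\,h_{a_0}e_{b_0}^\top)$. It has $W=h_{a_0}e_{a''}^\top$, so $\Bu(\{a_0\})=\B(a_0)=\rs(W)$ and no row move applies. Yet if $\beta|\E(b')|\geq 1$ for all $b'\in B$, then $\Bc_\ell(\cs(W))=\Bc_\ell(\{a''\})=\emptyset\not\ni b_0$.

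You need one of two fixes.
\begin{itemize}
\item The mirror-image move. If $a\in\Au(\cs(X_B))\setminus\cs(W)$ with unique edge $(a,b)$, the stabilizer with $V=X_Ae_ae_b^\top$ zeroes column $b$ of $X_B$. It keeps $(C,D)$-containment and does not enlarge $\rs(X_A)$, so it can be run after the row moves without undoing them.
\item The paper's shortcut. Your algebra only needs $\cs(X_B'+HV_1)\subset\Bc_\ell(\cs(W))$. Now $(X_B'+HV_1)H=W+H\hat X_A$ has column support in $\cs(W)$, and $\cs(X_B'+HV_1)\subset\cs(X_B')\subset D$. Since no $A$-part remains to cancel against, Lemmas \ref{lemma:about_column_support} and \ref{lemma:improved_lower_bound_on_decoder_sequence} give $\cs(X_B'+HV_1)\subset\Bcu_D(\cs(W))\subset\Bc_\ell(\cs(W))$.
\end{itemize}
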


\begin{proof}
Definition \ref{def:improved_lower_bound_on_decoder_sequence} introduces sets $\Acu_C(\rs(W))$ and $\Bcu_D(\cs(W))$ and Lemma \ref{lemma:improved_lower_bound_on_decoder_sequence} shows that
\begin{align}
\Acu_C(\rs(W)) &\subset \Ac_\ell(\rs(W)) \\
\Bcu_D(\cs(W)) &\subset \Bc_\ell(\cs(W))
\end{align}
At the same time, Lemma \ref{lemma:row_support_included_in_lower_bound_set} shows that there is a Pauli-X error $(X_A',X_B')$ that is stabilizer equivalent to $(X_A,X_B)$, is $(C,D)$ contained, and has the additional property that the row support of $X_A'$ is a subset of $\Acu_C(\rs(W))$. 

After the correction $\hat{X}_A$, the remaining error is $(X_A'+\hat{X}_A,X_B')$ and the remaining syndrome is $W+H\hat{X}_A$. The row support of $X_A'+\hat{X}_A$ is a subset of $\Ac_\ell(\rs(W))$ and the rows of the remaining syndrome $W+H\hat{X}_A$ indexed by $\Bp(\Ac_\ell(\rs(W)))$ are zero (Lemma \ref{lemma:support_of_remaining_syndrome}), so $(X_A'+\hat{X}_A,X_B')$ is stabilizer equivalent to $(0,X_B'')$ for some $X_B''$ with column support a subset of $D$ (Lemma \ref{lemma:stabilizer_equivalent_error}). Moreover, the remaining syndrome $X_B''H$ has column support a subset of $\cs(W)$ (Lemma \ref{lemma:support_of_remaining_syndrome}), so $\cs(X_B'')$ must be a subset of $\Bcu_D(\cs(W))$ (Lemma \ref{lemma:about_column_support}). Therefore, $\hat{X}_B=X_B''$ (Lemma \ref{lemma:b_side_correction_equals_remaining_error}). 

The details of the proof of Proposition \ref{prop:output_is_correct} are in Section \ref{sec:proof_that_output_is_correct}. 
\end{proof}

Finally, the algorithm runs in linear time and can be parallelized to logarithmic depth:

\begin{proposition}\label{prop:linear_time}
The algorithm runs in $O(|E|(|A|+|B|))$ time in the uniform cost model and can be parallelized to a circuit with depth $O((\ell+\frac{\log|E|}{\log\frac{1}{2\delta}})(\log\dax+\log\dbx))$ and with $O(|E|(|A|+|B|)\frac{\log|E|}{\log\frac{1}{2\delta}})$ gates. 
\end{proposition}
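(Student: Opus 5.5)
I will establish the complexity bounds step by step, following the four steps of Definition \ref{def:find-row-find-column}, and treat the sequential and parallel costs separately.

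\textbf{Step 1.} Computing $\rs(W)$ and $\cs(W)$ requires one pass over the $|B||A|$ entries of $W$, which is $O(|E|(|A|+|B|))$ since $|A|,|B|\le|E|$ when there are no isolated vertices. In parallel, each row support bit is an OR of $|A|$ bits, giving depth $O(\log|A|)$. To reach the stated depth, I will use $\log|A|\le\log|E|$ and $\log\frac{1}{2\delta}\le$ a constant multiple of $\log\dax+\log\dbx$; the latter should follow from the fact quoted in the introduction that $\delta$ is at least about $1/\max(\dax,\dbx)$.

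\textbf{Step 2.} The sequences $\Ac_i(\rs(W))$ and $\Bc_i(\cs(W))$ for $i\le\ell$ are computed sequentially by a queue-based procedure in the style of Viderman's algorithm. For each vertex we keep a counter of neighbors already in the current set. When a vertex enters the set, it updates the counters of its neighbors. A vertex joins $\Bb$ as soon as its counter exceeds $\beta|\E(b)|$, and at that point all of its $\A$-neighbors are added. Each edge is processed $O(1)$ times, so the total cost is $O(|E|)$ regardless of $\ell$. In parallel, each iteration $i\to i+1$ is a threshold/count computation per vertex. This is an addition of at most $\dax$ or $\dbx$ bits followed by a comparison, so it has depth $O(\log\dax+\log\dbx)$ and uses $O(|E|)$ gates. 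Doing $\ell$ iterations gives depth $O(\ell(\log\dax+\log\dbx))$.

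\textbf{Step 3.} Sequentially, the peelings are found by the standard peeling procedure. We repeatedly choose a $b$ that has exactly one neighbor among the remaining vertices of $\Ac_\ell$, match them, and delete that neighbor. Degree counters make this run in $O(|E|)$ time, and the matching is recorded in reverse so that it satisfies Definition \ref{def:peeling_of_a_set}. Proposition \ref{prop:execution_is_contained} and lossless expansion with $\delta<1/2$ guarantee that every nonempty remaining subset has a unique neighbor, so the procedure never gets stuck. In parallel, we peel all unique-neighbor vertices in each round. By expansion, a constant fraction of the remaining edges disappears per round: at least a $(1-2\delta)$ fraction of edges go to unique neighbors. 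The number of rounds is therefore $O(\log|E|/\log\frac{1}{2\delta})$, and each round has depth $O(\log\dax+\log\dbx)$. \emph{This is the step I expect to be the main obstacle.} The difficulty is showing that the rounds shrink the set geometrically with ratio $2\delta$ (which requires counting edges to unique neighbors), and that a batch-peeled ordering is still a valid peeling when ties inside a round are resolved arbitrarily. The latter should hold because vertices removed in the same round have distinct unique neighbors that are not connected to the other vertices of that round.

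\textbf{Step 4.} Computing $\hat X_A=\esubmi{H}{S}{T}W$ does not require inverting the matrix explicitly. The peeling order makes $\subm{H}{T}{S}$ triangular, so back-substitution on each of the $|A|$ columns of $W$ costs $O(|E|)$ per column, for a total of $O(|E||A|)$. Similarly, $H\hat X_A$ costs $O(|E||A|)$, and the $\hat X_B$ solve costs $O(|E||B|)$. In parallel, back-substitution is carried out round by round, following the peeling rounds. This gives depth $O(\frac{\log|E|}{\log\frac{1}{2\delta}}(\log\dax+\log\dbx))$ and $O(|E|(|A|+|B|)\frac{\log|E|}{\log\frac{1}{2\delta}})$ gates. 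Summing the four steps gives the bounds in Proposition \ref{prop:linear_time}.
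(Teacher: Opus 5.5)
Your proposal is correct and follows the paper's proof essentially step for step. It uses the same four-stage decomposition, the layered counter/threshold computation of $\Ac_\ell(\rs(W))$ and $\Bc_\ell(\cs(W))$, round-based peeling along the optimal peeling sequence with edge-decay ratio $2\delta$ from unique-neighbor expansion (Lemmas \ref{lemma:lossless_implies_unique_neighbor} and \ref{lemma:optimal_peeling_sequence_in_a_unique_neighbor_expander}), and blockwise substitution with identity diagonal blocks. Two small differences in detail. The paper spells out how each vertex deterministically picks one edge per round, via a cumulative-OR circuit of depth $O(\log\dax)$ (Lemma \ref{lemma:explicit_circuit}); you leave that choice as ``arbitrary''. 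Conversely, you make explicit the absorption of the $O(\log|A|+\log|B|)$ depth terms, which the paper leaves implicit.

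There is one slip in your sequential peeling. The removal order already satisfies Definition \ref{def:peeling_of_a_set}, and reversing it would violate that definition. This does not affect any of the bounds, because only the set $\Bp$ and the triangular structure of $\subm{H}{\Bp(S)}{S}$ are used.
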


\begin{proof}
The row and column support of the syndrome can be computed in time $O(|A||B|)$ in the uniform cost model or by a circuit of depth $O(\log|A|+\log|B|)$ with $O(|A||B|)$ gates (Lemma \ref{lemma:computation_of_the_row_and_column_support_of_the_syndrome}). The sets $\Ac_\ell(\rs(W))$ and $\Bc_\ell(\cs(W))$ can be computed in time $O(|E|)$ or by a circuit with depth $O(\ell(\log\dax+\log\dbx))$ and with $O(\ell|E|)$ gates (Lemma \ref{lemma:computation_of_sequences}).The peelings of the sets $\Ac_\ell(\rs(W))$ and $\Bc_\ell(\cs(W))$ can be computed in time $O(|E|)$ or by a circuit of depth $O(\frac{\log|E|}{\log\frac{1}{2\delta}}(\log\dax+\log\dbx))$ and with $O(|E|\frac{\log|E|}{\log\frac{1}{2\delta}}(\dax+\dbx))$ gates (Lemma \ref{lemma:computation_of_peelings}). The corrections $\hat{X}_A$ and $\hat{X}_B$ can be computed in time $O(|E|(|A|+|B|)$ or by a circuit with depth $O(\frac{\log|E|}{\log\frac{1}{2\delta}}(\log\dax+\log\dbx))$ and with $O\left(|E|(|A|+|B|)\frac{\log|E|}{\log\frac{1}{2\delta}}\right)$ gates (Lemma \ref{lemma:applying_the_inverse_of_a_submatrix}).

The details of the proof of Proposition \ref{prop:linear_time} are in Section \ref{sec:proof_of_linear_time}.
\end{proof}

\subsection{Details of the proof of Proposition \ref{prop:execution_is_contained}: the execution of the algorithm is contained in a small region}\label{sec:proof_that_execution_is_contained}

\subsubsection{The error is contained in a small region}\label{sec:the_error_is_contained}

The sets $C,D$ in Proposition \ref{prop:execution_is_contained} are chosen to contain the error in the following sense: 

\begin{definition}\label{def:cd_contained}
For $X_A \in \F^{A \times A}$, $X_B \in \F^{B \times B}$, $C\subset A$, $D \subset B$, say that $(X_A,X_B)$ is $(C,D)$ contained if the row support of $X_A$ is a subset of $C$, the column support of $X_A$ is a subset of $\A(D)$, the row support of $X_B$ is a subset of $\B(C)$ and the column support of $X_B$ is a subset of $D$. 
\end{definition}

The size of the container for a small error can be small: 

\begin{lemma}\label{lemma:size_of_container_for_an_error}
For all $(X_A,X_B)$ there exist $(C,D)$ such that $(X_A,X_B)$ is $(C,D)$ contained and each of $|C|$, $|D|$ is at most $|X_A|+|X_B|$. 
\end{lemma}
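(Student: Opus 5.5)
The plan is to build $C$ and $D$ directly from the supports of $X_A$ and $X_B$. Each of the four containment conditions in Definition \ref{def:cd_contained} will be forced by putting either a support element itself, or one chosen neighbor of it, into the appropriate set. The size bound will then follow because a nonzero row or column of a matrix contains at least one nonzero entry.

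Concretely, assume every vertex of $G$ has at least one neighbor. This is implicit in the setting, since an isolated vertex would make some containment conditions unsatisfiable; in the proof I will state this assumption explicitly.

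\begin{itemize}
\item For each $b \in \rs(X_B)$, fix one neighbor $a(b) \in \A(b)$.
\item For each $a \in \cs(X_A)$, fix one neighbor $b(a) \in \B(a)$.
\item Define
\begin{align}
C &= \rs(X_A) \cup \{a(b) : b \in \rs(X_B)\}, \\
D &= \cs(X_B) \cup \{b(a) : a \in \cs(X_A)\}.
\end{align}
\end{itemize}

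Checking containment is then immediate.
\begin{itemize}
\item $\rs(X_A) \subset C$ and $\cs(X_B) \subset D$ hold by construction.
\item Every $a \in \cs(X_A)$ is adjacent to $b(a) \in D$, so $\cs(X_A) \subset \A(D)$.
\item Every $b \in \rs(X_B)$ is adjacent to $a(b) \in C$, so $\rs(X_B) \subset \B(C)$.
\end{itemize}

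For the size bound, $|C| \le |\rs(X_A)| + |\rs(X_B)|$. Each nonzero row contains at least one nonzero entry, so $|\rs(X_A)| \le |X_A|$ and $|\rs(X_B)| \le |X_B|$. Hence $|C| \le |X_A| + |X_B|$. Symmetrically, $|D| \le |\cs(X_B)| + |\cs(X_A)| \le |X_B| + |X_A|$.

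There is no real obstacle here. The only delicate point is the isolated-vertex issue, which I will handle by the stated assumption. An alternative would be to note that for an isolated vertex the corresponding rows or columns contribute nothing to the syndrome, but that would require passing to a stabilizer-equivalent or syndrome-equivalent error, which the lemma as stated does not allow.
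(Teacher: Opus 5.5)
Your proposal is correct and is the paper's construction: $C$ is $\rs(X_A)$ plus one chosen neighbor of each nonzero row of $X_B$, $D$ is built symmetrically from $\cs(X_B)$ and one neighbor of each nonzero column of $X_A$, and the size bound holds because a row or column support is no larger than the Hamming weight. Your explicit no-isolated-vertex assumption is one the paper uses tacitly, since its step ``add one of its neighbors'' requires it, so flagging it is a small improvement in rigor.
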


\begin{proof}
To build $C$, first add the row support of $X_A$; then, for each non-zero row of $X_B$, add one of its neighbors. The argument for $D$ is similar. 
\end{proof}

The container for an error is also a container for the syndrome:

\begin{lemma}\label{lemma:container_for_an_error_is_container_for_the_syndrome}
If $(X_A,X_B)$ is $(C,D)$ contained, then the row support of $HX_A+X_BH$ is a subset of $\B(C)$, and the column support of $HX_A+X_BH$ is a subset of $\A(D)$. 
\end{lemma}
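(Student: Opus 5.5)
The plan is to bound the row and column supports of the two summands $HX_A$ and $X_BH$ separately, and then use subadditivity of supports: for any matrices $M,N$ of the same shape, $\rs(M+N)\subset\rs(M)\cup\rs(N)$ and $\cs(M+N)\subset\cs(M)\cup\cs(N)$. Indeed, a zero row (or column) in both $M$ and $N$ is also zero in $M+N$.

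For the row support we treat the two terms as follows.
\begin{itemize}
\item Row $b$ of $HX_A$ equals $\sum_{a\in\A(b)} (X_A)_{a,\cdot}$, the sum of the rows of $X_A$ indexed by the neighbors of $b$. If this row is nonzero, some $a\in\A(b)$ has a nonzero row of $X_A$. Then $a\in\rs(X_A)\subset C$, so $b\in\B(C)$. Hence $\rs(HX_A)\subset\B(\rs(X_A))\subset\B(C)$.
\item Left multiplication by $H$ only acts on rows of $X_B$ from the right, so it cannot create new nonzero rows: row $b$ of $X_BH$ is row $b$ of $X_B$ multiplied by $H$. Thus $\rs(X_BH)\subset\rs(X_B)\subset\B(C)$ by $(C,D)$ containment.
\end{itemize}

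For the column support the argument is symmetric.
\begin{itemize}
\item Column $a$ of $HX_A$ is $H$ applied to column $a$ of $X_A$, so $\cs(HX_A)\subset\cs(X_A)\subset\A(D)$.
\item Column $a$ of $X_BH$ equals $\sum_{b\in\B(a)} (X_B)_{\cdot,b}$. If it is nonzero, some neighbor $b$ of $a$ lies in $\cs(X_B)\subset D$, so $a\in\A(D)$. Hence $\cs(X_BH)\subset\A(\cs(X_B))\subset\A(D)$.
\end{itemize}

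Combining these with subadditivity gives $\rs(HX_A+X_BH)\subset\B(C)$ and $\cs(HX_A+X_BH)\subset\A(D)$. Each step is a direct unwinding of matrix multiplication over $\F$ and of Definition~\ref{def:cd_contained}. The only point requiring care is the index bookkeeping: $H\in\F^{B\times A}$, with entry $H_{b,a}=1$ exactly when $(a,b)\in E$. This matters in identifying which neighborhood, $\A(\cdot)$ or $\B(\cdot)$, appears in each bullet. No genuine obstacle is expected.
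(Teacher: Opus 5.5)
Your proof is correct and is essentially the paper's argument (a nonzero row $b$ of the syndrome forces either a nonzero row $b$ of $X_B$ or a neighbor $a\in\A(b)$ with a nonzero row of $X_A$, and symmetrically for columns), just organized term by term via subadditivity of supports. One small slip: in your second bullet, ``left multiplication by $H$'' should read ``right multiplication by $H$''; the formula you give there, row $b$ of $X_BH$ equals row $b$ of $X_B$ times $H$, is correct.
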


\begin{proof}
Take $b\in B$ such that row $b$ of $HX_A+X_BH$ is not zero. Then either row $b$ of $X_B$ is not zero, or $b$ is a neighbor to $a\in A$ such that row $a$ of $X_A$ is not zero. In both cases, $b\in \B(C)$. The argument for columns is similar. 
\end{proof}

\subsubsection{Upper bounds on the sequences $\{\Gc_i(T)\}_{i=0}^\infty$ and $\{\Gc_i(S)\}_{i=0}^\infty$}\label{sec:upper_bounds_on_decoder_sequence}

The sets $\Acb(C)$ and $\Bcb(D)$ in Proposition \ref{prop:execution_is_contained} are constructed using the following sequences of subgraphs: 

\begin{definition}\label{def:upper_bound_sequence}
For $\beta \in [0,1)$, $C \subset A$, and $i=0,1,\dots$, let
\begin{align}
\Gcb_i(C)=(\Acb_i(C)\cup\Bcb_i(C),E\cap(\Acb_i(C)\times\Bcb_i(C)))
\end{align}
be the sequence of subgraphs given by 
\begin{align}
\Acb_0(C) &= C \\
\Bcb_0(C) &= \B(C) \\
\Acb_{i+1}(C) &= \Ab(\Bcb_i(C))\\
\Bcb_{i+1}(C) &= \B(\Acb_{i+1}(C)) 
\end{align}
and let
\begin{align}
\Acb(C) &= \cup_i \Acb_i(C) \label{eq:bar_mathcal_a_beta_of_c}\\
\Bcb(C) &= \cup_i \Bcb_i(C) \label{eq:bar_mathcal_b_beta_of_c}\\
\Gcb(C) &=(\Acb(C)\cup\Bcb(C),E\cap(\Acb(C)\times\Bcb(C)))
\end{align}

Similarly, for $D \subset B$, $i=0,1,\dots$, let
\begin{align}
\Gcb_i(D)=(\Acb_i(D)\cup\Bcb_i(D),E\cap(\Acb_i(D)\cap\Bcb_i(D))) 
\end{align}
be the sequence of subgraphs given by
\begin{align}
\Bcb_0(D)&=D\\
\Acb_0(D)&=\A(D)\\
\Bcb_{i+1}(D)&=\Bb(\Acb_i(D))\\
\Acb_{i+1}(D)&=\A(\Bcb_{i+1}(D))
\end{align}
and let
\begin{align}
\Acb(D)&=\cup_i\Acb_i(D)\label{eq:bar_mathcal_a_beta_of_d}\\
\Bcb(D)&=\cup_i\Bcb_i(D)\label{eq:bar_mathcal_b_beta_of_d}\\
\Gcb(D)&=(\Acb(D)\cup\Bcb(D),E\cap(\Acb(D)\times\Bcb(D)))
\end{align}
\end{definition}

The sequences $\{\Gcb_i(C)\}_{i=0}^\infty$ and $\{\Gcb_i(D)\}_{i=0}^\infty$ are increasing and bounded, so they become constant after a finite number of terms. Indeed, all edges from $\Acb_i(C)$ go to $\Bcb_i(C)$, so $\Acb_i(C)$ is a subset of $\Ab(\Bcb_i(C))=\Acb_{i+1}(C)$ and consequently also $\Bcb_i(C)$ is a subset of $\Bcb_{i+1}(C)$. The argument for $\Gcb_i(D)$ is similar. 

The sequences $\Gcb_i(C)$ and $\Gcb_i(D)$ can be used as upper bounds on the sequences $\Gc_i(T)$ and $\Gc_i(S)$; moreover, in some cases, the bounds are tight:

\begin{lemma}\label{lemma:upper_bound_sequence_and_find_erasures_sequence}
For a bipartite graph $G=(A \cup B,E)$, $C,S \subset A$, $D, T \subset B$ and $\beta \in [0,1)$:
\begin{enumerate}
\item If $T\subset\B(C)$ then for all $i$, $\Ac_i(T)\subset\Acb_i(C)$ and $\Bc_i(T)\subset\Bcb_i(C)$. 
\item If there is a $k$ such that $C\subset \Ac_k(T)$ then for all $i$, $\Acb_i(C)\subset \Ac_{i+k}(T)$ and $\Bcb_{i}(C) \subset \Bc_{i+k}(T)$. 
\item If $S\subset\A(D)$ then for all $i$, $\Ac_i(S)\subset\Acb_i(D)$ and $\Bc_i(S)\subset\Bcb_i(D)$. 
\item If there is a $k$ such that $D\subset \Bc_k(S)$ then for all $i$, $\Acb_i(D) \subset \Ac_{i+k}(S)$ and $\Bcb_i(D) \subset \Bc_{i+k}(S)$. 
\end{enumerate}
\end{lemma}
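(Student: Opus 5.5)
Each of the four parts is proved by induction on $i$, using only the recursive definitions (Definitions \ref{def:decoder_sequence} and \ref{def:upper_bound_sequence}) and the monotonicity of $\Ab$, $\Bb$ and the ordinary neighborhoods $\A$, $\B$ (Lemma \ref{lemma:monotonicity_of_neighborhood} with $\beta=0$ for the latter). Parts 3 and 4 are the mirror images of parts 1 and 2, with the roles of $A$ and $B$ exchanged. I will therefore write out parts 1 and 2 and state that 3 and 4 follow by the symmetric argument.

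For part 1 the base case is $i=0$. Here $\Ac_0(T)=\emptyset\subset C=\Acb_0(C)$, and $\Bc_0(T)=T\subset\B(C)=\Bcb_0(C)$ by hypothesis. For the inductive step, assume $\Ac_i(T)\subset\Acb_i(C)$ and $\Bc_i(T)\subset\Bcb_i(C)$. Monotonicity of $\Ab$ gives
\[
\Ac_{i+1}(T)=\Ab(\Bc_i(T))\subset\Ab(\Bcb_i(C))=\Acb_{i+1}(C).
\]
For the $B$ side, $\Bc_{i+1}(T)=\Bc_i(T)\cup\B(\Ac_{i+1}(T))$. The first term lies in $\Bcb_i(C)$, and $\Bcb_i(C)\subset\Bcb_{i+1}(C)$ by the remark after Definition \ref{def:upper_bound_sequence} that the barred sequence is increasing. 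The second term lies in $\B(\Acb_{i+1}(C))=\Bcb_{i+1}(C)$ by monotonicity of $\B$. This closes the induction.

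For part 2, again induct on $i$. For the base case, $\Acb_0(C)=C\subset\Ac_k(T)$ by hypothesis. Then $\Bcb_0(C)=\B(C)\subset\B(\Ac_k(T))$, so I need $\B(\Ac_k(T))\subset\Bc_k(T)$. I would isolate this as the auxiliary claim $\B(\Ac_j(T))\subset\Bc_j(T)$ for all $j$. It holds for $j=0$ because $\Ac_0(T)=\emptyset$. For $j\geq 1$, note that $\Bc_j(T)\supset\B(\Ac_j(T))$ directly from the recursion $\Bc_j(T)=\Bc_{j-1}(T)\cup\B(\Ac_j(T))$. For the inductive step, assume $\Acb_i(C)\subset\Ac_{i+k}(T)$ and $\Bcb_i(C)\subset\Bc_{i+k}(T)$. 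Then
\[
\Acb_{i+1}(C)=\Ab(\Bcb_i(C))\subset\Ab(\Bc_{i+k}(T))=\Ac_{i+k+1}(T),
\]
and
\[
\Bcb_{i+1}(C)=\B(\Acb_{i+1}(C))\subset\B(\Ac_{i+k+1}(T))\subset\Bc_{i+k+1}(T),
\]
where the last inclusion is the auxiliary claim.

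The only step needing care is this auxiliary claim, together with the observation that the two sequences update in different orders. The barred sequence recomputes $\Bcb$ as the full neighborhood of $\Acb$. The unbarred sequence accumulates $\Bc$ as a union, so $\Bc_j(T)$ always contains $\B(\Ac_j(T))$, including in the degenerate case $j=0$. Once that is checked, parts 3 and 4 follow verbatim: swap $A$ and $B$, replace $\Ab$ with $\Bb$ and $\B$ with $\A$, and use $\Bc_0(S)=\emptyset$ and $\A(\Bc_j(S))\subset\Ac_j(S)$.
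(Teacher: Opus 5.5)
Your proposal is correct and follows the paper's proof essentially line for line. You use induction on $i$ for parts 1 and 2 with the same base cases and inductive steps, and parts 3 and 4 by the $A\leftrightarrow B$ symmetry, while making explicit two facts the paper uses silently: that $\Bcb_i(C)\subset\Bcb_{i+1}(C)$ (needed for $\Bcb_i(C)\cup\B(\Acb_{i+1}(C))=\Bcb_{i+1}(C)$) and that $\B(\Ac_j(T))\subset\Bc_j(T)$ for all $j$.
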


\begin{proof}
The proof of part 1 is by induction on $i$. The base case $i=0$ holds: 
\begin{align}
\Ac_0(T)=\emptyset & \subset C = \Acb_0(C) \\
\Bc_0(T)=T &\subset \B(C) = \Bcb_0(C)
\end{align}
If, for some $i \geq 0$, $\Ac_i(T)\subset \Acb_i(C)$ and $\Bc_i(T) \subset \Bcb_i(C)$, then
\begin{align}
\Ac_{i+1}(T) = \Ab(\Bc_i(T)) &\subset \Ab(\Bcb_i(C)) = \Acb_{i+1}(C)
\end{align}
and
\begin{align}
\Bc_{i+1}(T) &= \Bc_i(T) \cup \B(\Ac_{i+1}(T)) \\
&\subset \Bcb_i(C) \cup \B(\Acb_{i+1}(C)) = \Bcb_{i+1}(C)
\end{align}
which completes the inductive step. 

The proof of part 2 is also by induction on $i$. The base case $i=0$ holds:
\begin{align}
\Acb_0(C) =C &\subset \Ac_k(T) \\
\Bcb_0(C)=\B(C) \subset \B(\Ac_k(T))&\subset \Bc_k(T)
\end{align}
If, for some $i \geq 0$, $\Acb_i(C) \subset \Ac_{i+k}(T)$ and $\Bcb_i(C) \subset \Bc_{i+k}(T)$, then
\begin{align}
\Acb_{i+1}(C) = \Ab(\Bcb_i(C)) &\subset \Ab(\Bc_{i+k}(T)) = \Ac_{i+1+k}(T) \\
\Bcb_{i+1}(C) = \B(\Acb_{i+1}(C)) &\subset \B(\Ac_{i+1+k}(T)) \subset \Bc_{i+1+k}(T)
\end{align}
which completes the inductive step.

The proofs of parts 3 and 4 are similar. 
\end{proof}

In the sequence $\Gcb_i(C)$, the growth of the number of vertices on the right is at most a $(1-\beta)$ fraction of the growth of the number of edges, and, similarly, for $\Gcb_i(D)$, the growth of the number of vertices on the left is at most a $(1-\beta)$ fraction of the growth of the number of edges: 

\begin{lemma}\label{lemma:growth_of_edges_and_growth_of_vertices_for_upper_bound_sequence}
For a bipartite graph $G=(A \cup B,E)$, $C \subset A$, $D \subset B$ and $\beta\in[0,1)$:
\begin{enumerate}
\item For every $i$ and every set $C'$ with $\Acb_i(C) \subset C' \subset \Acb_{i+1}(C)$, 
\begin{align}\label{eq:growth_of_edges_and_growth_of_vertices_for_c}
|\B(C')|-|\B(C)| < (1-\beta) (|\E(C')|-|\E(C)|)
\end{align}
\item For every $i$ and every set $D'$ with $\Bcb_i(D) \subset D' \subset \Bcb_{i+1}(D)$, 
\begin{align}\label{eq:growth_of_edges_and_growth_of_vertices_for_d}
|\A(D')|-|\A(D)| < (1-\beta)(|\E(D')|-|\E(D)|)
\end{align}
\end{enumerate}
\end{lemma}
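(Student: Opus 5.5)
The plan is to telescope: add the vertices of $C'\backslash C$ one at a time and bound how many new right-neighbours each one contributes. Every vertex $a$ that enters the sequence $\Acb_j(C)$ at level $j+1$ lies in $\Ab(\Bcb_j(C))$. So more than a $\beta$ fraction of its edges already land in $\Bcb_j(C)=\B(\Acb_j(C))$, and at most a $(1-\beta)$ fraction can lead to new vertices on the right. Summing over the added vertices will give \eqref{eq:growth_of_edges_and_growth_of_vertices_for_c}. Part 2 follows from the symmetric argument with the roles of $A$ and $B$ exchanged, using $\Bcb_{j+1}(D)=\Bb(\Acb_j(D))$ and $\Acb_j(D)=\A(\Bcb_j(D))$.

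\emph{Ordering the new vertices.} For $a\in C'\backslash C$, let its level be the smallest $j\geq 0$ with $a\in\Acb_{j+1}(C)$. Such a $j\leq i$ exists because $C'\subset\Acb_{i+1}(C)$ and the sets $\Acb_j(C)$ increase, as remarked after Definition \ref{def:upper_bound_sequence}. List the elements of $C'\backslash C$ as $a_1,\dots,a_m$ in nondecreasing order of level, and set $C_0=C$ and $C_k=C\cup\{a_1,\dots,a_k\}$, so that $C_m=C'$.

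\emph{Key containment.} If $a_k$ has level $j$, then $\Acb_j(C)\subset C_{k-1}$. Indeed, every vertex of $\Acb_j(C)\backslash C$ has level at most $j-1$, so it was listed before $a_k$. It also lies in $C'$, because $\Acb_j(C)\subset\Acb_i(C)\subset C'$ for $j\leq i$. Hence $\B(C_{k-1})\supset\B(\Acb_j(C))=\Bcb_j(C)$, where for $j=0$ this reads $\B(C)=\Bcb_0(C)$. This is the step that needs care, since the ordering is exactly what ensures all earlier levels are already present.

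\emph{Counting.} Since $E\subset A\times B$, the graph has no multi-edges, so $|\B(a)|=|\E(a)|$. Because $a_k\in\Ab(\Bcb_j(C))$, we get
\begin{align}
|\B(C_k)|-|\B(C_{k-1})| \leq |\B(a_k)\backslash\Bcb_j(C)| = |\E(a_k)|-|\B(a_k)\cap\Bcb_j(C)| < (1-\beta)|\E(a_k)|.
\end{align}
Since $a_k\notin C_{k-1}$, we also have $|\E(C_k)|-|\E(C_{k-1})|=|\E(a_k)|$. Summing over $k=1,\dots,m$ telescopes to $|\B(C')|-|\B(C)|<(1-\beta)(|\E(C')|-|\E(C)|)$. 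The inequality is strict as soon as $m\geq 1$.

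\emph{Degenerate case.} The only obstacle I foresee is the case $C'=C$, which can occur when $i=0$. There both sides equal $0$ and the strict inequality fails. I would therefore state the lemma for $C'\neq C$, or with $\leq$ in general and strict inequality when $C'\supsetneq C$. This is harmless for later use, since only the nontrivial growth matters in bounding $|\E(\Acb(C))|$.
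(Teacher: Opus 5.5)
Your proof is correct and is essentially the paper's argument. The paper inducts on $i$: it bounds $|\B(C')|-|\B(\Acb_{i}(C))|$ for $\Acb_{i}(C)\subset C'\subset\Acb_{i+1}(C)$ by the sum $\sum_{c\in C'\backslash\Acb_{i}(C)}|\B(c)\backslash\B(\Acb_{i}(C))|$ and adds the inductive hypothesis at $\Acb_{i}(C)$, whereas you telescope one vertex at a time in level order, using the same per-vertex estimate $|\B(a)\backslash\Bcb_j(C)|<(1-\beta)|\E(a)|$.

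Your caveat is valid and applies to the paper's proof too. Its empty sum gives $0<0$ whenever $C'=C$, which can happen for every $i$ (not just $i=0$) when $\Acb_1(C)\subset C$. The non-strict version is enough for Lemma \ref{lemma:upper_bound_on_bar_a_of_c_and_bar_b_of_d}, where the strictness comes from the expansion inequality.
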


\begin{proof}
The proof of part 1 is by induction on $i$. The base case $i=0$ holds: if $C \subset C' \subset \Acb_1(C)$ then
\begin{align}
&|\B(C')|-|\B(C)| \\
&\leq \sum_{c \in C' \backslash C} |\B(c)\backslash\B(C)| \\
& < \sum_{c \in C' \backslash C} (1-\beta) |\E(c)| \\
& = (1-\beta) (|\E(C')|-|\E(C)|)
\end{align}
Suppose \eqref{eq:growth_of_edges_and_growth_of_vertices_for_c} holds for any $C'$ with $\Acb_i(C) \subset C' \subset \Acb_{i+1}(C)$; in particular,  
\begin{align}
|\B(\Acb_{i+1})(C)| -|\B(C)| < (1-\beta)(|\E(\Acb_{i+1}(C))|-|\E(C)|)
\end{align}
Now, take any $C'$ with $\Acb_{i+1}(C) \subset C' \subset \Acb_{i+2}(C)$. By the same reasoning as in the base case,
\begin{align}
|\B(C')|-|\B(\Acb_{i+1}(C))|<(1-\beta)(|\E(C')|-|\E(\Acb_{i+1}(C))|)
\end{align}
Adding this and the previous inequality completes the inductive step. 

The proof of part 2 is similar. 
\end{proof}

Combined with the expansion of the graph and the upper bound on the size of $C$ and $D$, Lemma \ref{lemma:growth_of_edges_and_growth_of_vertices_for_upper_bound_sequence} implies an upper bound on the size of $\E(\Acb(C))$ and $\E(\Bcb(D))$: 

\begin{lemma}\label{lemma:upper_bound_on_bar_a_of_c_and_bar_b_of_d}
Let $G=(A \cup B, E)$ be a bipartite $(\gamma_A,\gamma_B,\delta_A,\delta_B)$ two-sided lossless expander with maximum left and right degrees $\dax,\dbx$. Let $\delta=\max(\delta_A,\delta_B)$ and $\beta > \delta$. 
\begin{enumerate} 
\item If $C\subset A$ and $|\E(C)|\leq \frac{\beta-\delta}{\beta}(\gamma_A|E|-\dax+1)$, then $|\E(\Acb(C))|< \gamma_A|E|$. 
\item If $D\subset B$ and $|\E(D)| \leq \frac{\beta-\delta}{\beta}(\gamma_B|E|-\dbx+1)$, then $|\E(\Bcb(D))| < \gamma_B|E|$. 
\end{enumerate}
\end{lemma}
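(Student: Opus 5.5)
The plan is to argue by contradiction along a chain of sets that grows one vertex at a time, using the edge-growth bound of Lemma \ref{lemma:growth_of_edges_and_growth_of_vertices_for_upper_bound_sequence} against the lossless expansion \eqref{eq:lossless_expander_a}. I treat part 1; part 2 is symmetric, using part 2 of that lemma and \eqref{eq:lossless_expander_b}. If $\E(C)=\emptyset$, the sequence stays trivial except for isolated vertices, and the claim is immediate. So I assume $|\E(C)|>0$.

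\emph{Step 1: the chain.} The sequence $\Acb_0(C)\subset\Acb_1(C)\subset\cdots$ is increasing and stabilizes at $\Acb(C)$. I refine it to a chain $C=C_0\subset C_1\subset\cdots\subset C_m=\Acb(C)$, where each $C_{j+1}$ adds one vertex to $C_j$. Vertices are added in the order of the $\Acb_i(C)$: first all of $\Acb_1(C)\backslash C$, then $\Acb_2(C)\backslash\Acb_1(C)$, and so on. Then every $C_j$ lies between some $\Acb_i(C)$ and $\Acb_{i+1}(C)$. Hence Lemma \ref{lemma:growth_of_edges_and_growth_of_vertices_for_upper_bound_sequence} gives, for every $j$,
\begin{align}
|\B(C_j)| < |\B(C)|+(1-\beta)(|\E(C_j)|-|\E(C)|)\leq \beta|\E(C)|+(1-\beta)|\E(C_j)|,
\end{align}
where the last step uses $|\B(C)|\le|\E(C)|$. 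This is strict when $C_j\ne C$; when $C_j=C$ a non-strict version holds trivially.

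\emph{Step 2: choose the test set.} Let $C'$ be the last $C_j$ with $|\E(C_j)|\le\gamma_A|E|$. This is well defined, since $|\E(C)|\le\gamma_A|E|$ follows from the hypothesis and $\beta>\delta$. There are two cases. If $C'=\Acb(C)$, I aim to show $|\E(C')|<\gamma_A|E|$ directly. Otherwise the next set exceeds $\gamma_A|E|$. Since one vertex adds at most $\dax$ edges, this gives $|\E(C')|>\gamma_A|E|-\dax$, and by integrality $|\E(C')|\ge\gamma_A|E|-\dax+1$.

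\emph{Step 3: combine with expansion.} Since $0<|\E(C')|\le\gamma_A|E|$, \eqref{eq:lossless_expander_a} gives $|\B(C')|>(1-\delta_A)|\E(C')|\ge(1-\delta)|\E(C')|$. Combining this with Step 1 yields $(\beta-\delta)|\E(C')|<\beta|\E(C)|$. Therefore
\begin{align}
|\E(C')|<\tfrac{\beta}{\beta-\delta}|\E(C)|\le\gamma_A|E|-\dax+1.
\end{align}
In the second case of Step 2 this contradicts $|\E(C')|\ge\gamma_A|E|-\dax+1$. So the first case holds, and there $|\E(\Acb(C))|<\gamma_A|E|-\dax+1\le\gamma_A|E|$, since $\dax\ge1$.

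The main obstacle I expect is the bookkeeping in Steps 1 and 2. The chain must be chosen so that Lemma \ref{lemma:growth_of_edges_and_growth_of_vertices_for_upper_bound_sequence} applies to every intermediate set. The test set must have edge count in $(0,\gamma_A|E|]$ so that expansion applies. The degree bound $\dax$ is exactly what converts ``just below the threshold'' into the $-\dax+1$ slack in the hypothesis.
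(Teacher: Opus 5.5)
Your proof is essentially the paper's: the paper also argues by contradiction at the index where $|\E(\Acb_i(C))|$ crosses $\gamma_A|E|$, picks an intermediate $C'$ just below the threshold, and combines Lemma~\ref{lemma:growth_of_edges_and_growth_of_vertices_for_upper_bound_sequence}, expansion and $|\B(C)|\le|\E(C)|$ to get $(\beta-\delta)|\E(C')|<\beta|\E(C)|$. One caveat, which the paper shares when it simply asserts that a $C'$ with $\gamma_A|E|-\dax+1\le|\E(C')|\le\gamma_A|E|$ exists: your ``by integrality'' step only gives $|\E(C')|\ge\lfloor\gamma_A|E|\rfloor-\dax+1$, so the argument as written needs $\gamma_A|E|$ to be an integer.
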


\begin{proof}
Suppose for a contradiction that for some $i$, 
\begin{align}
|\E(\Acb_i(C))| < \gamma_A|E|\leq|\E(\Acb_{i+1}(C))|
\end{align}
Take $C'$ such that 
\begin{align}
\Acb_i(C) \subset C' \subset \Acb_{i+1}(C)\\
\gamma_A|E| - \dax + 1 \leq |\E(C')|\leq\gamma_A|E|
\end{align}
Lemma \ref{lemma:growth_of_edges_and_growth_of_vertices_for_upper_bound_sequence} and the properties of the graph imply that
\begin{align}
|\B(C')|-|\B(C)| &< (1-\beta)(|\E(C')|-|\E(C)|) \\
|\B(C')| &> (1-\delta)|\E(C')| \\
|\B(C)| &\leq |\E(C)| 
\end{align}
This system of inequlities implies that $(\beta-\delta)|\E(C')|<\beta|\E(C)|$, a contradiction. 

The proof of part 2 is similar. 
\end{proof}

\subsection{Details of the proof of Proposition \ref{prop:output_is_correct}: the output is correct}\label{sec:proof_that_output_is_correct}

\subsubsection{Unique neighbor expanders}\label{sec:unique_neighbor_expanders}

The next steps of the argument use unique neighbor expanders. In such a graph, small sets have many unique neighbors:

\begin{definition}
Let $G=(A \cup B, E)$ be a bipartite graph. For $S \subset A$, let $\Bu(S)$ denote the set of vertices in $B$ that are connected to $S$ by exactly one edge. Similarly, for $T \subset B$, let $\Au(T)$ denote the set of vertices in $A$ that are connected to $T$ by exactly one edge. Let $\gamma_A,\gamma_B,\delta_A,\delta_B \in [0,1]$. $G$ is a $(\gamma_A,\gamma_B,\delta_A,\delta_B)$ two-sided unique neighbor expander if 
\begin{align}
S \subset A, 0<|\E(S)| \leq \gamma_A |E| &\implies |\Bu(S)| > (1-\delta_A) |\E(S)| \label{eq:unique_neighbor_expander_a}\\
T \subset B, 0<|\E(T)| \leq \gamma_B |E| &\implies |\Au(T)| > (1-\delta_B) |\E(T)| \label{eq:unique_neighbor_expander_b}
\end{align}
\end{definition}

Lossless expansion implies unique neighbor expansion: 

\begin{lemma}\label{lemma:lossless_implies_unique_neighbor}
If $G$ is a $(\gamma_A,\gamma_B,\delta_A,\delta_B)$ two-sided lossless expander with $\delta_A,\delta_B \leq 1/2$, then $G$ is also a $(\gamma_A,\gamma_B,2\delta_A,2\delta_B)$ two-sided unique neighbor expander. 
\end{lemma}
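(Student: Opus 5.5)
The plan is the standard edge-counting argument, carried out for the $A$ side; the $B$ side is symmetric with the roles of $A$ and $B$ exchanged.

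Fix $S\subset A$ with $0<|\E(S)|\leq\gamma_A|E|$. Every edge leaving $S$ ends at a vertex of $\B(S)$. Split $\B(S)$ into two parts: the unique neighbors $\Bu(S)$, each of which receives exactly one edge from $S$, and the remaining vertices $\B(S)\backslash\Bu(S)$, each of which receives at least two edges from $S$. Counting edges gives
\begin{align}
|\E(S)| \geq |\Bu(S)| + 2\left(|\B(S)|-|\Bu(S)|\right) = 2|\B(S)|-|\Bu(S)|,
\end{align}
and therefore $|\Bu(S)|\geq 2|\B(S)|-|\E(S)|$.

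Since $S$ satisfies the antecedent of \eqref{eq:lossless_expander_a}, lossless expansion gives $|\B(S)|>(1-\delta_A)|\E(S)|$. Substituting this into the previous bound yields
\begin{align}
|\Bu(S)| > 2(1-\delta_A)|\E(S)|-|\E(S)| = (1-2\delta_A)|\E(S)|.
\end{align}
This is exactly \eqref{eq:unique_neighbor_expander_a} with parameter $2\delta_A$. The strict inequality carries over because we only added a strict inequality to a non-strict one. The hypothesis $\delta_A\leq 1/2$ is used only to ensure that $2\delta_A\in[0,1]$, so that the parameters of the unique neighbor expander fall in the range required by its definition.

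For a set $T\subset B$, repeat the same count with the roles of $A$ and $B$ swapped, using \eqref{eq:lossless_expander_b}. This gives \eqref{eq:unique_neighbor_expander_b} with parameter $2\delta_B$. The only step that needs any care is the edge-counting inequality, and that is elementary, so I do not expect a real obstacle anywhere in the argument.
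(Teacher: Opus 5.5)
Your proposal is correct and follows essentially the same route as the paper. The paper also counts the edges of $\E(S)$ by their endpoints in $\B(S)$ to get $|\E(S)|\geq 2|\B(S)|-|\Bu(S)|$, then applies lossless expansion and rearranges to obtain $|\Bu(S)|>(1-2\delta_A)|\E(S)|$.
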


\begin{proof}
Take any $S\subset A$ with $0<|\E(S)|\leq\gamma_A|E|$. Count the edges in $\E(S)$ according to the vertices in $\B(S)$, then rearrange and apply lossless expansion:
\begin{align}
|\E(S)| &\geq |\Bu(S)| + 2|\B(S)\backslash\Bu(S)| \\
& = 2 |\B(S)| - |\Bu(S)| \\
& > 2(1-\delta_A) |\E(S)| - |\Bu(S)|
\end{align}
The first part follows. The second part is similar. 
\end{proof}

Moreover, in a unique neighbor expander, the following holds:

\begin{lemma}\label{lemma:less_than_beta_fraction_unique_neighbors}
Let $G=(A \cup B, E)$ be a $(\gamma_A,\gamma_B,\delta_A,\delta_B)$ two-sided unique neighbor expander. Let $\beta \in [0,1)$. Let $S \subset A$ and $T \subset B$ be non-empty and such that $|\E(S)| \leq \gamma_A |E|$ and $|\E(T)| \leq \gamma_B |E|$. Let
\begin{align}
S' &= \{s \in S : |\B(s)\cap\Bu(S)|\leq\beta |\E(s)|\} \\
T' &= \{t \in T : |\A(t)\cap\Au(T)|\leq\beta |\E(t)|\}
\end{align}
Then,
\begin{align}
|\E(S')| &< \frac{\delta_A}{1-\beta} |\E(S)| \\
|\E(T')| &< \frac{\delta_B}{1-\beta} |\E(T)|
\end{align}
\end{lemma}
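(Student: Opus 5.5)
We prove the statement for $S$; the statement for $T$ follows by exchanging the roles of $A$ and $B$. The approach is a double counting of the unique neighbours of $S$, split according to whether a vertex lies in $S'$ or in $S\backslash S'$.

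First, each unique neighbour $b\in\Bu(S)$ is, by definition, adjacent to exactly one vertex of $S$. Therefore the sets $\B(s)\cap\Bu(S)$, $s\in S$, partition $\Bu(S)$, and
\begin{align}
|\Bu(S)| = \sum_{s\in S}|\B(s)\cap\Bu(S)|.
\end{align}

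Next, we bound each term of this sum. For $s\in S'$ we have $|\B(s)\cap\Bu(S)|\le\beta|\E(s)|$ by the definition of $S'$. For $s\in S\backslash S'$ we use only the trivial bound $|\B(s)\cap\Bu(S)|\le|\E(s)|$. Summing these bounds gives
\begin{align}
|\Bu(S)| \le \beta|\E(S')| + |\E(S)|-|\E(S')| = |\E(S)|-(1-\beta)|\E(S')|.
\end{align}

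Finally, since $S$ is non-empty, every vertex of $S$ has at least one edge, and $|\E(S)|\le\gamma_A|E|$, the unique neighbour expansion \eqref{eq:unique_neighbor_expander_a} applies and gives $|\Bu(S)|>(1-\delta_A)|\E(S)|$. Combining the two bounds on $|\Bu(S)|$ yields
\begin{align}
(1-\beta)|\E(S')| < \delta_A|\E(S)|.
\end{align}
Dividing by $1-\beta>0$, which is valid because $\beta<1$, gives the claimed strict inequality $|\E(S')|<\frac{\delta_A}{1-\beta}|\E(S)|$.

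There is no real obstacle in this argument. The one point that needs care is the hypothesis $|\E(S)|>0$ required to invoke \eqref{eq:unique_neighbor_expander_a}. It holds if no vertex of $S$ is isolated, and in that case the expansion inequality is strict, which makes the conclusion strict. If isolated vertices were allowed, an isolated $s$ would lie in $S'$ and still contribute $0$ edges, so the bound on $|\E(S')|$ would be unaffected; only the case $|\E(S)|=0$ would need to be excluded or treated separately.
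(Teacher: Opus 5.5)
Your proof is correct and is the paper's argument: the paper writes the same chain $(1-\delta_A)|\E(S)| < |\Bu(S)| \leq \beta|\E(S')| + |\E(S\backslash S')|$ and rearranges, which is exactly your double count of $\Bu(S)$ split over $S'$ and $S\backslash S'$. You are also right that the hypothesis actually needed is $|\E(S)|>0$ rather than merely $S\neq\emptyset$. The paper leaves this implicit, relying on the absence of isolated vertices.
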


\begin{proof}
The first part follows from
\begin{align}
&(1 - \delta_A) |\E(S)| \\
&< |\Bu(S)| \\
& \leq \beta |\E(S')| + |\E(S\backslash S')|
\end{align}
by rearrangement. The second part is similar. 
\end{proof}

\subsubsection{Lower bounds on the sequences $\{\Gc_i(S)\}_{i=0}^\infty$ and $\{\Gc_i(T)\}_{i=0}^\infty$}\label{sec:lower_bound_on_decoder_sequence}

The unique neighbors of a union and a difference satisfy:

\begin{lemma}\label{lemma:unique_neighbors_of_union_and_difference}
Let $G=(A \cup B, E)$ be a bipartite graph, and let $S,S' \subset A$, $T,T' \subset B$. Then
\begin{align}
\Bu(S \cup S') &\subset \Bu(S)\cup\Bu(S') \\
\Au(T \cup T') &\subset \Au(T)\cup\Au(T') \\
\Bu(S \backslash S') &\subset \Bu(S) \cup \B(S\cap S') \\
\Au(T \backslash T') &\subset \Au(T) \cup \A(T\cap T')
\end{align}
\end{lemma}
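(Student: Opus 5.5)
The plan is a direct element-chasing argument: fix a vertex in the left-hand side, count its edges into the relevant sets, and read off membership in the right-hand side. Only the statements for $\Bu$ need a full argument. The statements for $\Au$ follow by exchanging the roles of $A$ and $B$, since the definitions of $\Au$ and $\A$ mirror those of $\Bu$ and $\B$.

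For the union, take $b \in \Bu(S \cup S')$, so $b$ has exactly one edge into $S\cup S'$, going to some $a$. Every neighbor of $b$ in $S$ or in $S'$ is a neighbor in $S\cup S'$, so it must equal $a$. Hence $b$ has at most one edge into $S$ and at most one into $S'$. Since $a$ lies in $S$ or in $S'$, $b$ has exactly one edge into that set, so $b\in\Bu(S)\cup\Bu(S')$.

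For the difference, take $b\in\Bu(S\backslash S')$, with unique neighbor $a \in S\backslash S'$. I would split into two cases according to whether $b$ has a neighbor in $S\cap S'$.
\begin{itemize}
\item If $b$ has a neighbor in $S\cap S'$, then $b\in\B(S\cap S')$ and we are done.
\item Otherwise, the neighbors of $b$ in $S$ all lie in $S\backslash S'$, because $S$ is the disjoint union of $S\backslash S'$ and $S\cap S'$. The only such neighbor is $a$, so $b\in\Bu(S)$.
\end{itemize}

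There is no real obstacle here. The only point to state carefully is the decomposition of $S$ as the disjoint union of $S\backslash S'$ and $S\cap S'$, which is what makes the case split exhaustive.
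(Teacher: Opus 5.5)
Your proof is correct and is essentially the paper's element-chasing argument, with the $\Au$ statements obtained by symmetry. For the difference, your case split (whether $b$ has a neighbor in $S\cap S'$) is equivalent to the paper's split (whether $b$ has one edge or more than one edge into $S$); in the case with no neighbor in $S\cap S'$ you correctly conclude $b\in\Bu(S)$, where the paper's text has the typo $\B(S)$.
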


\begin{proof}
Let the unique edge from $b \in \Bu(S \cup S')$ to $S \cup S'$ be $(a,b)$. If $a \in S$, then $b \in \Bu(S)$, and if $a \in S'$ then $b \in \Bu(S')$. 

Next, take $b \in \Bu(S\backslash S')$. If there is only one edge from $b$ to $S$, then $b \in \B(S)$. If there is more than one edge from $b$ to $S$, then $b \in \B(S \cap S')$. 

The other two parts are similar. 
\end{proof}

Therefore, the sequences $\{\Gc_i(T)\}_{i=0}^\infty $ and $\{\Gc_i(S)\}_{i=0}^\infty$ have the following invariants: 

\begin{lemma}\label{lemma:unique_neighbor_inclusion_invariant}
Let $G=(A \cup B, E)$ be a bipartite graph, let $S \subset A$, $T \subset B$ and $\beta \in [0,1)$. If $\Bu(S) \subset T$, then for all $i$, $\Bu(S\backslash \Ac_i(T))\subset\Bc_i(T)$. Similarly, if $\Au(T) \subset S$, then for all $i$, $\Au(T\backslash \Bc_i(S)) \subset \Ac_i(S)$.
\end{lemma}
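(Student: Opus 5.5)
The plan is induction on $i$, using the third inclusion of Lemma \ref{lemma:unique_neighbors_of_union_and_difference} together with the update rule $\Bc_{i+1}(T)=\Bc_i(T)\cup\B(\Ac_{i+1}(T))$ from Definition \ref{def:decoder_sequence}. I treat the first statement; the second is symmetric, with the roles of $A$ and $B$ exchanged.

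\emph{Base case.} Here $i=0$, so $\Ac_0(T)=\emptyset$ and $S\backslash\Ac_0(T)=S$. The claim is then exactly the hypothesis $\Bu(S)\subset T=\Bc_0(T)$.

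\emph{Inductive step.} Assume $\Bu(S\backslash\Ac_i(T))\subset\Bc_i(T)$. Rather than splitting $S$ in one go, I will write
\begin{align}
S\backslash\Ac_{i+1}(T) = (S\backslash\Ac_i(T))\backslash\Ac_{i+1}(T).
\end{align}
Apply Lemma \ref{lemma:unique_neighbors_of_union_and_difference} to the set $S\backslash\Ac_i(T)$ with $\Ac_{i+1}(T)$ removed. This gives
\begin{align}
\Bu(S\backslash\Ac_{i+1}(T)) \subset \Bu(S\backslash\Ac_i(T)) \cup \B\big((S\backslash\Ac_i(T))\cap\Ac_{i+1}(T)\big).
\end{align}
The first term is contained in $\Bc_i(T)$ by the inductive hypothesis. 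The second term is contained in $\B(\Ac_{i+1}(T))$, by monotonicity of ordinary neighborhoods. By definition, $\Bc_{i+1}(T)=\Bc_i(T)\cup\B(\Ac_{i+1}(T))$, so the union of the two terms lies in $\Bc_{i+1}(T)$. This completes the induction.

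\emph{Main point of care.} The inductive step only works if the difference is taken from the previous set $S\backslash\Ac_i(T)$, not from $S$. Otherwise the inductive hypothesis cannot be used. Equivalently, the step relies on $\Ac_i(T)\subset\Ac_{i+1}(T)$, which follows from the monotonicity of $\Ab$ (Lemma \ref{lemma:monotonicity_of_neighborhood}) and the fact that the sequence $\Bc_i(T)$ is increasing. Beyond this, no expansion property is needed: the statement is purely combinatorial and holds for every $\beta$.
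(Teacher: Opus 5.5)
Your proof is correct, but it is longer than necessary. The paper does not induct. It applies the difference inclusion of Lemma \ref{lemma:unique_neighbors_of_union_and_difference} once, directly to $S$ and $\Ac_i(T)$:
\[
\Bu(S\backslash\Ac_i(T))\subset\Bu(S)\cup\B(S\cap\Ac_i(T))\subset T\cup\B(\Ac_i(T))\subset\Bc_i(T).
\]
The last inclusion uses only two facts, both immediate from Definition \ref{def:decoder_sequence}: $T=\Bc_0(T)\subset\Bc_i(T)$, and $\B(\Ac_i(T))\subset\Bc_i(T)$.

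Your ``main point of care'' is therefore not accurate. Taking the difference from $S$ itself works for every $i$ at once, because the hypothesis $\Bu(S)\subset T$ already places the first term inside every $\Bc_i(T)$; no inductive hypothesis is needed.

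The inductive route gains nothing and costs an extra ingredient. To justify the identity $S\backslash\Ac_{i+1}(T)=(S\backslash\Ac_i(T))\backslash\Ac_{i+1}(T)$, you need $\Ac_i(T)\subset\Ac_{i+1}(T)$, and hence the monotonicity of $\Ab$ (Lemma \ref{lemma:monotonicity_of_neighborhood}). The direct argument needs neither.
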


\begin{proof}
By Lemma \ref{lemma:unique_neighbors_of_union_and_difference}, 
\begin{align}
&\Bu(S\backslash \Ac_i(T)) \\
&\subset \Bu(S) \cup \B(S \cap \Ac_i(T)) \\
& \subset T \cup \B(\Ac_i(T)) \\
&\subset \Bc_i(T)
\end{align}
The other part is similar. 
\end{proof}

Therefore, in a unique neighbor expander, the sequences $\{\Gc_i(T)\}_{i=0}^\infty$ and $\{\Gc_i(S)\}_{i=0}^\infty$ have the following lower bounds: 

\begin{lemma}\label{lemma:lower_bound_on_decoder_sequence}
Let $G=(A \cup B, E)$ be a $(\gamma_A,\gamma_B,\delta_A,\delta_B)$ two-sided unique neighbor expander. Let $\delta=\max(\delta_A,\delta_B)$ and let $\beta \in [0,1-\delta]$. Let
\begin{align}
\ell_A &= \min\left(|A|,\left\lceil\frac{\log|E|}{\log\frac{1-\beta}{\delta_A}}\right\rceil\right) \\
\ell_B &= \min\left(|B|,\left\lceil \frac{\log|E|}{\log\frac{1-\beta}{\delta_B}} \right\rceil\right)
\end{align}
Let $S \subset A$, $T \subset B$. Then,
\begin{enumerate}
\item If $ S' \subset A$, $|\E(S')|\leq \gamma_A|E|$ and $\Bu(S') \subset T$, then $S' \subset \Ac_{\ell_A}(T)$. 
\item If $T' \subset B$, $|\E(T')|\leq \gamma_B |E|$ and $\Au(T') \subset S$, then $T' \subset \Bc_{\ell_B}(S)$. 
\end{enumerate}
\end{lemma}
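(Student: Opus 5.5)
We prove part 1; part 2 follows by exchanging the roles of $A$ and $B$. The idea is to track the part of $S'$ that has not yet been absorbed. Let $S_i = S'\backslash \Ac_i(T)$, so that $S_0=S'$ (recall $\Ac_0(T)=\emptyset$). We aim to show that the edge count $|\E(S_i)|$ shrinks geometrically, by a factor smaller than $\delta_A/(1-\beta)$ at each step, as long as $S_i$ is nonempty. We also show that it shrinks by at least one vertex per step. Together these force $S_{\ell_A}=\emptyset$, i.e. $S'\subset \Ac_{\ell_A}(T)$.

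First fix $i$ and suppose $S_i\neq\emptyset$. Since $S_i\subset S'$, we have $|\E(S_i)|\leq|\E(S')|\leq\gamma_A|E|$, so the unique neighbor expansion hypothesis applies to $S_i$. By Lemma \ref{lemma:unique_neighbor_inclusion_invariant}, applied with $S'$ in place of $S$ and using $\Bu(S')\subset T$, we get $\Bu(S_i)\subset \Bc_i(T)$.

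Now take any $s\in S_i$ with $|\B(s)\cap \Bu(S_i)|>\beta|\E(s)|$. Then more than a $\beta$ fraction of the edges of $s$ go to $\Bc_i(T)$, so $s\in \Ab(\Bc_i(T))=\Ac_{i+1}(T)$. Hence $S_{i+1}$ is contained in the set
\begin{align}
S_i' = \{s\in S_i : |\B(s)\cap\Bu(S_i)|\leq \beta|\E(s)|\}.
\end{align}
Lemma \ref{lemma:less_than_beta_fraction_unique_neighbors} then gives
\begin{align}
|\E(S_{i+1})|\leq|\E(S_i')|<\frac{\delta_A}{1-\beta}|\E(S_i)|.
\end{align}
Two conclusions follow from this strict inequality. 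Since $\beta\leq 1-\delta$, the ratio $\delta_A/(1-\beta)$ is at most $1$. So whenever $S_i\neq\emptyset$ we get $|\E(S_{i+1})|<|\E(S_i)|$, which means $S_{i+1}\subsetneq S_i$. Iterating, $|\E(S_i)|<(\delta_A/(1-\beta))^i|\E(S')|$ whenever $S_i$ is nonempty.

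To conclude, consider the two possible values of $\ell_A$. If $\ell_A=|A|$, strict shrinkage of the nonempty $S_i$ removes at least one vertex per step, and $|S'|\leq|A|$, so $S_{|A|}=\emptyset$. Otherwise $\ell_A=\lceil \log|E|/\log\frac{1-\beta}{\delta_A}\rceil$. If $S_{\ell_A}$ were nonempty, then
\begin{align}
|\E(S_{\ell_A})|<\left(\frac{\delta_A}{1-\beta}\right)^{\ell_A}|E|\leq 1,
\end{align}
contradicting $|\E(S_{\ell_A})|\geq 1$.

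This last step needs care in degenerate cases. If some vertex of $S'$ has no edges, it can never enter $\Ac_i(T)$, so we must assume, or argue, that every vertex of $S'$ has positive degree. Vertices of degree zero would otherwise break the statement, so this is presumably implicit in the setup. The cases $\delta_A=0$ and $\beta=1-\delta_A$ make the logarithm degenerate and should be read with the conventions implicit in $\ell_A$. The main obstacle is the step $S_{i+1}\subset S_i'$, which needs the invariant of Lemma \ref{lemma:unique_neighbor_inclusion_invariant} applied to the shrinking set $S_i$ rather than to $S'$ itself; everything else is bookkeeping.
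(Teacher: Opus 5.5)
Your proof is correct and follows the paper's argument: the invariant $\Bu(S'\backslash\Ac_i(T))\subset\Bc_i(T)$ from Lemma \ref{lemma:unique_neighbor_inclusion_invariant}, combined with Lemma \ref{lemma:less_than_beta_fraction_unique_neighbors}, makes $|\E(S'\backslash\Ac_i(T))|$ decay geometrically by the factor $\delta_A/(1-\beta)$. Compared with the paper, you additionally spell out the final counting step for both branches of the minimum defining $\ell_A$, and you correctly point out that isolated vertices in $S'$ must be excluded, an assumption the paper leaves implicit.
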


\begin{proof}
Lemma \ref{lemma:unique_neighbor_inclusion_invariant} implies that for all $i$, $\Bu(S' \backslash \Ac_i(T)) \subset \Bc_i(T)$. While $S' \backslash \Ac_i(T)$ is not empty, Lemma \ref{lemma:less_than_beta_fraction_unique_neighbors} implies that 
\begin{align}
|\E(S'\backslash \Ac_{i+1}(T))| < \frac{\delta_A}{1-\beta} |\E(S'\backslash\Ac_i(T))|
\end{align}
Therefore, $S' \subset \Ac_{\ell_A}(T)$. The proof of the second part is similar. 
\end{proof}

In Lemma \ref{lemma:lower_bound_on_decoder_sequence} there is a distinction between the cases $\beta=1-\delta$ and $\beta < 1-\delta$. This has the following consequence: in order to run Find Rows and Columns and Decode in logarithmic depth, the parameter $\beta$ has to be slightly lower than for the sequential algorithm, and therefore slightly less errors can be corrected. Is this necessary or an artefact of the proof? To answer this, an example is given that suggests it is necessary. Let $S'=\{a_1,\dots,a_n\}$. For each $i$, let vertex $a_i$ of $S'$ be connected to vertices $b_{4i-4}$, $b_{4i-3}$, $b_{4i-2}$, $b_{4i-1}$, $b_{4i}$ in $B$. Each subset of $k$ vertices of $S'$ has $5k$ edges and at least $3k+2$ unique neighbors, so this can be a fragment of a unique neighbor expander with parameter $\delta=2/5$. Now take $T$ to consist of all unique neighbors of $S'$: $b_i$ with $i$ not divisible by 4 as well as $b_0$ and $b_{4n}$. Consider cases. If $\beta=1-\delta$, then strictly more than 3 edges to $\Bc_i(T)$ are needed for a vertex in $S'$ to be added to $\Ac_{i+1}(T)$. Then, $\Ac_1(T)=\{a_1,a_n\}$, $\Ac_2(T)=\{a_1,a_2,a_{n-1},a_n\}$, and so on. Then, $n/2$ steps of the sequence are needed to cover $S'$. On the other hand, if $\beta < 1-\delta$, then 3 edges to $T$ are enough for a vertex to become a part of $\Ac_1(T)$, so $S'\subset \Ac_1(T)$. 

Continuing with the proof of Proposition \ref{prop:output_is_correct}, the lower bounds on the sequences $\{\Gc_i(T)\}_{i=0}^\infty$ and $\{\Gc_i(S)\}_{i=0}^\infty$ can be improved by introducing:

\begin{definition}\label{def:improved_lower_bound_on_decoder_sequence}
For $G=(A \cup B, E)$ a bipartite graph, $C,S \subset A$ and $D,T \subset B$, let 
\begin{align}
\Acu_C(T) &= \bigcup_{S'\subset C : \Bu(S') \subset T} S' \\
\Bcu_C(T) &= \B(\Acu_C(T)) \\
\Bcu_D(S) &= \bigcup_{T' \subset D: \Au(T') \subset S} T' \\
\Acu_D(S) &= \A(\Bcu_D(T))
\end{align}
\end{definition}

Then,

\begin{lemma}\label{lemma:improved_lower_bound_on_decoder_sequence}
Let $G=(A \cup B, E)$ be a $(\gamma_A,\gamma_B,\delta_A,\delta_B)$ two-sided unique neighbor expander. Let $\delta=\max(\delta_A,\delta_B)$, let $\beta \in [0,1-\delta]$ and let 
\begin{align}
\ell_A &= \min\left(|A|,\left\lceil\frac{\log|E|}{\log\frac{1-\beta}{\delta_A}}\right\rceil\right) \\
\ell_B &= \min\left(|B|,\left\lceil \frac{\log|E|}{\log\frac{1-\beta}{\delta_B}} \right\rceil\right)
\end{align}
Let $C,S \subset A$, $D,T \subset B$ be such that $|\E(C)|\leq\gamma_A|E|$, $|\E(D)|\leq\gamma_B|E|$. Then:
\begin{align}
\Acu_C(T) &\subset \Ac_{\ell_A}(T) \\
\Bcu_C(T) &\subset \Bc_{\ell_A}(T) \\
\Acu_D(S) &\subset \Ac_{\ell_B}(S) \\
\Bcu_D(S) &\subset \Bc_{\ell_B}(S)
\end{align}
\end{lemma}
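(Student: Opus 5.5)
The plan is to reduce all four inclusions to Lemma \ref{lemma:lower_bound_on_decoder_sequence}, applied to each set $S'$ in the union defining $\Acu_C(T)$. The two statements about $\mathcal{B}$-sets will then follow from the two statements about $\mathcal{A}$-sets by the recursive structure of the sequences. By symmetry it suffices to treat $\Acu_C(T)$ and $\Bcu_C(T)$; the $D$-side is the mirror argument using part 2 of Lemma \ref{lemma:lower_bound_on_decoder_sequence}.

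Take any $S' \subset C$ with $\Bu(S') \subset T$. Then $\E(S') \subset \E(C)$, so $|\E(S')| \leq |\E(C)| \leq \gamma_A|E|$. This gives all the hypotheses of part 1 of Lemma \ref{lemma:lower_bound_on_decoder_sequence}, which yields $S' \subset \Ac_{\ell_A}(T)$. The parameters $\ell_A$ and the range $\beta\in[0,1-\delta]$ are exactly those of the present statement, so no adjustment is needed. If $S'=\emptyset$ the inclusion is trivial. Taking the union over all admissible $S'$ gives $\Acu_C(T) \subset \Ac_{\ell_A}(T)$.

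For the second inclusion, $\Bcu_C(T)=\B(\Acu_C(T))$, and monotonicity of $\B$ gives $\Bcu_C(T)\subset \B(\Ac_{\ell_A}(T))$. It remains to check that $\B(\Ac_i(T)) \subset \Bc_i(T)$ for every $i$. For $i=0$ this holds because $\Ac_0(T)=\emptyset$. For $i\geq 1$, an induction using $\Ac_i(T)=\Ab(\Bc_{i-1}(T))$ and $\Bc_i(T)=\Bc_{i-1}(T)\cup\B(\Ac_i(T))$ shows directly that $\B(\Ac_i(T))\subset\Bc_i(T)$. In fact it is immediate from the definition, with no induction required.

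On the $D$ side, $\Bcu_D(S)$ is handled by part 2 of Lemma \ref{lemma:lower_bound_on_decoder_sequence} in the same way, and then $\Acu_D(S)=\A(\Bcu_D(S))\subset \A(\Bc_{\ell_B}(S))\subset \Ac_{\ell_B}(S)$, using $\Ac_{i}(S)=\Ac_{i-1}(S)\cup\A(\Bc_i(S))$ for $i\ge1$. One point needs care: the definition of $\Acu_D(S)$ is written as $\A(\Bcu_D(T))$, which I will read as $\A(\Bcu_D(S))$, an apparent typo. Also, when $\ell_B=0$ both sets must be checked to be empty, since $\Bc_0(S)=\emptyset$. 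In that degenerate case, e.g. $|B|=0$, everything is empty and the inclusion holds trivially.

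There is no real obstacle here; the substance lies in Lemma \ref{lemma:lower_bound_on_decoder_sequence}. The only care needed is verifying that the edge bound passes from $C$ to its subsets, and the bookkeeping of the index $i=0$ in the $\B(\Ac_i)\subset\Bc_i$ step.
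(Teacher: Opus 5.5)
Your proposal is correct and is essentially the paper's argument. You reduce to Lemma~\ref{lemma:lower_bound_on_decoder_sequence} using $|\E(S')|\leq|\E(C)|\leq\gamma_A|E|$, and then get the remaining inclusions from monotonicity and the recursive definition of the sequences; you are also right that $\Acu_D(S)$ should read $\A(\Bcu_D(S))$. The only difference is that you apply that lemma to each admissible $S'$ and take the union, whereas the paper first uses Lemma~\ref{lemma:unique_neighbors_of_union_and_difference} to get $\Bu(\Acu_C(T))\subset T$ and applies the lemma once to $\Acu_C(T)$, so your version simply avoids the union lemma. Your separate $\ell_B=0$ check is unnecessary, since $\A(\Bc_0(S))=\A(\emptyset)=\emptyset$.
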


\begin{proof}
Lemma \ref{lemma:unique_neighbors_of_union_and_difference} implies that $\Bu(\Acu_C(T))\subset T$. Moreover, $\Acu_C(T) \subset C$, so $|\E(\Acu_C(T))|\leq\gamma_A|E|$. Then, Lemma \ref{lemma:lower_bound_on_decoder_sequence} implies that $\Acu_C(T) \subset \Ac_{\ell_A}(T)$ and $\Bcu_C(T) \subset \Bc_{\ell_A}(T)$ follows by monotonicity (Lemma \ref{lemma:monotonicity_of_neighborhood}). The other two parts are similar. 
\end{proof}

This section concludes with a short Lemma about column support that is used to argue that $\cs(X_B'')$ must be a subset of $\Bcu_D(\cs(W))$:

\begin{lemma}\label{lemma:about_column_support}
Let $S\subset A$, $D\subset B$ and $X\in\F^{B\times B}$ be such that $\cs(X)$ is a subset of $D$ and $\cs(XH)$ is a subset of $S$. Then, $\cs(X)$ is a subset of $\Bcu_D(S)$.  
\end{lemma}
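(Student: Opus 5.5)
The natural choice is to take $T' = \cs(X)$ itself as one of the sets in the union defining $\Bcu_D(S)$ in Definition \ref{def:improved_lower_bound_on_decoder_sequence}. Since $\cs(X)\subset D$ by hypothesis, it remains only to check that $\Au(\cs(X)) \subset S$. Once that holds, $\cs(X)$ is one of the sets $T'$ over which the union is taken, so $\cs(X)\subset\Bcu_D(S)$.

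The key step is therefore to show that every unique neighbor of $\cs(X)$ on the $A$ side lies in $\cs(XH)$; the hypothesis $\cs(XH)\subset S$ then finishes the argument. Take $a\in\Au(\cs(X))$. By definition, $a$ is connected to exactly one vertex $b_0\in\cs(X)$, i.e.\ $H_{b_0,a}=1$ and $H_{b,a}=0$ for every other $b\in\cs(X)$.

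Now compute column $a$ of $XH$. For each row index $r\in B$,
\begin{align}
(XH)_{r,a}=\sum_{b\in B} X_{r,b}H_{b,a}.
\end{align}
Terms with $b\notin\cs(X)$ vanish because column $b$ of $X$ is zero. Terms with $b\in\cs(X)\setminus\{b_0\}$ vanish because $H_{b,a}=0$. Hence column $a$ of $XH$ equals column $b_0$ of $X$. That column is nonzero since $b_0\in\cs(X)$, so $a\in\cs(XH)\subset S$.

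There is no real obstacle here; the only care needed is to use the exact definition of $\Au$ (exactly one edge) so that the sum over $\F$ collapses to a single term. One should also note the degenerate case $\cs(X)=\emptyset$, which is trivial since the empty set is contained in anything. Only the $B\times B$ side with right multiplication by $H$ is needed, so no symmetric case arises.
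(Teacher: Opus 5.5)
Your proof is correct and matches the paper's argument: the paper observes that every unique neighbor of $\cs(X)$ lies in $\cs(XH)\subset S$, so $\cs(X)$ is one of the sets in the union defining $\Bcu_D(S)$. You additionally write out the column computation that the paper leaves implicit.
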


\begin{proof}
All unique neighbors of $\cs(X)$ are in $\cs(XH)\subset S$, and $\Bcu_D(S)$ is defined as the union of all subsets of $D$ whose unique neighbors are in $S$. 
\end{proof}

\subsubsection{Stabilizer equivalent errors}\label{sec:stabilizer_equivalent_errors}

The following Lemma is used to replace an error with a stabilizer equivalent one and at the same time to change the row support of the error:

\begin{lemma}\label{lemma:stabilizer_equivalent_error}
Let $G=(A \cup B, E)$ be a bipartite graph with adjacency matrix $H \in \F^{B \times A}$ and consider the hypergraph product of $G$ with itself. Let $S \subset A$ and $T \subset B$ be such that the submatrix $\subm{H}{T}{S}$ is invertible. Suppose error $(X_A,X_B)$ with $X_A \in \F^{A \times A}$, $X_B \in \F^{B\times B}$ is such that the rows of the syndrome $HX_A+X_BH$ indexed by $T$ are zero and such that $\A(T) \cap \rs(X_A) \subset S$. Consider the stabilizer equivalent error $(X_A',X_B')$ where
\begin{align}
X_A' &= X_A + (\esubmi{H}{S}{T}X_B) H\\
X_B' &= X_B + H(\esubmi{H}{S}{T}X_B)
\end{align}
Then,
\begin{align}
\rs(X_A') &= \rs(X_A)\backslash S \\
\cs(X_A') &\subset \cs(X_A) \\
\rs(X_B') &\subset \rs(X_B)\cup\B(S) \\
\cs(X_B') &\subset \cs(X_B)
\end{align}
\end{lemma}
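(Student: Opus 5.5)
The plan is to take $V=\esubmi{H}{S}{T}X_B\in\F^{A\times B}$, so that $(X_A',X_B')=(X_A+VH,\,X_B+HV)$. By item 2 of Definition \ref{def:hypergraph_product}, $(VH,HV)$ is a Pauli-X stabilizer, so stabilizer equivalence is immediate. The work is in tracking supports. From the block form of $\esubmi{H}{S}{T}$ in Section \ref{sec:notation}, the nonzero rows of $V$ are indexed by $S$, and they are given by $\submi{H}{S}{T}\subm{(X_B)}{T}{B}$. In particular, $\rs(V)\subset S$ and $\cs(V)\subset\cs(X_B)$.

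First I would handle the three easy inclusions.
\begin{itemize}
\item Since $\rs(VH)\subset\rs(V)\subset S$, the rows of $X_A'$ outside $S$ coincide with those of $X_A$.
\item Since $\cs(HV)\subset\cs(V)\subset\cs(X_B)$, we get $\cs(X_B')\subset\cs(X_B)$.
\item Since $\rs(HV)\subset\B(\rs(V))\subset\B(S)$, we get $\rs(X_B')\subset\rs(X_B)\cup\B(S)$.
\end{itemize}

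The main step, and the only real obstacle, is to show that the rows of $X_A'$ indexed by $S$ vanish, i.e.\ $\subm{(VH)}{S}{A}=\subm{(X_A)}{S}{A}$. For this I would use the hypothesis that the syndrome rows indexed by $T$ are zero:
\begin{align}
\subm{H}{T}{A}X_A=\subm{(X_B)}{T}{B}H .
\end{align}
Split the left side as $\subm{H}{T}{S}\subm{(X_A)}{S}{A}+\subm{H}{T}{A\backslash S}\subm{(X_A)}{A\backslash S}{A}$. The second term vanishes. Indeed, for $a\in A\backslash S$, either $a\notin\A(T)$, in which case column $a$ of $\subm{H}{T}{A}$ is zero, or $a\in\A(T)$, in which case $a\notin\rs(X_A)$ by the assumption $\A(T)\cap\rs(X_A)\subset S$. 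Hence $\subm{H}{T}{S}\subm{(X_A)}{S}{A}=\subm{(X_B)}{T}{B}H$. Multiplying by $\submi{H}{S}{T}$ gives
\begin{align}
\subm{(X_A)}{S}{A}=\submi{H}{S}{T}\subm{(X_B)}{T}{B}H=\subm{(VH)}{S}{A}.
\end{align}
So the $S$-rows cancel in $X_A'$. Together with the first bullet, this gives $\rs(X_A')=\rs(X_A)\backslash S$ exactly.

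The same identity also settles the column support of $X_A'$. Since $VH$ has nonzero rows only in $S$, and those rows equal $\subm{(X_A)}{S}{A}$, we have $\cs(VH)\subset\cs(X_A)$. Therefore $\cs(X_A')\subset\cs(X_A)$, which completes the plan.
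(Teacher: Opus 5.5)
Your proof is correct and is essentially the paper's argument. The paper first uses the vanishing $T$-rows of the syndrome to rewrite $X_A'=(I+\esubmi{H}{S}{T}H)X_A$, then reads off both supports of $X_A'$ from the block form of $I+\esubmi{H}{S}{T}H$ together with $\A(T)\cap\rs(X_A)\subset S$. That is the same computation you do rowwise by proving $\subm{(VH)}{S}{A}=\subm{(X_A)}{S}{A}$, and the two inclusions for $X_B'$ are handled identically in both.
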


\begin{proof}
The condition that the rows of the syndrome indexed by $T$ are zero is equivalent to
\begin{align}
0=\esubmi{H}{S}{T}HX_A+\esubmi{H}{S}{T}X_BH
\end{align}
so an equivalent expression for $X_A'$ is
\begin{align}
X_A'= (I+\esubmi{H}{S}{T}H) X_A
\end{align}
From here, $\cs(X_A')\subset\cs(X_A)$ follows immediately. Next, note that $X_A'$ has block decomposition
\begin{align}
&\twobytwo{0}{\submi{H}{S}{T}\subm{H}{T}{A\backslash S}}{0}{\subm{I}{A\backslash S}{A\backslash S}}\twobyone{\subm{X}{S}{A}}{\subm{X}{A\backslash S}{A}} \\
&= \twobyone{\esubmi{H}{S}{T}\subm{H}{T}{A\backslash S} \subm{X}{A\backslash S}{A}}{\subm{X}{A\backslash S}{A}} \\
&=\twobyone{0}{\subm{X}{A\backslash S}{A}}
\end{align}
where the last step follows from the assumption $\A(T)\cap\rs(X_A)\subset S$. 

Next, the column support of $X_B'=(I+H\esubmi{H}{S}{T})X_B$ is a subset of the column support of $X_B$. The row support of $X_B'$ is included in the union of $\rs(X_B)$ and $\rs(H\esubmi{H}{S}{T})$, and the latter is a subset of $\B(S)$. 
\end{proof}

Repeated application of Lemma \ref{lemma:stabilizer_equivalent_error} gives:

\begin{lemma}\label{lemma:row_support_included_in_lower_bound_set}
Let $G=(A \cup B, E)$ be a bipartite graph with adjacency matrix $H \in \F^{B \times A}$ and consider the hypergraph product of $G$ with itself. Let $X_A \in \F^{A \times A}$, $X_B \in \F^{B\times B}$, $C \subset A$, $D \subset B$ be such that $(X_A,X_B)$ is $(C,D)$ contained (recall Definition \ref{def:cd_contained}). Let $T$ be the row support of the syndrome $HX_A + X_BH$. Then, there exist $X_A'\in\F^{A\times A}$, $X_B'\in\F^{B\times B}$ such that $(X_A',X_B')$ is stabilizer equivalent to $(X_A,X_B)$ and $(C,D)$ contained and such that the row support of $X_A'$ is a subset of $\Acu_C(T)$ (recall Definition \ref{def:improved_lower_bound_on_decoder_sequence}).
\end{lemma}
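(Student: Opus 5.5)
We will prove Lemma \ref{lemma:row_support_included_in_lower_bound_set} by repeatedly applying Lemma \ref{lemma:stabilizer_equivalent_error}, shrinking the part of the row support of $X_A$ that lies outside $\Acu_C(T)$. Since $\rs(X_A)\subset C$, it suffices to show the following. If $R:=\rs(X_A)\not\subset\Acu_C(T)$, then there is a stabilizer equivalent, $(C,D)$ contained error whose row support is strictly smaller. Iterating at most $|C|$ times gives the claim.

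For the reduction step we use the definition of $\Acu_C(T)$. Since $R\subset C$ but $R\not\subset\Acu_C(T)$, the union defining $\Acu_C(T)$ shows that $\Bu(R)\not\subset T$. So there is a vertex $t\in\Bu(R)\backslash T$ with a unique neighbor $s\in R$. Take $S=\{s\}$ and $T_0=\{t\}$. The $1\times1$ submatrix $\subm{H}{T_0}{S}=1$ is invertible, and row $t$ of the syndrome is zero because $t\notin T=\rs(HX_A+X_BH)$. Moreover $\A(t)\cap\rs(X_A)=\{s\}\subset S$ because $t$ is a unique neighbor of $R$. Lemma \ref{lemma:stabilizer_equivalent_error} therefore applies. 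It produces $(X_A',X_B')$, stabilizer equivalent, with $\rs(X_A')=R\backslash\{s\}$ and $\cs(X_A')\subset\cs(X_A)\subset\A(D)$. It also gives $\rs(X_B')\subset\rs(X_B)\cup\B(s)\subset\B(C)$, using $s\in C$, and $\cs(X_B')\subset\cs(X_B)\subset D$. Hence $(X_A',X_B')$ is again $(C,D)$ contained.

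Because stabilizer equivalent errors have the same syndrome, $T$ is unchanged at each step. This is the point that needs care: the target set $\Acu_C(T)$ must stay fixed throughout the iteration, and it does because the syndrome is invariant. The induction then closes with $C$ and $D$ held fixed. The main obstacle is checking the hypothesis $\A(T_0)\cap\rs(X_A)\subset S$, and the uniqueness of the neighbor $t$ handles it. We could instead peel a whole block at once using a larger invertible submatrix, but single vertices are enough and avoid any invertibility issues. Termination holds because $|\rs(X_A)|$ strictly decreases, and the process stops exactly when $\rs(X_A')\subset\Acu_C(T)$.
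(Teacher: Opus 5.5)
Your proposal is correct and is essentially the paper's proof. You pick $t\in\Bu(\rs(X_A))\backslash T$ with its unique edge $(s,t)$, apply Lemma~\ref{lemma:stabilizer_equivalent_error} to the $1\times 1$ submatrix $\subm{H}{t}{s}=1$ to remove $s$ from the row support while keeping $(C,D)$ containment (via $\B(s)\subset\B(C)$), and iterate; you also state explicitly what the paper leaves implicit, that the syndrome, and hence $T$ and $\Acu_C(T)$, does not change during the iteration.
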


\begin{proof}
If $\rs(X_A)\subset\Acu_C(T)$, the claim holds. Otherwise, by the definition of $\Acu_C(T)$, $\Bu(\rs(X_A))$ is not a subset of $T$. Take $b \in \Bu(\rs(X_A))\backslash T$ and let the unique edge from $b$ to $\rs(X_A)$ be $(a,b)$. Lemma \ref{lemma:stabilizer_equivalent_error} applied to the submatrix $\subm{H}{b}{a}=1$ implies that there is $(X_A'',X_B'')$ that is stabilizer equivalent to $(X_A,X_B)$, and with $\rs(X_A'')=\rs(X_A)\backslash\{a\}$. Moreover, $(X_A'',X_B'')$ is also $(C,D)$ contained, because $\B(a)\subset \B(C)$. This step can be repeated until the remaining error in $\F^{A\times A}$ has row support included in $\Acu_C(T)$. 
\end{proof}

\subsubsection{Support of the syndrome remaining after the correction $\hat{X}_A$}\label{sec:support_of_remaining_syndrome}

Applying the inverse of a matrix is equivalent to solving a linear system and a variant of this observation involving submatrices is given in the following Lemma. It implies that certain rows of the remaining syndrome are zero. Later, it is also used to show that the corrections can be computed in linear time or logarithmic depth.  

\begin{lemma}\label{lemma:applying_the_inverse_of_a_submatrix_is_equivalent_to_solving_a_linear_system}
Let $H\in\F^{B\times A}$. Let $T\subset B$, $S \subset A$ be such that $\subm{H}{T}{S}$ is invertible. Let $P$ be another finite set, $W \in \F^{B\times P}$, $X \in \F^{A \times P}$. The following are equivalent:
\begin{enumerate}
\item $X=\esubmi{H}{S}{T} W $.
\item $\rs(X) \subset S$ and $\rs(W+HX)\subset B\backslash T$. 
\end{enumerate}

Similarly, let $Q$ be another finite set, $U \in \F^{Q \times A}$, $Y \in \F^{Q \times B}$. The following are equivalent:
\begin{enumerate}
\item $Y=U\esubmi{H}{S}{T}$.
\item $\cs(Y) \subset T$ and $\cs(U+YH)\subset A \backslash S$.
\end{enumerate}
\end{lemma}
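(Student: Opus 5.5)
The plan is a direct block-matrix computation using the decompositions \eqref{eq:hhi} and \eqref{eq:hih}. The second equivalence will follow from the first by transposition, so the real work is in the first.

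\emph{Direction (1)$\Rightarrow$(2).} Suppose $X=\esubmi{H}{S}{T}W$. The matrix $\esubmi{H}{S}{T}$ has nonzero rows only in $S$, so $\rs(X)\subset S$ is immediate. For the second condition I will write
\begin{align}
W+HX=(I+H\esubmi{H}{S}{T})W .
\end{align}
By \eqref{eq:hhi}, $I+H\esubmi{H}{S}{T}$ has zero rows indexed by $T$: the $T\times T$ block is $I+I=0$ and the $T\times(B\backslash T)$ block is $0$. Hence the rows of $W+HX$ indexed by $T$ vanish.

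\emph{Direction (2)$\Rightarrow$(1).} Suppose $\rs(X)\subset S$ and $\rs(W+HX)\subset B\backslash T$. Since the rows of $X$ outside $S$ are zero, the $T$-rows of $HX$ equal $\subm{H}{T}{S}\subm{X}{S}{P}$. The condition on $W+HX$ then says $\subm{W}{T}{P}=\subm{H}{T}{S}\subm{X}{S}{P}$. Multiplying by the inverse gives
\begin{align}
\subm{X}{S}{P}=\submi{H}{S}{T}\subm{W}{T}{P},
\end{align}
and together with the zero rows outside $S$ this is exactly $X=\esubmi{H}{S}{T}W$. The key point here is that, because $\subm{H}{T}{S}$ is invertible, the condition pins down $X$ uniquely.

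\emph{Second statement.} I will apply the first statement to the transposes $H^\top$, $U^\top$, $Y^\top$, with the roles of $S$ and $T$ swapped. This uses $(\submi{H}{S}{T})^\top=\submti{H}{T}{S}$, which is the inverse of $\subm{H^\top}{S}{T}$, and it converts row supports into column supports. Alternatively, I can repeat the computation directly using \eqref{eq:hih}.

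No step is a genuine obstacle. The only care needed is to keep the index sets and the embedding conventions straight.
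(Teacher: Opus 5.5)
Your proof is correct and follows essentially the same route as the paper: the forward direction uses the block form of $I+H\esubmi{H}{S}{T}$ from \eqref{eq:hhi}, and your reverse direction (read off the $T$-rows, then multiply by $\submi{H}{S}{T}$) is the unembedded version of the paper's observations that $\esubmi{H}{S}{T}HX=X$ and $\esubmi{H}{S}{T}(W+HX)=0$. Deriving the second statement by transposition, where the paper just says the proof is similar, is a harmless variant, and your index bookkeeping for it checks out.
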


\begin{proof}
First, assume $X=\esubmi{H}{S}{T}W$. Then, $\rs(X) \subset \rs(\esubmi{H}{S}{T}) = S$. Moreover, 
\begin{align}
\rs(W+HX)\subset\rs(I+H\esubmi{H}{S}{T}) = B\backslash T
\end{align}
where the second step follows from the block decomposition
\begin{align}
I+H\esubmi{H}{S}{T} = \twobytwo{\subm{0}{T}{T}}{\subm{0}{T}{B\backslash T}}{\subm{H}{B\backslash T}{S}\submi{H}{S}{T}}{\subm{I}{B\backslash T}{B\backslash T}}
\end{align}

Next, assume $\rs(X)\subset S$ and $\rs(W+HX) \subset B \backslash T$. From the first deduce $\esubmi{H}{S}{T}HX=X$ and from the second $\esubmi{H}{S}{T}(W+HX)=0$. The two conclusions imply $X=\esubmi{H}{S}{T}W$. 

The proof of the second part is similar. 
\end{proof}

From here follows: 

\begin{lemma}\label{lemma:support_of_remaining_syndrome}
The syndrome $W+H\hat{X}_A$ remaining after the correction $\hat{X}_A$ has
\begin{align}
\rs(W+H\hat{X}_A) &\subset \Bc_\ell(\rs(W))\backslash\Bp(\Ac_\ell(\rs(W)))\\
\cs(W+H\hat{X}_A) &\subset \cs(W)
\end{align}
\end{lemma}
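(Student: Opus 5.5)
Write $S=\Ac_\ell(\rs(W))$ and $T=\Bp(S)$, so that $\hat{X}_A=\esubmi{H}{S}{T}W$. The submatrix $\subm{H}{T}{S}$ is invertible: ordering the peeling edges $(a_i,b_i)$ makes it triangular with ones on the diagonal, because $a_i$ and $b_j$ are not connected for $i>j$. This invertibility is what Definition \ref{def:find-row-find-column} implicitly relies on.

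For the column support, note that $H\hat{X}_A=H\esubmi{H}{S}{T}W$ is obtained from $W$ by left multiplication. Every column of $H\hat{X}_A$ is therefore a linear image of the corresponding column of $W$, so zero columns of $W$ stay zero. Hence $\cs(W+H\hat{X}_A)\subset\cs(W)$.

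For the row support, I would invoke Lemma \ref{lemma:applying_the_inverse_of_a_submatrix_is_equivalent_to_solving_a_linear_system} with $P=A$, $X=\hat{X}_A$. This gives directly that $\rs(W+H\hat{X}_A)\subset B\backslash T$ and $\rs(\hat{X}_A)\subset S$. It then remains to show $\rs(W+H\hat{X}_A)\subset\Bc_\ell(\rs(W))$. By the triangle inequality on supports,
\[
\rs(W+H\hat{X}_A)\subset\rs(W)\cup\rs(H\hat{X}_A)\subset\rs(W)\cup\B(\rs(\hat{X}_A))\subset\rs(W)\cup\B(S).
\]
By Definition \ref{def:decoder_sequence}, applied with starting set $\rs(W)\subset B$, we have $\rs(W)=\Bc_0(\rs(W))\subset\Bc_\ell(\rs(W))$. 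We also have $\B(\Ac_\ell(\rs(W)))\subset\Bc_\ell(\rs(W))$, since $\Bc_{i+1}=\Bc_i\cup\B(\Ac_{i+1})$. (When $\ell=0$, $S=\emptyset$ and the claim is trivial.) Intersecting with $B\backslash T$ yields the stated inclusion.

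No step is a real obstacle. The only points needing care are that the peeling exists, so that $\esubmi{H}{S}{T}$ is defined, and that the indexing conventions (starting from $T\subset B$ in Definition \ref{def:decoder_sequence}) give $\B(\Ac_\ell)\subset\Bc_\ell$. The existence of the peeling is the content of Proposition \ref{prop:execution_is_contained}, and I would take it as given here.
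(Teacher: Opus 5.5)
Your proof is correct and follows essentially the same route as the paper. Containment in $\Bc_\ell(\rs(W))$ comes from $\rs(\hat X_A)\subset\Ac_\ell(\rs(W))$ together with $\rs(W)=\Bc_0(\rs(W))$, the vanishing of the rows indexed by $\Bp(\Ac_\ell(\rs(W)))$ comes from Lemma \ref{lemma:applying_the_inverse_of_a_submatrix_is_equivalent_to_solving_a_linear_system}, and the column claim follows because $\hat X_A$ is a left multiple of $W$; your check that the peeling makes $\subm{H}{T}{S}$ unit lower triangular, hence invertible, is a useful addition the paper leaves implicit.
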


\begin{proof}
First, $\rs(\hat{X}_A) \subset \Ac_\ell(\rs(W))$, so $\rs(H\hat{X}_A)\subset \Bc_\ell(\rs(W))$; in addition, $\rs(W) \subset \Bc_\ell(\rs(W))$, so $\rs(W+H\hat{X}_A)$ is a subset of $\Bc_\ell(\rs(W))$. Next, note that Lemma \ref{lemma:applying_the_inverse_of_a_submatrix_is_equivalent_to_solving_a_linear_system} implies that the rows of $W+H\hat{X}_A$ indexed by $\Bp(\Ac_\ell(\rs(W))$ are zero. Finally, $\cs(\hat{X}_A)\subset\cs(W)$ implies $\cs(W+H\hat{X}_A)\subset\cs(W)$. 
\end{proof}

\subsubsection{The second correction}\label{sec:the_second_correction}

The final step of the proof of Proposition \ref{prop:output_is_correct} is:

\begin{lemma}\label{lemma:b_side_correction_equals_remaining_error}
$\hat{X}_B=X_B''$.
\end{lemma}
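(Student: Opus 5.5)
The plan is to use the second part of Lemma \ref{lemma:applying_the_inverse_of_a_submatrix_is_equivalent_to_solving_a_linear_system}, with $S=\Ap(\Bc_\ell(\cs(W)))$, $T=\Bc_\ell(\cs(W))$, $U=H\hat{X}_A+W$ and $Y=X_B''$. The submatrix $\subm{H}{T}{S}$ is invertible: listing the peeling edges $(a_i,b_i)$ in order, the peeling condition (if $i<j$ then $a_i,b_j$ are not connected) makes $\subm{H}{T}{S}$ triangular with ones on the diagonal. We also need this peeling to exist. That is where Proposition \ref{prop:execution_is_contained} enters: $\Bc_\ell(\cs(W))\subset\Bcb(D)$ has at most $\gamma_B|E|$ edges, so Lemma \ref{lemma:lossless_implies_unique_neighbor} gives unique neighbor expansion, and repeatedly removing a vertex with a unique neighbor produces a peeling. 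With invertibility in hand, the Lemma says it suffices to verify two conditions: (i) $\cs(X_B'')\subset T$, and (ii) $\cs(U+X_B''H)\subset A\backslash S$.

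For (ii), first note that the remaining syndrome equals $X_B''H$. Indeed, $(0,X_B'')$ is stabilizer equivalent to $(X_A'+\hat{X}_A,X_B')$, which is stabilizer equivalent to the original error plus the correction $\hat{X}_A$. Stabilizers have zero syndrome, so the syndrome of $(0,X_B'')$ is $W+H\hat{X}_A=U$. Hence $U+X_B''H=0$, and its column support is empty.

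For (i), Lemma \ref{lemma:support_of_remaining_syndrome} gives $\cs(X_B''H)=\cs(U)\subset\cs(W)$. We also know $\cs(X_B'')\subset D$ from the application of Lemma \ref{lemma:stabilizer_equivalent_error} in the proof overview. Lemma \ref{lemma:about_column_support} then gives $\cs(X_B'')\subset\Bcu_D(\cs(W))$. Since $|\E(D)|\le\gamma_B|E|$, Lemma \ref{lemma:improved_lower_bound_on_decoder_sequence} gives $\Bcu_D(\cs(W))\subset\Bc_{\ell_B}(\cs(W))\subset\Bc_\ell(\cs(W))=T$, using $\ell\ge\ell_B$ and that the sequence is increasing.

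The bound on $|\E(D)|$ deserves a remark. We need $|\E(D)|\le\gamma_B|E|$, and this follows because $D\subset\Bcb(D)$ and Proposition \ref{prop:execution_is_contained} bounds $|\E(\Bcb(D))|$. The unique neighbor parameter is $2\delta_B$, so the lemmas above must be applied with $2\delta$ in place of $\delta$. This matches the theorem's hypotheses $\beta\le1-2\delta$ and the $\ell_B$ with base $\frac{1-\beta}{2\delta_B}$, so the indices line up.

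Once (i) and (ii) hold, the equivalence in Lemma \ref{lemma:applying_the_inverse_of_a_submatrix_is_equivalent_to_solving_a_linear_system} gives $X_B''=U\esubmi{H}{S}{T}=\hat{X}_B$ by \eqref{eq:correction_x_b}. The main obstacle I expect is the bookkeeping in the previous paragraph: confirming that the expansion parameters and $\ell$ chosen in the theorem make the lower-bound lemma applicable, and that the peeling actually exists. The algebra itself is immediate.
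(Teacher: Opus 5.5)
Your proof is correct and follows essentially the same route as the paper. Both rest on the same two facts: the remaining syndrome $W+H\hat{X}_A$ equals $X_B''H$, and $\cs(X_B'')\subset\Bcu_D(\cs(W))\subset\Bc_\ell(\cs(W))$. The paper then multiplies out $X_B''H\esubmi{H}{\Ap(\Bc_\ell(\cs(W)))}{\Bc_\ell(\cs(W))}$ via the block form \eqref{eq:hhi}, whereas you invoke the direction (2)$\Rightarrow$(1) of Lemma \ref{lemma:applying_the_inverse_of_a_submatrix_is_equivalent_to_solving_a_linear_system}, whose proof is that same computation. Your extra bookkeeping is accurate and fills in points the paper leaves implicit: existence and invertibility of the peeling, $\ell\ge\ell_B$ together with monotonicity of $\Bc_i$, and the $2\delta$ unique-neighbor parameter matching the theorem's $\beta$ and $\ell_B$.
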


\begin{proof}
\begin{align}
\hat{X}_B &=(H\hat{X}_A+W)\esubmi{H}{\Ap(\Bc_\ell(\cs(W)))}{\Bc_\ell(\cs(W))} \\
&= X_B''H\esubmi{H}{\Ap(\Bc_\ell(\cs(W)))}{\Bc_\ell(\cs(W))} \\
&= X_B'' \twobytwo{\subm{I}{\Bc_\ell(\cs(W))}{\Bc_\ell(\cs(W))}}{0}{H''}{0} \\
&=X_B''
\end{align}
where $H''=\subm{H}{B\backslash\Bc_\ell(\cs(W))}{\Ap(\Bc_\ell(\cs(W)))}\submi{H}{\Ap(\Bc_\ell(\cs(W)))}{\Bc_\ell(\cs(W))}$. The last step is because the column support of $X_B''$ is a subset of $\Bcu_D(\cs(W))$, which is included in $\Bc_\ell(\cs(W))$.
\end{proof}

\subsection{Proof of Proposition \ref{prop:linear_time}: linear time and logarithmic depth}\label{sec:proof_of_linear_time}

\subsubsection{Computing the row and column support of the syndrome}

\begin{lemma}\label{lemma:computation_of_the_row_and_column_support_of_the_syndrome}
The row and column support of any $W \in \F^{B\times A}$ can be computed in time $O(|A||B|)$ in the uniform cost model or by a circuit of depth $O(\log|A|+\log|B|)$ with $O(|A||B|)$ gates. 
\end{lemma}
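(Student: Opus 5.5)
The plan is to reduce both computations to OR-reductions over rows and columns of $W$. Recall that a row $b \in B$ lies in $\rs(W)$ exactly when $\bigvee_{a \in A} W_{b,a} = 1$, and a column $a \in A$ lies in $\cs(W)$ exactly when $\bigvee_{b \in B} W_{b,a} = 1$. So it suffices to compute $|B|$ ORs of length $|A|$ and $|A|$ ORs of length $|B|$.

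For the sequential bound, in the uniform cost model, I would initialize two boolean arrays indexed by $B$ and by $A$ to zero. Then I would make a single pass over all $|A||B|$ entries of $W$. Whenever $W_{b,a}=1$, the pass sets the entry for $b$ in the first array and the entry for $a$ in the second array. Each entry is handled in constant time, so the total time is $O(|A||B|)$. Initialization and reading off the two sets cost only $O(|A|+|B|)$.

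For the circuit bound, I would compute each row OR with a balanced binary tree of fan-in-two OR gates. Such a tree has $|A|-1$ gates and depth $\lceil \log|A|\rceil$. I would do the same for each column OR: $|B|-1$ gates and depth $\lceil\log|B|\rceil$. All $|A|+|B|$ trees run in parallel and share the input wires $W_{b,a}$, since fan-out of inputs is free or at worst constant at $2$. The total gate count is then
\[
|B|(|A|-1)+|A|(|B|-1) = O(|A||B|),
\]
and the depth is $O(\max(\log|A|,\log|B|)) \subset O(\log|A|+\log|B|)$.

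I expect no real obstacle here. The only point needing care is the circuit model convention: bounded fan-in gates, with each input bit used by exactly two trees. If fan-out must be bounded, I would insert one layer of duplication gates, which adds $O(|A||B|)$ gates and constant depth. Neither change affects the bounds.
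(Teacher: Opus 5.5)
Your proposal is correct and follows essentially the same route as the paper: a single pass over the entries of $W$ that marks $a$ in $\cs(W)$ and $b$ in $\rs(W)$ whenever $W_{b,a}=1$, and binary trees of OR gates per row (depth $O(\log|A|)$) and per column (depth $O(\log|B|)$). Your remark about duplicating each input bit once also fits the paper's convention of counting fan-out as a separate gate, and it costs only constant depth and $O(|A||B|)$ extra gates.
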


\begin{proof}
The sequential algorithm: for $b \in B$, for $a\in A$, if $W_{b,a}=1$, add $a$ to $\cs(W)$ and add $b$ to $\rs(W)$. 

The parallel algorithm: whether a row of $W$ is not zero can be computed by a binary tree of OR gates, so the depth is $O(\log|A|)$ and the number of gates is $O(|A|)$ per row. Similarly, whether a column of $W$ is not zero can be computed by a binary tree of OR gates with depth $O(\log|B|)$ and with $O(|B|)$ gates per column. 
\end{proof}

\subsubsection{Finding rows and columns}

\begin{lemma}\label{lemma:computation_of_sequences}
Let $G=(A \cup B, E)$ be a bipartite graph with maximum left and right degrees $\dax,\dbx$ respectively. Let $\beta \in [0,1)$, $\ell \in \mathbb{N}$, $S \subset A$, $T \subset B$. Then, the sets $\Ac_\ell(T)$ and $\Bc_\ell(S)$ can be computed in time $O(|E|)$ in the uniform cost model or by a circuit with depth $O(\ell(\log\dax+\log\dbx))$ and with $O(\ell |E|)$ gates. 
\end{lemma}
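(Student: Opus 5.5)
We give two algorithms, one sequential and one parallel, and treat only $\Ac_\ell(T)$ with the companion sets $\Bc_i(T)$. The case of $\Bc_\ell(S)$ is symmetric: exchange the roles of $A$ and $B$.

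\emph{Sequential algorithm.} This is a queue-based propagation in the style of the peeling decoder, with a counter on every vertex. For each $a\in A$ we keep $c(a)=|\B(a)\cap\Bc(T)|$, the number of edges from $a$ into the current right set, and a flag saying whether $a$ has already been added. We also keep a right membership flag for $B$ and a work queue of newly added right vertices, initialised with $T$. To process a right vertex $b$, we increment $c(a)$ for every neighbour $a$. As soon as $c(a)>\beta|\E(a)|$ and $a$ is not yet flagged, we flag $a$, put it in $\Ac$, and push every neighbour of $a$ that is not yet on the right onto the queue.

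Each right vertex is processed at most once and each left vertex is added at most once. Hence every edge is touched $O(1)$ times, and the total cost is $O(|E|)$ plus $O(|A|+|B|)$ for initialisation; we absorb the latter, or else initialise flags lazily.

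To stop after exactly $\ell$ rounds, we process the queue in layers: layer $i$ is $\Bc_i(T)\setminus\Bc_{i-1}(T)$, and we stop after round $\ell$. The point to check is that this layered process computes exactly the sequence of Definition~\ref{def:decoder_sequence}. We verify by induction on $i$ that after processing layers $0,\dots,i$, the counter $c(a)$ equals $|\B(a)\cap\Bc_i(T)|$. Consequently the set of flagged vertices is $\Ab(\Bc_i(T))=\Ac_{i+1}(T)$, using the monotonicity of Lemma~\ref{lemma:monotonicity_of_neighborhood}. Stopping early keeps the cost at $O(|E|)$, since $\ell$ only ever truncates the work.

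\emph{Parallel circuit.} We unroll the recursion into $\ell$ stages. Stage $i+1$ takes the indicator vectors of $\Bc_i(T)$ and $\Ac_i(T)$. For each $a\in A$ it computes $|\B(a)\cap\Bc_i(T)|$ with a binary adder tree over the at most $\dax$ neighbours, and compares the result with the constant $\lfloor\beta|\E(a)|\rfloor$ hardwired into the circuit. An adder tree of $O(\log\dax)$-bit numbers has depth $O(\log\dax)$ when carry-save or tree addition is used, and $O(\dax)$ gates up to the word-size factor, which we treat as constant in the same spirit as the stated bound. Summed over $A$, this costs $O(|E|)$ gates per stage.

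For each $b\in B$ the stage then sets $b\in\Bc_{i+1}(T)$ iff $b\in\Bc_i(T)$ or some neighbour is in $\Ac_{i+1}(T)$. This is an OR tree of depth $O(\log\dbx)$, again $O(|E|)$ gates in total. One stage therefore has depth $O(\log\dax+\log\dbx)$ and $O(|E|)$ gates, and $\ell$ stages give the claimed depth $O(\ell(\log\dax+\log\dbx))$ and $O(\ell|E|)$ gates.

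The main obstacle is the gate-count bookkeeping for the threshold comparison: the counts have $O(\log\dax)$ bits, so a strictly bit-level count could pick up an extra logarithmic factor. I would handle this with a threshold/counting circuit whose size is linear in the fan-in, and whose depth is logarithmic in it. The correctness of the layered sequential algorithm is routine by the induction above.
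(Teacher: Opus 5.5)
Your proposal is correct and follows essentially the same proof as the paper. The sequential part is a layered, counter-based propagation that costs $O(|E|)$; the paper queues the newly added left vertices $\Ac_i(T)\setminus\Ac_{i-1}(T)$ rather than right vertices, which makes no difference. The parallel part is an $\ell$-stage circuit with a linear-size, logarithmic-depth threshold (Hamming-weight) circuit at each $a\in A$ followed by an OR tree at each $b\in B$. The paper resolves the obstacle you flag exactly as you propose, by invoking existing linear-size, logarithmic-depth Hamming-weight counting circuits. It also explicitly counts the binary fan-out trees that copy vertex bits onto edges; you leave these implicit, but they do not change the bounds.
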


\begin{proof}
The sequential algorithm to compute $\Ac_\ell(T)$ is:
\begin{enumerate}
\item Initialize $A':=\emptyset$, $B':=T$ (as arrays storing the respective indicator vector). Initialize array $M$ indexed by $A$. Initialize empty list $L_1$.  
\item For $a \in A$, $M(a):=|\B(a)\cap T|$, if $M(a)> \beta|\E(a)|$ add $a$ to the beginning of $L_1$. 
\item For $i$ from 1 to $\ell$
\begin{enumerate}
\item Create empty list $L_{i+1}$.
\item While $L_i$ is not empty
\begin{enumerate}
\item Let $a$ be the first element of $L_i$, remove $a$ from $L_i$ and add $a$ to $A'$. 
\item For each $b \in \B(a)\backslash B'$ add $b$ to $B'$, and for each $a\in\A(b)$ update $M(a):=M(a)+1$ and, moreover, if $M(a)-1 \leq \beta|\E(a)| < M(a)$, then add $a$ to the beginning of $L_{i+1}$. 
\end{enumerate}
\end{enumerate}
\item Output $A'$. 
\end{enumerate}
At the beginning of the $i$-th iteration, $M(a) = |\B(a) \cap \Bc_{i-1}(T)|$ and $L_i = \Ac_i(T)\backslash\Ac_{i-1}(T)$; the correctness of the algorithm follows. Each $a \in A$ can be added and removed from a list at most once; each edge is visited by the algorithm once during step 2 and at most twice during step 3, so the algorithm runs in time $O(|E|)$. 

The circuit to compute $\Ac_\ell(T)$ has $\ell$ layers. The $i$-th layer has input $|B|$ bits, indicating whether each $b \in B$ is an element of $\Bc_{i-1}(T)$. Then it takes the following steps:
\begin{enumerate}
\item For each $b \in B$, it uses a binary tree of fan-out gates to copy $|\E(b)|$ times the value $\I(b\in\Bc_{i-1}(T))$, where $\I$ denotes the indicator of a statement. This results in $|E|$ bits, indicating whether each $e \in E$ is incident to $\Bc_{i-1}(T)$; these are the input to Step 2. The original $|B|$ bits are an input to step 4.  
\item For each $a \in A$, it uses the circuits described in reference \cite[Sections III and IV]{parhami2009efficienthammingweight} to determine if more than $\beta|\E(a)|$ of the edges in $\E(a)$ are incident to $\Bc_{i-1}(T)$. This results in $|A|$ bits, indicating whether each $a \in A$ is an element of $\Ac_i(T)$. The last layer outputs these $|A|$ bits and stops; the other layers continue. 
\item For each $a \in A$, it copies $|\E(a)|-1$ times the value $\I(a\in\Ac_{i-1}(T))$. Combined with the output of Step 2, this gives $|E|$ bits indicating whether each $e \in E$ is incident to $\Ac_i(T)$. 
\item For each $b\in B$, it uses a binary tree of OR gates to determine whether $b \in \Bc_{i-1}(T)$ or there is some $e \in \E(b)$ that is incident to $\Ac_i(T)$. This results in $|B|$ bits, indicating whether each $b\in B$ is an element of $\Bc_i(T)$. The $i$-th layer outputs these bits, and they become the input of layer $i+1$. 
\end{enumerate}

The $i$-th layer has depth $O(\log\dax+\log\dbx))$ and has $O(|E|)$ gates. Step 1 has depth $O(\log\dbx)$ and has $O(|E|)$ gates, assuming a classical circuit model that counts fan-out as a separate gate. Step 2 has depth $O(\log\dax)$ and has $O(|E|)$ gates, because reference \cite{parhami2009efficienthammingweight} shows that the computation for each $a \in A$ can be done in depth $O(\log|\E(a)|)$ and with $O(|\E(a)|)$ gates. Step 3  has depth $O(\log\dax)$ and has $O(|E|)$ gates. Step 4 has depth $O(\log\dbx)$ and has $O(|E|)$ gates. 

The sequential and parallel algorithms to compute $\Bc_\ell(S)$ are similar. 
\end{proof}

\subsubsection{Computing peelings}

It is convenient to introduce the following sequences:

\begin{definition}\label{def:optimal_peeling_sequence}
Let $G=(A \cup B, E)$ be a bipartite graph. The optimal peeling sequence for $S \subset A$ is given by
\begin{align}
\Acp_0(S) &= S \\
\Acp_{i+1}(S) &= \Acp_i(S) \backslash \A(\Bu(\Acp_i(S)))
\end{align}
Similarly, the optimal peeling sequence for $T \subset B$ is given by 
\begin{align}
\Bcp_0(T) &=T \\
\Bcp_{i+1}(T) &= \Bcp_i(T) \backslash \B(\Au(\Bcp_i(T)))
\end{align}
\end{definition}

The sequences are optimal in the sense that at each stage, they remove all vertices that are connected to a unique neighbor. 

In a unique neighbor expander, the optimal peeling sequence of a small set quickly reaches the empty set:

\begin{lemma}\label{lemma:optimal_peeling_sequence_in_a_unique_neighbor_expander}
Let $G=(A \cup B, E)$ be a $(\gamma_A,\gamma_B,\delta_A,\delta_B)$ two-sided unique neighbor expander. Let 
\begin{align}
j_A &= \min\left(|A|,\left\lceil\frac{\log|E|}{\log\frac{1}{\delta_A}}\right\rceil\right)\\
j_B &= \min\left(|B|,\left\lceil\frac{\log|E|}{\log\frac{1}{\delta_B}}\right\rceil\right)
\end{align}
If $S\subset A$ has $|\E(S)|\leq \gamma_A|E|$ then $\Acp_{j_A}(S)=\emptyset$ and if $T\subset B$ has $|E(T)|\leq\gamma_B |E|$ then $\Bcp_{j_B}(T)=\emptyset$. 
\end{lemma}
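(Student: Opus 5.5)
The plan is to show that the number of edges leaving the optimal peeling sequence shrinks geometrically, by a factor less than $\delta_A$ per step, and then stop when it drops below one. I argue for $S\subset A$; the case $T\subset B$ is symmetric, with $\Au$ in place of $\Bu$.

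\textbf{Step 1.} Let $S_i=\Acp_i(S)$. These sets are decreasing, so each has $|\E(S_i)|\leq|\E(S)|\leq\gamma_A|E|$ and the unique neighbor expansion hypothesis \eqref{eq:unique_neighbor_expander_a} applies to every nonempty $S_i$.

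\textbf{Step 2.} I apply Lemma~\ref{lemma:less_than_beta_fraction_unique_neighbors} with $\beta=0$. This gives that
\begin{align}
S_i'=\{s\in S_i:|\B(s)\cap\Bu(S_i)|=0\}
\end{align}
satisfies $|\E(S_i')|<\delta_A|\E(S_i)|$. Any vertex of $S_i$ adjacent to some vertex of $\Bu(S_i)$ lies in $\A(\Bu(S_i))$ and is removed. Hence $S_{i+1}=S_i\backslash\A(\Bu(S_i))$ is exactly $S_i'$, and
\begin{align}
|\E(S_{i+1})|<\delta_A|\E(S_i)|
\end{align}
whenever $S_i\neq\emptyset$.

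\textbf{Step 3.} Iterating the bound gives $|\E(S_j)|<\delta_A^j|E|$ for as long as the sets are nonempty. Take $j\geq\lceil\log|E|/\log(1/\delta_A)\rceil$; then $\delta_A^j|E|\leq 1$, so $|\E(S_j)|<1$ and $S_j$ has no edges. It remains to show that $S_j$ is empty, not merely edgeless. Isolated vertices are a nuisance here: a nonempty set with zero edges sits outside the expansion hypothesis, whose antecedent requires $0<|\E(S)|$. I would handle this by assuming the graph has no isolated vertices, which is standard and implicit, for example, in Lemma~\ref{lemma:size_of_container_for_an_error}. If a vertex of degree zero is possible, I would note that the statement must then be read modulo such vertices.

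\textbf{Step 4.} For the cap at $|A|$, I use strictness. While $S_i$ is nonempty, the inequality in Step 2 is strict, so $S_{i+1}\subsetneq S_i$ and each step removes at least one vertex. The sequence therefore reaches $\emptyset$ within $|S|\leq|A|$ steps. Once empty it stays empty, so both bounds hold and $\Acp_{j_A}(S)=\emptyset$.

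The main subtlety is the edge case in Step 3: expansion gives no information about edgeless sets, so the vanishing of $S_j$ relies on the no-isolated-vertices assumption. Apart from that, the proof is a direct application of Lemma~\ref{lemma:less_than_beta_fraction_unique_neighbors}.
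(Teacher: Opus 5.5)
Your proof is correct and follows the paper's argument. The paper's one-line proof is exactly the bound $|\E(\Acp_{i+1}(S))|<\delta_A|\E(\Acp_i(S))|$ from unique-neighbor expansion, which you obtain via Lemma~\ref{lemma:less_than_beta_fraction_unique_neighbors} at $\beta=0$ after checking that $\Acp_{i+1}(S)$ is precisely the set of vertices of $\Acp_i(S)$ with no unique neighbor. Your handling of the cap at $|A|$ (via strict decrease) and of isolated vertices fills in details that the paper leaves implicit; it assumes no isolated vertices elsewhere, e.g.\ in the proof of Lemma~\ref{lemma:computation_of_peelings}.
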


\begin{proof}
From $|\Bu(\Acp_i(S))|>(1-\delta_A)|E(\Acp_i(S))|$ follows $|\E(\Acp_{i+1}(S))|<\delta_A|\E(\Acp_i(S))|$. The argument for $T$ is similar.  
\end{proof}

A peeling of a set can be obtained by computing its optimal peeling sequence, and simultaneously matching each vertex that is removed to a unique neighbor: 

\begin{lemma}\label{lemma:computation_of_peelings}
Let $G$ be a $(\gamma_A,\gamma_B,\delta_A,\delta_B)$ two-sided unique neighbor expander. Let
\begin{align}
j_A &= \min\left(|A|,\left\lceil\frac{\log|E|}{\log\frac{1}{\delta_A}}\right\rceil\right)\\
j_B &= \min\left(|B|,\left\lceil\frac{\log|E|}{\log\frac{1}{\delta_B}}\right\rceil\right)
\end{align}
If $S \subset A$ has $0<|\E(S)|\leq\gamma_A|E|$, then a peeling for $S$ exists and can be computed in time $O(|E|)$ in the uniform cost model or by a circuit with depth $O(j_A(\log\dax+\log\dbx))$, and with $O(|E|j_A\dax)$ gates. Similarly, a peeling for $T \subset B$ with $0< |E(T)| \leq \gamma_B|E|$ can be computed in time $O(|E|)$ or by a circuit with depth $O(j_B(\log\dax+\log\dbx))$ and with $O(|E|j_B\dbx)$ gates.
\end{lemma}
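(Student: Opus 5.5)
The plan is to build the peeling from the optimal peeling sequence of Definition \ref{def:optimal_peeling_sequence} and to order it in reverse stage order. For each stage $i$ and each vertex $a\in\Acp_i(S)\backslash\Acp_{i+1}(S)$, $a$ has at least one neighbor in $\Bu(\Acp_i(S))$, so we fix one such unique neighbor $b(a)$. Distinct vertices removed at the same stage get distinct partners, because a unique neighbor of $\Acp_i(S)$ is adjacent to only one vertex of that set. Vertices removed at different stages also get distinct partners. Indeed, if $a$ is removed at stage $i$ and $a'$ at a later stage $i'>i$, then $a'\in\Acp_i(S)$, and $b(a)$ is adjacent only to $a$ within $\Acp_i(S)$. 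So $b(a)=b(a')$ would force $a=a'$.

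We list the pairs so that vertices removed at later stages come first: all pairs from stage $j_A-1$, then stage $j_A-2$, and so on down to stage $0$, in any order within a stage. We then check the peeling condition. Suppose $(a_k,b_k)$ comes after $(a_m,b_m)$, so $k>m$. Then $a_k$ was removed at a stage $i\le i'$, where $i'$ is the stage at which $a_m$ was removed, and therefore $a_k\in\Acp_{i'}(S)$. Since $b_m$ is a unique neighbor of $\Acp_{i'}(S)$ whose only neighbor there is $a_m\ne a_k$, the vertices $a_k$ and $b_m$ are not connected, as Definition \ref{def:peeling_of_a_set} requires. Existence of a full matching of $S$ then needs the sequence to reach $\emptyset$. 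Lemma \ref{lemma:optimal_peeling_sequence_in_a_unique_neighbor_expander} gives this, because every $\Acp_i(S)\subset S$ satisfies $|\E(\Acp_i(S))|\le\gamma_A|E|$. The same lemma shows that nonempty stages actually lose vertices, and it bounds the number of stages by $j_A$.

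For the sequential $O(|E|)$ bound, we run a standard peeling queue. We maintain for each $b\in\B(S)$ a counter of its neighbors remaining in the current set, together with a list of counters equal to $1$. When a vertex $a$ is removed, we decrement the counters of its neighbors, and any counter that drops to $1$ is enqueued. Processing stage by stage gives exactly the optimal sequence. Since each edge is touched $O(1)$ times, the total is $O(|E|)$. Recording for each removed vertex its stage and its partner lets us output the pairs in reverse stage order in linear time, by bucketing on the stage.

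For the circuit, we use $j_A$ layers, each mirroring Lemma \ref{lemma:computation_of_sequences}. A layer fans out membership bits of the current set onto edges, in depth $O(\log\dax)$. For each $b$ it decides whether exactly one incident edge is active, in depth $O(\log\dbx)$, using a counting or threshold circuit as in \cite{parhami2009efficienthammingweight}. It then fans these unique-neighbor bits back onto edges. Finally, each $a$ selects one incident edge to a unique neighbor with a priority (leftmost-one) circuit over its $\le\dax$ edges, in depth $O(\log\dax)$. The selection gives both the removal bit and the partner.

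The main obstacle is the gate count. The selection and priority encoding, together with writing the partner index for each $a$, costs $O(\dax)$ gates per edge, which gives $O(|E|\dax)$ per layer and $O(|E|j_A\dax)$ in total. The final ordering by stage is implicit in the stage labels, so no sorting circuit is needed. The statement for $T\subset B$ is the mirror argument with the roles of $A$ and $B$ exchanged, using $\Bcp$, $\Au$, $j_B$ and $\dbx$.
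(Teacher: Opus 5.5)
The matching you build is fine, but you list it in the wrong order, and your check of the peeling condition relies on a false step. Definition~\ref{def:peeling_of_a_set} requires that for $k>m$ the vertex $a_k$ is not adjacent to $b_m$. Equivalently, $b_m$ must be a unique neighbor of $\{a_m,a_{m+1},\dots,a_{|S|}\}$. In particular $b_1$ must be a unique neighbor of all of $S$, so the vertices removed at the \emph{earliest} stage must come first. In your verification you pass from ``$a_k$ was removed at a stage $i\le i'$'' to $a_k\in\Acp_{i'}(S)$. The sequence is decreasing, however, so for $i<i'$ a vertex removed at stage $i$ lies outside $\Acp_{i+1}(S)\supseteq\Acp_{i'}(S)$.

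This is not a harmless slip: your list fails to be a peeling whenever at least two stages are nonempty. Suppose $y$ is removed at stage $i'\geq 1$ with partner $c\in\Bu(\Acp_{i'}(S))$. Then $c\notin\Bu(\Acp_{i'-1}(S))$, since otherwise $y$ would already have been removed at stage $i'-1$. So $c$ has a second neighbor $x$ in $\Acp_{i'-1}(S)$. Because $y$ is the only neighbor of $c$ in $\Acp_{i'}(S)$, $x$ was removed at stage $i'-1$. In your order the pair $(y,c)$ precedes the pair of $x$, yet $x$ is adjacent to $c$. For instance, take $S=\{x,y\}$ with $\B(x)=\{u,v\}$ and $\B(y)=\{v\}$. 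The stages are $\{x\}$ then $\{y\}$, and your list $(y,v),(x,u)$ violates the definition, whereas $(x,u),(y,v)$ satisfies it.

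The fix is to list the pairs in forward stage order. Your argument then goes through with $i\geq i'$, giving $a_k\in\Acp_i(S)\subset\Acp_{i'}(S)$. This is what the paper does: its queue-based algorithm emits the pairs in order of removal, so no bucketing or reordering is needed. In the circuit, the edges selected in layer $i$ form the $i$-th block of the peeling. The order is not cosmetic: the forward substitution in Lemma~\ref{lemma:applying_the_inverse_of_a_submatrix} uses exactly the property that $b_j$ is not connected to $a_i$ for $i>j$.

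With this correction, the rest of your proposal coincides with the paper's proof:
\begin{itemize}
\item distinctness of partners;
\item termination within $j_A$ stages via Lemma~\ref{lemma:optimal_peeling_sequence_in_a_unique_neighbor_expander};
\item the counter-and-lists sequential algorithm;
\item the layered circuit, with an exactly-one test at each $b$ and an $O(d^2)$-gate edge selection at each $a$ of degree $d$.
\end{itemize}
One small detail in the circuit: the edge bits fed into the selection at $a$ must be ANDed with $a$'s membership bit in the current set, as in step 3 of the paper's layer. Otherwise already-removed vertices adjacent to current unique neighbors would select spurious edges.
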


\begin{proof}
To compute a peeling for $S$ in $O(|E|)$ time:
\begin{enumerate}
\item Set $i:=1$, $j:=1$. Set $A':=S$ (as array storing the indicator vector). Initialize empty list $L_1$. Initialize array $M$ indexed by $B$. Initialize a $2$ by $|S|$ array that will store the output. 
\item For $b \in B$, $M(b)=|\A(b)\cap S|$, if $M(b)=1$ add $b$ to the beginning of $L_1$. 
\item While list $L_j$ is not empty
\begin{enumerate}
\item Create empty list $L_{j+1}$
\item Repeat
\begin{enumerate}
\item $b$ is the first element of $L_j$, remove $b$ from $L_j$. 
\item If $M(b)=1$, let $a$ be the only neighbor of $b$ in $A'$. Set the $i$-th edge of the matching to be $(a_i,b_i):=(a,b)$. Update $i:=i+1$. Remove $a$ from $A'$. For $b'\in\B(a)$, update $M(b'):=M(b')-1$, if $M(b')=1$ add $b'$ to the beginning of $L_{j+1}$. 
\end{enumerate}
until $L_j$ is empty
\item Update $j:=j+1$. 
\end{enumerate}
\item Output the peeling $(a_1,b_1),\dots,(a_{|S|},b_{|S|})$.
\end{enumerate}

At any time, $A'$ is the set of vertices in $S$ that have not yet been matched, and for $b\in B$, $M(b)$ is the number of edges that connect it to $A'$. In particular, $M(b)$ can only decrease, and the elements of $B$ with $M(b)=1$ are precisely $\Bu(S\backslash\{a_1,\dots,a_{i-1}\})$, which is not empty by unique neighbor expansion. 

The lists $L_j$ are used to quickly find a vertex that is a unique neighbor of $A'$. $b\in B$ is added to one of the lists when it becomes a unique neighbor of $A'$, and is removed either when it is matched to its only neighbor in $A'$ or some time after it stops being connected to $A'$. 

Several lists $L_j$ are used to make sure that the order of edges in the output respects the structure of the optimal peeling sequence. Specifically, at the start of iteration $j$ of the while loop, $A'=\Acp_{j-1}(S)$ and elements of the list $L_j$ such that $M(b)=1$ are precisely the vertices in $\Bu(\Acp_{j-1}(S))$. 

Each $a \in S$ and each $b \in B$ can become a part of the matching at most once, so the output is indeed a matching. Moreover, when $b_i$ is added, the only edge from it to $S\backslash\{a_1,\dots,a_{i-1}\}$ is to $a_i$, so the additional condition for a peeling also holds. Therefore, the output of the algorithm is indeed a peeling for $S$. 

Steps 1 and 2 take time $O(|E|)$, maintainence of the lists takes time $O(|B|)=O(|E|)$ (assuming no isolated vertices), maintainance of $M$ takes times $O(|E|)$, maintainance of $A'$ and of the output array takes time $O(|A|)$. Therefore, the whole algorithm takes time $O(|E|)$ in the uniform cost model.  

The circuit to compute a peeling of $S$ has $j_A$ layers; these suffice by Lemma \ref{lemma:optimal_peeling_sequence_in_a_unique_neighbor_expander}. The $i$-th layer has input $|A|$ bits that indicate if each $a \in A$ is an element of $\Acp_{i-1}(S)$. Then, 
\begin{enumerate}
\item For $a \in A$, it copies $2|\E(a)|$ times the bit $\I(a \in \Acp_{i-1}(S))$. The result is two groups of $|E|$ bits that indicate whether each $e \in E$ is incident to $\Acp_{i-1}(S)$. The first of these groups is an input to step 2, and the second to step 3. The original $|A|$ bits are an input to step 4. 
\item For each $b \in B$, it uses the methods of reference \cite{parhami2009efficienthammingweight} to determine if exactly one edge in $\E(b)$ is incident to $\Acp_{i-1}(S)$. The result is $|B|$ bits indicating whether each $b\in B$ is a unique neighbor of $\Acp_{i-1}(S)$. 
\item For each $b \in B$, it copies $|\E(b)|-1$ times the bit $\I(b\in\Bu(\Acp_{i-1}(S)))$. This results in $|E|$ bits indicating whether each $e \in E$ is incident to $\Bu(\Acp_{i-1}(S))$. A layer of AND gates with the second group of bits from step 1 results in $|E|$ bits indicating whether each $e \in E$ is an edge between $\Acp_{i-1}(S)$ and $\Bu(\Acp_{i-1}(S))$. 
\item For each $a \in A$, it determines whether $a \in \Acp_i(S)$. Moreover, if some edge in $\E(a)$ is an edge between $\Acp_{i-1}$ and $\Bu(\Acp_{i-1}(S))$, then it selects one such edge to become a part of the peeling. An explicit construction of a circuit for this computation is given in Lemma \ref{lemma:explicit_circuit} in Section \ref{sec:explicit_circuit}. This results in $|A|$ bits that indicate whether each $a \in A$ is an element of $\Acp_i(S)$ and this becomes the input to the next layer. Additionally, there are $|E|$ bits containing the indicator vector of the $i$-th layer of the peeling; these become the $i$-th part of the output. 
\end{enumerate}

Step 1 has depth $O(\log\dax)$ and has $O(|E|)$ gates. Step 2 has depth $O(\log\dbx)$ and has $O(|E|)$ gates. Step 3 has depth $O(\log\dbx)$ and has $O(|E|)$ gates. Step 4 has depth $O(\log\dax)$ and has $O(|E|\dax)$ gates because Lemma \ref{lemma:explicit_circuit} shows that the computation for each $a \in A$ can be done in depth $O(\log |\E(a)|)$ and with $O(|\E(a)|^2)$ gates. 

The sequential and parallel algorithms to compute a peeling of $T \subset B$ are similar. 
\end{proof}

\subsubsection{Computing the corrections}

The submatrix corresponding to a peeling is lower triangular, so its inverse can be applied by forward substitution. If the peeling respects the structure of the optimal peeling sequence, then blockwise forward substitution can be used to parallelize. 

\begin{lemma}\label{lemma:applying_the_inverse_of_a_submatrix}
Let $G=(A \cup B, E)$ be a $(\gamma_A,\gamma_B,\delta_A,\delta_B)$ two-sided unique neighbor expander with adjacency matrix $H \in \F^{B \times A}$ and with maximum left and right degrees $\dax$, $\dbx$. Let
\begin{align}
j_A &= \min \left( |A|. \left\lceil \frac{\log|E|}{\log\frac{1}{\delta_A}} \right\rceil \right)\\
j_B &= \min \left( |B|, \left\lceil \frac{\log|E|}{\log\frac{1}{\delta_B}} \right\rceil \right)
\end{align}
Let $S \subset A$, $0<|\E(S)|\leq\gamma_A|E|$, and let $(a_i,b_i)$, $i=1,\dots,|S|$ be the peeling of $S$ computed in Lemma \ref{lemma:computation_of_peelings}. Then, for every $W \in \F^{B \times A}$, the pair $ \hat{X}=\esubmi{H}{S}{\Bp(S)}W$ and $\hat{W}=W+H\hat{X}$ can be computed in time $O(|E||A|)$ in the uniform cost model or by a circuit with depth $O(j_A(\log\dax+\log\dbx))$ and with $O(|E||A|j_A)$ gates. Similarly, if $T \subset B$, $0<|\E(T)|\leq\gamma_B|E|$ and the peeling of $T$ is computed using Lemma \ref{lemma:computation_of_peelings}, then for every $W \in \F^{B \times A}$, the pair $\hat{X}=W\esubmi{H}{\Ap(T)}{T}$ and $\hat{W}=W+\hat{X}H$ can be computed in time $O(|E||B|)$ or by a circuit with depth $O(j_B(\log\dax+\log\dbx))$ and with $O(|E||B|j_B)$ gates. 
\end{lemma}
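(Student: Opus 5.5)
The plan is to reduce the statement to triangular forward substitution, using Lemma \ref{lemma:applying_the_inverse_of_a_submatrix_is_equivalent_to_solving_a_linear_system}. By that Lemma, $\hat{X}=\esubmi{H}{S}{\Bp(S)}W$ is the unique $\hat{X}$ with $\rs(\hat{X})\subset S$ and with the rows of $W+H\hat{X}$ indexed by $\Bp(S)$ equal to zero. Order $S$ and $\Bp(S)$ as in the peeling $(a_1,b_1),\dots,(a_{|S|},b_{|S|})$. Since $a_i$ is not connected to $b_j$ for $i>j$, the submatrix $\subm{H}{\Bp(S)}{S}$ is triangular with unit diagonal. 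This gives a sequential rule: row $a_i$ of $\hat{X}$ equals row $b_i$ of the current residual $W+H\hat{X}$, computed with the rows $a_1,\dots,a_{i-1}$ already fixed.

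For the sequential version, I would maintain the residual $\hat{W}$, initialised to $W$, and process $i=1,\dots,|S|$ in order. At step $i$, set row $a_i$ of $\hat{X}$ to row $b_i$ of $\hat{W}$. Then add that row (length $|A|$) to every row $b\in\B(a_i)$ of $\hat{W}$; this zeroes row $b_i$. Because later $a_k$ with $k>i$ are not adjacent to $b_i$, row $b_i$ is never touched again. Each step costs $O(|\E(a_i)||A|)$, so the total is $O(|E||A|)$, plus $O(|A||B|)$ to copy $W$. That copying cost is $O(|E||A|)$ assuming no isolated vertices. Correctness then follows from the characterisation in Lemma \ref{lemma:applying_the_inverse_of_a_submatrix_is_equivalent_to_solving_a_linear_system}.

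For the parallel version, I would use blockwise forward substitution along the optimal peeling sequence. The peeling from Lemma \ref{lemma:computation_of_peelings} is produced in layers $L_j=\Acp_{j-1}(S)\backslash\Acp_j(S)$. By Lemma \ref{lemma:optimal_peeling_sequence_in_a_unique_neighbor_expander} there are at most $j_A$ such layers. Within a layer, each matched $b_i$ is a unique neighbor of $\Acp_{j-1}(S)$, so $b_i$ is not adjacent to any other vertex of the same layer. The diagonal block is therefore the identity, and all rows $a_i$ in the layer can be read off simultaneously from the current residual.

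Each layer then does the following:
\begin{enumerate}
\item Copy row $b_i$ of $\hat{W}$ into row $a_i$ of $\hat{X}$.
\item Fan each selected row out to its $\le\dax$ neighbours, which has depth $O(\log\dax)$.
\item At every $b\in B$, XOR the contributions from its neighbours in the layer into $\hat{W}$ using a tree of depth $O(\log\dbx)$.
\end{enumerate}
This is done columnwise for all $|A|$ columns in parallel, so each layer uses $O(|E||A|)$ gates. Over $j_A$ layers this gives depth $O(j_A(\log\dax+\log\dbx))$ and $O(|E||A|j_A)$ gates.

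The main obstacle is justifying the simultaneous update: showing that within a layer the diagonal block is the identity and that residual rows $b_i$ of the current layer are unaffected by the other vertices updated in the same layer. I would verify this from the unique-neighbor property of the layers, together with the peeling order condition for earlier layers. The second half of the statement, for $T\subset B$ and $\hat{X}=W\esubmi{H}{\Ap(T)}{T}$, is the transposed argument, operating on columns with the roles of $A$ and $B$ swapped.
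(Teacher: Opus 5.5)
Your proposal is correct and is essentially the paper's proof. You use forward substitution along the peeling for the sequential bound. For the circuit you use blockwise forward substitution along the layers $\Acp_{j-1}(S)\backslash\Acp_j(S)$: the diagonal blocks are identities because the matched vertices lie in $\Bu(\Acp_{j-1}(S))$, and there are at most $j_A$ layers by Lemma \ref{lemma:optimal_peeling_sequence_in_a_unique_neighbor_expander}. In both cases correctness follows from Lemma \ref{lemma:applying_the_inverse_of_a_submatrix_is_equivalent_to_solving_a_linear_system}. The one step you leave implicit is how the circuit moves row $b_i$ of $\hat{W}$ into row $a_i$ of $\hat{X}$ when the matching is an input rather than hard-wired. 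The paper fans each row of $\hat{W}$ out along its edges, ANDs with $|A|$ copies of the layer's matching indicator, and ORs over $\E(a)$, which stays within $O(\log\dax+\log\dbx)$ depth and $O(|E||A|)$ gates per layer.
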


\begin{proof}
The sequential algorithm is forward substitution: 
\begin{enumerate}
\item Initialize $\hat{X}:=0$, $\hat{W}:=W$. 
\item For $i=1,\dots,|S|$:
\begin{enumerate}
\item Set $\subm{\hat{X}}{a_i}{A}:=\subm{\hat{W}}{b_i}{A}$. 
\item For $b\in \B(a_i)$, set $\subm{\hat{W}}{b}{A}:=\subm{\hat{W}}{b}{A}+\subm{\hat{X}}{a_i}{A}$. 
\end{enumerate}
\end{enumerate}
Each row update takes time $O(|A|)$, so the total number of operations is $O(|E||A|)$. At the beginning and after each execution of the loop, $\hat{W}=W+H\hat{X}$ holds. After the $j$-th execution of the loop, $\subm{\hat{W}}{b_j}{A}=0$ holds. This is not disturbed by any subsequent step, because $b_j$ is not connected to $a_i$ for any $i>j$. At the end $\rs(\hat{X}) \subset S$ and $\rs(\hat{W})\subset B\backslash \Bp(S)$, so Lemma \ref{lemma:applying_the_inverse_of_a_submatrix_is_equivalent_to_solving_a_linear_system} implies that $\hat{X}=\esubmi{H}{S}{\Bp(S)}W$. 

The parallel algorithm is blockwise forward substitution; indeed, the submatrix $\subm{H}{\Bp(S)}{S}$ is block lower triangular with identity blocks along the diagonal. In order to be able to refer to various submatrices, it is helpful to introduce shorthand notation for the set of vertices that are removed at the $i$-th step of the optimal peeling sequence:
\begin{align}
S_i=\Acp_{i-1}(S) \backslash \Acp_i(S)
\end{align}
and shorthand notation for the set of vertices that are removed in the first $i$ steps:
\begin{align}
S_i'=S \backslash \Acp_i(S) = \cup_{i'=1}^i S_{i'}
\end{align}
It is also helpful to be able to refer to the subset of $\Bp(S)$ that corresponds to $S_i$ or to $S_i'$. Therefore, for each $S' \subset S$, let $\Bp(S',S)$ denote the subset of $\Bp(S)$ that corresponds to $S'$. Thus, for each $i$, the submatrix $\subm{H}{\Bp(S_i',S)}{S_i'}$ is blockwise lower triangular and for each $i$ the diagonal block $\subm{H}{\Bp(S_i,S)}{S_i}$ is the identity. 

It is also helpful to introduce the notation $\X{0}=0$, $\W{0}_0=W$, and
\begin{align}
\X{i} &= \X{i-1}+\esubmi{H}{S_i}{\Bp(S_i,S)} \W{i-1} \\
\W{i} &= \W{i-1} + H (\X{i}+\X{i-1}) \\
&=W + H\X{i}
\end{align}
for $i=1, \dots, j_A$.

The parallel algorithm for computing the corrections has input $|B||A|$ bits representing $W$ and $j_A$ groups of $|E|$ bits, representing the indicator vector of the matching between $S_i$ and $\Bp(S_i,S)$ for $i=1,\dots,j_A$; these $j_A$ groups are the outputs of the parallel algorithm in Lemma \ref{lemma:computation_of_peelings}. The circuit for computing the corrections consists of pre-processing and $j_A$ layers. The pre-processing copies $|A|-1$ times each group of $|E|$ bits; using binary trees of fan-out gates, this takes depth $O(\log|A|)$ and $O(|E||A|j_A)$ gates. The $i$-th layer has input $|B||A|$ bits representing $\hat{W}_{i-1}$, $|A|^2$ bits representing $\hat{X}_{i-1}$, and $|E||A|$ bits representing $|A|$ copies of the indicator vector of the matching between $S_i$ and $\Bp(S_i,S)$. Then:
\begin{enumerate}
\item For each $b \in B$ and $a \in A$, copy $|\E(b)|$ times bit $\subm{\W{i-1}}{b}{a}$. This results in $|E||A|$ bits that are an input to step 2. The original $|B||A|$ bits of $\W{i-1}$ are an input to step 6. 
\item Apply a layer of AND gates between the $|E||A|$ bits of the previous step and the $|A|$ copies of the matching between $S_i$ and $\Bp(S_i,S)$. This results in $|E||A|$ bits such that for each $(a,b) \in E$, if $(a,b)$ is not part of the matching then the corresponding $|A|$ bits are zero, and if $(a,b)$ is part of matching then the corresponding $|A|$ bits are a copy of the row $\subm{\W{i-1}}{b}{A}$. 
\item For each $a,a' \in A$, take the OR of the bits from the previous step indexed by $((a,b),a')$ for $(a,b) \in \E(a)$. This results in $|A|^2$ bits representing $\X{i}+\X{i-1}$. 
\item Apply a layer of CNOT gates with controls the bits representing $\X{i}+\X{i-1}$ and targets the bits representing $\X{i-1}$. Thus, one register still holds $\X{i}+\X{i-1}$, and the other now holds $\X{i}$, which becomes an input to the next layer. 
\item For each $a,a' \in A$, copy $|\E(a)|-1$ times the bit $\subm{\X{i}}{a}{a'}+\subm{\X{i-1}}{a}{a'}$. This results in $|E||A|$ bits, representing, for each $(b,a)\in E$ and $a' \in A$, the intermediate result $\subm{H}{b}{a}(\subm{\X{i}}{a}{a'}+\subm{\X{i-1}}{a}{a'})$ of the matrix mutiplication $H(\X{i}+\X{i-1})$. 
\item For each $b \in B$, $a' \in A$, XOR the bit $\subm{\W{i-1}}{b}{a'}$ and the bits $\subm{H}{b}{a}(\subm{\X{i}}{a}{a'}+\subm{\X{i-1}}{a}{a'})$ for $(b,a) \in \E(b)$. This results in $|B||A|$ bits representing $\W{i}$, which become an input to the next layer. 
\end{enumerate}

Step 1 has depth $O(\log\dbx)$ and has $O(|A||E|)$ gates. Step 2 has depth $O(1)$ and has $O(|E||A|)$ gates. Step 3 has depth $O(\log\dax)$ and has $O(|A||E|)$ gates. Step 4 has depth $O(1)$ and has $O(|A|^2)$ gates. Step 5 has depth $O(\log\dax)$ and has $O(|E||A|)$ gates. Step 6 has depth $O(\log\dbx)$ and has $O(|E||A|)$ gates. 

Induction on $i$ shows that $\rs(\X{i}) \subset S_i'$ and that the rows of $\W{i}$ indexed by $\Bp(S_i',S)$ are zero; the inductive step uses the fact that the vertices in $\Bp(S_i,S)$ are a subset of $\Bu(\Acp_{i-1}(S))$ so layers after the $i$-th do not disturb rows indexed by $\Bp(S_i,S)$ of the remaining syndrome. Then, Lemma \ref{lemma:applying_the_inverse_of_a_submatrix_is_equivalent_to_solving_a_linear_system} implies that
\begin{align}
\X{i}=\esubmi{H}{S_i'}{\Bp(S_i',S)} W
\end{align}
Applying this for $i=j_A$ proves the correctness of the parallel algorithm, because Lemma \ref{lemma:optimal_peeling_sequence_in_a_unique_neighbor_expander} implies $S'_{j_A}=S$. 

The sequential and parallel algorithms to compute the correction involving the peeling of $T$ are similar, but operate with columns instead of rows. 
\end{proof}

\subsubsection{Explicit circuit construction for the proof of Lemma \ref{lemma:computation_of_peelings}}\label{sec:explicit_circuit}

It will be helpful to consider the following first:

\begin{definition}\label{def:cumulative_or}
A $d$ input cumulative OR circuit has $d$ input bits $\mu_1,\dots,\mu_d$ and outputs the $d$ bits
\begin{align}
\nu_1 &= \mu_1 \\
\nu_2 &= \mu_1 \lor \mu_2 \\
\nu_3 &= \mu_1 \lor \mu_2 \lor \mu_3 \\
& \dots \\
\nu_d &= \mu_1 \lor \mu_2 \lor \dots \lor \mu_d
\end{align}
\end{definition}

In a classical circuit model with bounded fan-in, a $d$ input cumulative OR circuit must have $\Omega(\log d)$ depth and $\Omega(d)$ gates to output the last bit which depends on all input bits. The following Lemma gives three constructions of cumulative OR: the first matches the lower bound on the depth, the second matches the lower bound on the number of gates, and the third comes close to matching both bounds at the same time. 

\begin{lemma}\label{lemma:cumulative_or_constructions}
There exists a $d$ input cumulative OR circuit with
\begin{enumerate}
\item depth $O(\log d)$ and with $O(d^2)$ gates, 
\item depth $\Theta(d)$ and with $O(d)$ gates,
\item depth $O((\log d)^2)$ and with $O(d\log d)$ gates. 
\end{enumerate}
\end{lemma}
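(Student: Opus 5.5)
We give the three constructions separately, in each case working with bounded fan-in OR gates and, as elsewhere in the paper, counting fan-out as a separate gate built from binary trees. Without loss of generality assume $d$ is a power of two; otherwise pad with zero inputs, which changes the bounds by at most a constant factor.

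\emph{Construction 1.} Compute each output $\nu_k$ independently: fan out each $\mu_i$ to the $d-i+1$ outputs that depend on it, using a binary tree of depth $O(\log d)$ and $O(d)$ gates. Then compute $\nu_k$ with its own balanced binary tree of OR gates on $k$ inputs, of depth $O(\log d)$ and with $k-1$ gates. The total gate count is $O(d^2)$ from the fan-outs and $\sum_k (k-1) = O(d^2)$ from the OR trees, and the depth is $O(\log d)$.

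\emph{Construction 2.} Use a chain: $\nu_1=\mu_1$ and $\nu_{k+1}=\nu_k\lor\mu_{k+1}$. Each $\nu_k$ is needed both as an output and as an input to the next gate, so it is fanned out once. This gives $O(d)$ gates and depth $\Theta(d)$, since the last output sits at the end of a path of length $d-1$.

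\emph{Construction 3.} Use divide and conquer. Split the inputs into a first half $\mu_1,\dots,\mu_{d/2}$ and a second half $\mu_{d/2+1},\dots,\mu_d$, and recursively build cumulative OR circuits on each half in parallel. The first half's outputs are already $\nu_1,\dots,\nu_{d/2}$. For the second half, take the last output of the first half, $\nu_{d/2}$, fan it out $d/2$ times with a binary tree of depth $O(\log d)$ and $O(d)$ gates, and OR it into each of the second half's outputs with one layer of $d/2$ gates. This gives the recurrences
\begin{align}
D(d) &\le D(d/2)+O(\log d)\\
G(d) &\le 2G(d/2)+O(d)
\end{align}
which solve to $D(d)=O((\log d)^2)$ and $G(d)=O(d\log d)$.

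The only step needing any care is the fan-out accounting in the third construction. Because each level of the recursion broadcasts a single bit to many gates, that broadcast contributes $O(\log d)$ depth rather than $O(1)$; this is exactly what produces the $(\log d)^2$ depth bound rather than $\log d$. The rest is routine solving of the recurrences.
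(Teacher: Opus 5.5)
Your proposal is correct and matches the paper's proof step for step: part 1 fans out each $\mu_i$ and uses independent OR trees, part 2 is a chain, and part 3 halves the input and broadcasts $\nu_{d/2}=\mu_1\lor\dots\lor\mu_{d/2}$ into the second half, giving the same recurrences ($f(d)=f(d/2)+\log(d/2+1)+1$, $g(d)=2g(d/2)+d$). Your padding to a power of two and your explicit remark that the broadcast costs $O(\log d)$ depth per level are small clarifications the paper leaves implicit.
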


\begin{proof}
Part 1: copy $d-i$ times bit $\mu_i$; using binary trees of fan-out gates, this takes depth $O(\log d)$ and has $O(d^2)$ gates. Then, use binary trees of OR gates to compute the bits $\nu_1,\dots,\nu_d$. This has depth $O(\log d)$ and has $O(d^2)$ gates. 

Part 2: copy bit $\mu_1$ once, compute $\mu_1\lor\mu_2$, copy this once, compute $(\mu_1\lor\mu_2)\lor\mu_3$, and so on. 

Part 3: apply a $d/2$ input cumulative OR to $\mu_1, \dots, \mu_{\frac{d}{2}}$ and to $\mu_{\frac{d}{2}+1},\dots, \mu_d$. Copy $d/2$ times the bit $\mu_1\lor\dots\lor\mu_{\frac{d}{2}}$, this adds $\log\left(\frac{d}{2}+1\right)$ to the depth and $d/2$ to the number of gates. Apply a layer of OR gates to the $d/2$ copies and the outputs of the second $d/2$ input cumulative OR; this adds 1 to the depth and $d/2$ to the number of gates. Then, the depth and the number of gates satisfy the recursions
\begin{align}
f(d) &= f\left(\frac{d}{2}\right) + \log\left(\frac{d}{2}+1\right)+1\\
g(d) &= 2 g\left(\frac{d}{2}\right) + d 
\end{align}
so the depth is $O((\log d)^2)$ and the number of gates is $O(d\log d)$. 
\end{proof}

Now, consider step 4 of the $i$-th layer of the parallel algorithm for the computation of a peeling in Lemma \ref{lemma:computation_of_peelings}. Consider this from the perspective of a node $a \in A$ of degree $d$. It has as input a bit $\chi$ indicating whether it is in $\Acp_{i-1}(S)$ and $d$ bits $\mu_1,\dots,\mu_d$ indicating whether each of its edges connects $\Acp_{i-1}(S)$ and $\Bu(\Acp_{i-1}(S))$. Node $a$ needs to determine if it is in $\Acp_i(S)$ and if necessary to select an edge to become a part of the peeling. The computational task of node $a$ can be formalized as follows:

\begin{definition}\label{def:edge_selection_circuit}
A degree $d$ edge selection circuit has $d+1$ input bits $\chi$, $\mu_1,\dots,\mu_d$ and transforms them to the $d+1$ output bits
\begin{align}
\chi' &= \chi \land (\lnot(\mu_1\lor\dots\lor\mu_d))\\
\nu_1 &= \mu_1\\
\nu_2 &= \mu_2\land(\lnot\mu_1)\\
\nu_3 &= \mu_3\land(\lnot(\mu_1\lor\mu_2))\\
&\dots \\
\nu_d &= \mu_d\land(\lnot(\mu_1\lor\dots\lor\mu_{d-1}))
\end{align}
\end{definition}

A degree $d$ edge selection circuit can be constructed from one $d$ input cumulative OR circuit with overhead of constant depth and linear number of gates. Combined with Lemma \ref{lemma:cumulative_or_constructions}, this gives three options for the computation performed by node $a$. For the proof of Lemma \ref{lemma:computation_of_peelings}, the one that minimizes the depth is used; optimizing the number of gates here is less important because the total number of gates of Find Rows and Columns and Decode is dominated by the computation of the corrections. 

\begin{lemma}\label{lemma:explicit_circuit}
There exists a degree $d$ edge selection circuit with
\begin{enumerate}
\item depth $O(\log d)$ and with $O(d^2)$ gates,
\item depth $\Theta(d)$ and with $O(d)$ gates,
\item depth $O((\log d)^2)$ and with $O(d \log d)$ gates. 
\end{enumerate}
\end{lemma}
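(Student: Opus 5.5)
The plan is to build the edge selection circuit from a single cumulative OR circuit, as the remark before the statement suggests, and then invoke the three constructions of Lemma \ref{lemma:cumulative_or_constructions}. Feed $\mu_1,\dots,\mu_d$ into a $d$ input cumulative OR circuit, producing $\nu'_k=\mu_1\lor\dots\lor\mu_k$ for $k=1,\dots,d$. Each of the bits $\mu_2,\dots,\mu_d$ is also needed once more outside the cumulative OR, so before entering the circuit copy each $\mu_k$ once with a fan-out gate. This adds depth $1$ and $d$ gates.

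From the outputs, read off the required bits with constant depth overhead:
\begin{align}
\nu_1 &= \mu_1, \\
\nu_k &= \mu_k \land \lnot \nu'_{k-1} \quad (k=2,\dots,d), \\
\chi' &= \chi \land \lnot \nu'_d .
\end{align}
The bit $\nu'_{k-1}$ is used only in $\nu_k$ for $k\le d$, and $\nu'_d$ is used only in $\chi'$, so no further fan-out is needed. This stage is one layer of NOT gates followed by one layer of AND gates, which is depth $2$ and $O(d)$ gates. Correctness then follows directly by comparing with Definition \ref{def:edge_selection_circuit}.

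The totals are the cumulative OR cost plus $O(1)$ depth and $O(d)$ gates. Part 1 of Lemma \ref{lemma:cumulative_or_constructions} gives depth $O(\log d)$ and $O(d^2)$ gates. Part 2 gives depth $\Theta(d)+O(1)=\Theta(d)$ and $O(d)$ gates. Part 3 gives depth $O((\log d)^2)$ and $O(d\log d)$ gates.

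For the $\Theta(d)$ claim, the upper bound comes from this construction. The lower bound is inherited as well: the chained construction of part 2 has depth $\Omega(d)$, and the edge selection circuit built on it contains that chain (for example, the path from $\mu_1$ to $\nu'_d$ to $\chi'$). So its depth is exactly $\Theta(d)$.

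I expect no real obstacle. The only points needing care are the bookkeeping of fan-out in a model that counts fan-out gates, and confirming that the extra layers keep the depth overhead constant and the gate overhead linear. That overhead is dominated in all three cases.
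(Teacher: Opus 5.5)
Your proposal is correct and follows essentially the same route as the paper. You copy each $\mu_k$ once, feed the copies into a single cumulative OR, and then add one layer of NOT gates and one layer of AND gates, so the three complexity bounds are inherited from Lemma \ref{lemma:cumulative_or_constructions}. Your explicit indexing ($\nu_k=\mu_k\land\lnot\nu'_{k-1}$, $\chi'=\chi\land\lnot\nu'_d$) spells out the step the paper describes only as ``combine the results with bits $\chi,\mu_1,\dots,\mu_d$''.
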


\begin{proof}
Copy each bit $\mu_1,\dots,\mu_d$ once; this has depth 1 and $d$ gates. Apply a cumulative OR to the $d$ copies. Apply a layer of NOT gates to the outputs of the cumulative OR; this has depth 1 and $d$ gates. Combine the results with bits $\chi$, $\mu_1,\dots,\mu_d$; this has depth 1 and has $d$ AND gates. The three options for the overall complexity come from the three options for the cumulative OR from Lemma \ref{lemma:cumulative_or_constructions}
\end{proof}

\section{Conclusion}\label{sec:conclusion}

This article presented the Find Rows and Columns and Decode algorithm for quantum expander codes and established results about its complexity and decoding radius. The algorithm runs in linear time or logarithmic depth and has decoding radius that is up to a $\frac{1-3\delta}{1-2\delta}$ fraction of the expansion radius of the graph.  The analysis applies to expander graphs with $\delta < 1/3$ and with any degree distribution. 

One interesting question is if it is possible to improve the number of errors that can be efficiently corrected. When $\delta<1/4$, the ReShape reduction combined with the classical decoders of \cite{chen2023improveddecoding} corrects in quadratic time more errors than Find Rows and Columns and Decode. The challenge is to obtain an improvement in linear time or for $\delta \geq 1/4$ or both. 

Another interesting question is if Find Rows and Columns and Decode can be adapted to correct random errors occurring at a constant rate. The corresponding result for Small Set Flip uses the locality of that algorithm. The execution of Find Rows and Columns and Decode remains bounded as shown in Proposition \ref{prop:execution_is_contained}, so perhaps similar techniques can be used. 

A third direction for future work is to adapt the Find Rows and Columns and Decode algorithm to other quantum LDPC families, for example the asymptotically good codes of \cite{panteleev2021asymptotically,leverrier2022quantum,lin2022good} or the finite blocklength codes of \cite{panteleev2021degeneratequantum,ostrev2024quantum,kasai2026breakingorthogonalitybarrierquantum}. Results about adversarial or random errors would both be interesting. In the case of classical LDPC codes against random errors, it is possible to obtain performance that is close to the finite blocklength bounds \cite[Section IV-D and Figure 12]{polyanskiy2010channel}. Can an adaptation of Find Rows and Columns and Decode combined with some quantum LDCP code achieve performance close to the finite blocklength hashing bound \cite{ashikhmin2014fidelitylowerbounds,ostrev2026canonicalformandfiniteblocklengthbounds}?

\section*{Acknowledgement} 

This research was funded by the Luxembourg National Research Fund (FNR), grant reference C24/IS/18981686/QLDPC.

\end{document}